\newcommand{\cdag}{DAG\xspace}
\newcommand{\cdags}{DAGs\xspace}
\newcommand{\colorful}{shaded\xspace}
\newcommand{\sink}{S}
\newcommand{\pZ}{\mathbb{Z}^+}
\DeclareMathOperator{\poly}{poly}
\setlist[enumerate]{topsep=4pt,itemsep=-3ex,parsep=3ex}
\newtheorem{theorem}{Theorem}
\newtheorem{corollary}[theorem]{Corollary}
\newtheorem{lemma}[theorem]{Lemma}
\newtheorem{definition}[theorem]{Definition}
\title{\vspace{20pt} Red-Blue Pebbling with Multiple Processors: \\ \vspace{10pt} Time, Communication and Memory Trade-offs \vspace{70pt}}
\author{\large Toni B\"ohnlein \footnotesize \vspace{2.5pt} \\ \small Computing Systems Lab, \\ \small Huawei Zurich Research Center \vspace{1pt} \\ \footnotesize \texttt{toni.boehnlein@huawei.com}
\and \large P\'al Andr\'as Papp \footnotesize \vspace{2.5pt} \\ \small Computing Systems Lab, \\ \small Huawei Zurich Research Center \vspace{1pt} \\ \footnotesize \texttt{pal.andras.papp@huawei.com}
\and \large Albert-Jan N. Yzelman \footnotesize \vspace{2.5pt} \\ \small Computing Systems Lab, \\ \small Huawei Zurich Research Center \vspace{1pt} \\ \footnotesize \texttt{albertjan.yzelman@huawei.com}}
\date{}
\begin{document}

\begin{titlingpage}
\maketitle
\vspace{90pt}
\begin{abstract}
The well-studied red-blue pebble game models the execution of an arbitrary computational DAG by a single processor over a two-level memory hierarchy. We present a natural generalization to a multiprocessor setting
where each processor has its own limited fast memory, and all processors share unlimited slow memory. 
To our knowledge, this is the first thorough study that combines pebbling and DAG scheduling problems, capturing the computation of general workloads on multiple processors with memory constraints and communication costs. 
Our pebbling model enables us to analyze trade-offs between workload balancing, communication and memory limitations, and it captures real-world factors such as superlinear speedups due to parallelization.

Our results include upper and lower bounds on the pebbling cost, an analysis of a greedy pebbling strategy, and an extension of NP-hardness results for specific DAG classes from simpler models. 
For our main technical contribution, we show two inapproximability results that already hold for the long-standing problem of standard red-blue pebbling: (i) the optimal I/O cost cannot be approximated to any finite factor, and (ii) the optimal total cost (I/O+computation) can only be approximated to a limited constant factor, i.e., it does not allow for a polynomial-time approximation scheme. These results also carry over naturally to our multiprocessor pebbling model.
\end{abstract}
\end{titlingpage}
\setcounter{page}{2}

\pagenumbering{arabic}

\newpage

\section{Introduction}

The computational requirements of modern applications, ranging from scientific simulations to artificial intelligence, necessitate parallel data processing. 
However, developing parallel algorithms that efficiently divide (sub)tasks, manage memory, and organize communication between processors presents significant difficulties to computer scientists.
One particular bottleneck for the performance of parallel computations is due to data locality, i.e., the memory of modern hardware features a hierarchical structure, and to process data, it has to be stored in the lowest layer.
Since the layers' memory is limited in size, data movements between them (called I/O operations) are necessary and impact the execution time of computations significantly.
The phenomenon becomes especially noticeable when dealing with tasks involving basic operations on large amounts of input data, e.g., the training of neural networks. 

In this paper, we introduce a model offering improved insights into the challenges regarding 
I/O costs and limited memory in the context of parallel computing.
Our focus is directed toward the efficient execution of a specific computation by several processors (rather than the design of a parallel algorithm tailored for a particular problem).
We consider 
directed acyclic graphs (\cdags) as the model for a computation.
The nodes correspond to single operations, while the directed edges express 
that the output data of a node is required as input for another node, enforcing
precedence constraints on their order of execution.
We are concerned with devising effective and efficient schedules for a given \cdag and number of processors with limited working memory.

Hong and Kung~\cite{RBpebbling1} introduced the red-blue pebble game to study the I/O complexity of computations when executed by a \emph{single} processor with a two-level memory hierarchy. The processor has \emph{fast memory} (working memory) of limited capacity. Computing a node (and storing its output data in fast memory) requires that its predecessors' output data is stored in fast memory. To free up space in fast memory, I/O operations can transfer data to (and back from) \emph{slow memory}, which has unlimited capacity.

In this model, available data is indicated by placing a \emph{pebble} on the corresponding node.
Data stored in fast memory is represented by a red pebble, and data in slow memory by a blue pebble.
The game starts with an empty \cdag. 
A red pebble can be placed on a node (i.e., computing the node) if all its predecessors have red pebbles on them. 
A red pebble can be placed on a node if it already has a blue pebble, and vice versa (I/O operations). 
Any pebble can be removed for free. 
The goal is to place pebbles on the sinks (outputs of the computation) while minimizing the number of I/O operations.
The number of red pebbles may not exceed a given fast memory capacity at any time.

Red-blue pebbling studies trade-offs between I/O costs and memory size. In our paper, we extend this model to a setting where several processors compute in parallel, and communication is organized via shared memory; this allows us to study trade-offs between time, communication and memory size.

To familiarize ourselves with pebbling, consider the example \cdag depicted in Figure~\ref{fig:smpl_example}. 
First, assume that we pebble it using only $3$ red pebbles on a single processor. 
We may start by placing red pebbles on nodes $v_1$ and $v_2$, which then permits us to place a red pebble on node $v_3$. 
The red pebbles on $v_1$ and $v_2$ are no longer needed and can be removed. 
Observe that to pebble node $v_4$, we again need all $3$ red pebbles.
Since $v_3$ has an outgoing edge to node $v_5$, which we did not pebble yet, we place a blue pebble on $v_3$ (casting our first I/O operation), and then remove the red pebble from $v_3$. 
Next, $v_4$ (and its predecessors) can be pebbled analogously to $v_3$, leading to a configuration where we have a red pebble on $v_4$ and a blue pebble on $v_3$. 
We use another I/O operation to put a red pebble back on $v_3$, since it has a blue pebble.
Now, we can place a red pebble on node $v_5$, and then remove the red pebbles from $v_3$ and $v_4$.  
Note that we pebbled the sub-graph in the dashed box with $2$ I/O operations.
Another I/O operation places a blue pebble on $v_5$.
We pebble the subtree rooted $v_6$ in an identical way to the one rooted at $v_5$.
To finish, we bring a red pebble back to $v_5$ with an I/O step, and then place a red pebble on $v_7$.

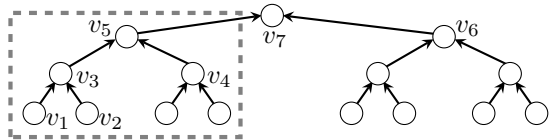
\begin{figure}[t]
    \centering
    \vspace{-7pt}
    \begin{tikzpicture}

    \begin{scope}[thick, arrows=-stealth]
    \draw (1pt,5pt) -- (8pt,15pt);
    \draw (19pt,5pt) -- (12pt,15pt);

    \draw (51pt,5pt) -- (58pt,15pt);
    \draw (69pt,5pt) -- (62pt,15pt);

    \draw (121pt,5pt) -- (128pt,15pt);
    \draw (139pt,5pt) -- (132pt,15pt);

    \draw (171pt,5pt) -- (178pt,15pt);
    \draw (189pt,5pt) -- (182pt,15pt);

    \draw (12pt,20pt) -- (30.75pt,31pt);
    \draw (62pt,20pt) -- (39pt,31pt);

    \draw (132pt,20pt) -- (151pt,31pt);
    \draw (182pt,20pt) -- (159pt,31pt);

    \draw (37pt,33pt) -- (86pt,40pt);
    \draw (152pt,33pt) -- (94pt,40pt);

    \end{scope}

    \draw[gray, dashed, ultra thick] (-9pt,-6pt) rectangle (78pt,41pt);

    \draw[black, fill=white] (0pt,3pt) circle (.9ex);
    \node[anchor=center] at (9pt,0pt) {\small $v_1$};
    \draw[black, fill=white] (20pt,3pt) circle (.9ex);
    \node[anchor=center] at (29pt,0pt) {\small $v_2$};
    \draw[black, fill=white] (10pt,18pt) circle (.9ex);
\node[anchor=center] at (20.5pt,16pt) {\small $v_3$};
    
    \draw[black, fill=white] (50pt,3pt) circle (.9ex);
    \draw[black, fill=white] (70pt,3pt) circle (.9ex);
    \draw[black, fill=white] (60pt,18pt) circle (.9ex);
    \node[anchor=center] at (70pt,16pt) {\small $v_4$};

    \draw[black, fill=white] (35pt,32pt) circle (.9ex);
    \node[anchor=center] at (25pt,35pt) {\small $v_5$};

    \draw[black, fill=white] (120pt,3pt) circle (.9ex);
    \draw[black, fill=white] (140pt,3pt) circle (.9ex);
    \draw[black, fill=white] (130pt,18pt) circle (.9ex);

    \draw[black, fill=white] (170pt,3pt) circle (.9ex);
    \draw[black, fill=white] (190pt,3pt) circle (.9ex);
    \draw[black, fill=white] (180pt,18pt) circle (.9ex);

     \draw[black, fill=white] (155pt,32pt) circle (.9ex);
     \node[anchor=center] at (164pt,35pt) {\small $v_6$};

     \draw[black, fill=white] (90pt,40pt) circle (.9ex);
     \node[anchor=center] at (91pt,31pt) {\small $v_7$};   
\end{tikzpicture}
    \vspace{-3pt}
    \caption{A simple example \cdag for pebbling.}
    \label{fig:smpl_example}
    \vspace{-5pt}
\end{figure}

In contrast, consider pebbling the \cdag with multiple processors for an intuitive description of our model. 
We use two processors $p_1$ and $p_2$ each having their own set of $3$ red pebbles. To place a red pebble on a node, its predecessors require red pebbles of the \emph{same} processor. We pebble the left and the right subtree of node $v_7$ with red pebbles of $p_1$ and $p_2$, respectively, using the strategy described above for a single processor. All steps are executed in parallel, reducing both the compute and I/O operations by a factor $2$. The result is a configuration with a red pebble of $p_1$ and $p_2$ on nodes $v_5$ and $v_6$, respectively.

Recall that with one processor, we needed two I/O operations to convert the red pebble on $v_5$ to blue and then back to red.
In contrast, with two processors we need I/O operations to ensure that we have red pebbles of the same processor on $v_5$ and $v_6$.
The required communication between the processors is done via the shared memory (blue pebbles):
We replace the red pebble of $p_1$ on $v_5$ by a blue pebble, and then the blue pebble by a red pebble of $p_2$. To finish the computation, we place a red pebble of $p_2$ on $v_7$.

\vspace{7pt}
\noindent
\textbf{Our Contribution. }
We present the multiprocessor red-blue pebbling game (MPP), which is a natural generalization of red-blue pebbling. MPP essentially combines the areas of DAG pebbling problems (single-processor computation with limited memory) and DAG scheduling problems (multiprocessor computation, but with unlimited memory). The result is a simple yet expressive model that captures computational costs, I/O costs incurred by memory limitations, and inter-processor communication costs. 
Unlike earlier attempts to generalize red-blue pebbling, MPP naturally combines these aspects into a single cost function, allowing for a convenient study of the trade-offs between these factors. 
These trade-offs have been analyzed in detail for many concrete computations before (e.g., matrix multiplication~\cite{RBpebbling6, RBpebbling9,multiBSP2}), but not for arbitrary computational \cdags.

We first discuss the basic properties of MPP, including simple bounds on the pebbling cost, NP-hardness for specific DAG classes, and the analysis of a simple greedy approach. We also analyze how adding more processors (with the same, or reduced amount of fast memory) can affect the pebbling problem, i.e., how the optimal pebbling cost can change in the same DAG. 

Then as our main technical result, we present two hardness results for approximating the optimal strategy in red-blue pebbling problems. Both of these proofs already apply to single-processor red-blue pebbling, thus presenting novel results for a problem that has been studied for several decades. Firstly, we show that the minimal number of I/O steps in a pebbling cannot be approximated in polynomial time to any finite multiplicative factor, or any additive $n^{1-\varepsilon}$ term (for any $\varepsilon>0$), unless P=NP. Secondly, we show that the minimal total cost (including both I/O and computation steps) can only be approximated to a limited factor, i.e., there is a $\delta>0$ such that the optimum cannot be approximated to a $(1+\delta)$ factor in polynomial time, unless P=NP; in other words, the problem is APX-hard, allowing no PTAS. Both of these results carry over naturally to our multiprocessor pebbling model.

\section{Related Work}
Pebble games are used to capture various aspects of computing. For instance, the standard (black) pebble game \cite{cook1973observation, paterson1970comparative} models general time-memory trade-offs, with results on achievable trade-offs \cite{hopcroft1977time, lengauer1982asymptotically, reischuk1980improved} and the complexity of computing pebbling strategies \cite{gilbert1979pebbling, demaine2017inapproximability, blackNPhard, austrin2012inapproximability}. 

Hong and Kung~\cite{RBpebbling1} introduce the red-blue pebble game to study I/O complexity, and derive lower bounds based on a \cdag partition technique. 
It is applied to specific computations in \cite{elango2015characterizing, ranjan2012upper}. 
Further bounds on I/O costs were derived with different methods in \cite{jain2020spectral, scott2015matrix, RBpebbling5}.
Other works~\cite{savage1995extending,savage2008unified,elango2014characterizing} extend the red-blue pebble game to 
single and multiple processors over memory hierarchies. 
Kwasniewski et al.~\cite{RBpebbling9, RBpebbling6} refine the technique of \cite{RBpebbling1} and derive improved I/O lower bounds for special \cdags. The bounds are extended to a multiprocessor settings where the workload is perfectly balanced. 
In contrast to MPP, 
these works do not consider 
trade-offs between computation and I/O.

As for the complexity of computing optimal pebbling strategies, Demaine and Liu~\cite{RBpebbling2} show that standard red-blue pebbling is PSPACE-complete, and propose variations that are shown NP-complete. 
NP-hardness for red-blue pebbling with computation costs is also shown in~\cite{RBpebbling8}. The work of~\cite{RBpebbling3} shows an inapproximability to a $(2-\varepsilon)$ factor, assuming the unique games conjecture. The work of~\cite{RBpebbling7} presents a bi-criteria approximation for the optimal I/O cost to a $\poly \log(n)$ factor, provided that it is allowed to violate the memory bound by a $\poly \log(n)$ factor.

Naturally, there are also several more realistic models that capture both parallelization and memory limitations, but these exhibit key differences from MPP. The work of \cite{blelloch2008provably, chowdhury2008cache} studies the cost components (computation, different kinds of I/O) separately, mostly assuming that I/O steps cannot be parallelized. 
Other works assume simpler models regarding computation costs: they are not considered at all in \cite{bilardi2018lower}, and are handled with artificial balance constraints in \cite{solomonik2017trade, RBpebbling5}.
Moreover, these models are primarily used for studying specific computations (e.g., Matrix multiplication, FFT) rather than arbitrary \cdags. 

Valiant~\cite{BSPintro} introduces the bulk-synchronous parallel (BSP) computing model with explicit synchronisation, which was extended later~\cite{tiskin1998bulk,mccoll24memory} to include data locality and communication via shared memory. The model was further generalized to capture multi-processor/core architectures with multi-level memory~\cite{multiBSP2}. These models describe practical compute systems more closely, and also match our MPP model. There are also theoretical studies on \cdag scheduling in BSP \cite{BSPscheduling}, but without memory restrictions.

\section{Model Definition} \label{sec:model}

Our model for a computation is a directed acyclic graph (\cdag) $G = (V,E)$.
We use the notation $n = |V|$ for the number of nodes.
The in/out-degree of a node is the number of its incoming/outgoing edges.
We call nodes with in/out-degree $0$ the source/sink nodes.
Let $\Delta_{in}$ denote the highest in-degree in our \cdag, i.e.,\ the highest number of inputs to an operation in our computation; several previous works on pebbling assume that $\Delta_{in}$ is small, e.g., a constant \cite{RBpebbling2, RBpebbling3}. Moreover, we denote the set of positive integers by $\pZ$, and use the notation $[a]=\{1, ..., a\}$, for $a \in \pZ$. Our goal is to execute a computational \cdag on $k$ processors, each having limited fast memory of size $r$, for parameters $k,r \in \pZ$.

\subsection{Single Processor Red-Blue Pebbling}

Before introducing our parallel model, we briefly recap the single-processor red-blue pebble game (SPP) \cite{RBpebbling1}, where we have a single processor with fast memory of size $r$ and unlimited slow memory. Nodes with red pebbles on them (denoted $R \subseteq V$) and nodes with blue pebbles on them (denoted $B \subseteq V$) correspond to the output data that is currently saved in fast and slow memory, respectively. 
In order to pebble a \cdag, the following rules can be applied:
\begin{enumerate}[label=(R\arabic*-S), leftmargin=3.5em, itemsep=-12pt]
	\item Place a red pebble on a node that has a blue pebble, \label{rule:single-redbyblue}
	\item Place a blue pebble on a node that has a red pebble, \label{rule:single-bluebyred}
	\item Place a red pebble on a node if all its predecessors have red pebbles on them, \label{rule:single-compute}
	\item Remove a (red or blue) pebble. \label{rule:single-remove}
\end{enumerate}

The game starts with an empty \cdag; rule \ref{rule:single-compute} allows us to place red pebbles on source nodes. A pebbling strategy is a sequence of the rules which places pebbles on the sink nodes in the end, while the number of red pebbles does not exceed $r$ at any step.
The I/O costs are the total number I/O operations, i.e., applications of rules \ref{rule:single-redbyblue} or \ref{rule:single-bluebyred}.
The goal is to pebble a \cdag with minimum I/O costs. Most works on SPP analyze lower bounds on I/O cost for a given \cdag and memory limit $r$. 

Additional variations of SPP have also been proposed, aiming to simplify SPP, make it more realistic, or resolve the fact that this base SPP variant above is not even in NP, but rather PSPACE-complete. This is because the optimal pebbling sequence can be super-polynomially long in extreme cases, since repeatedly deleting and recomputing the same node incurs no cost. Notable SPP variants include:
\vspace{-5pt}
\begin{itemize}[leftmargin=1.5em, itemsep=-3pt]
    \item \emph{One-shot SPP:} (R3-S) can be applied only once for each node~\cite{RBpebbling7},
    \item \emph{No-deletion SPP:} (R4-S) is not allowed~\cite{RBpebbling2},
    \item \emph{SPP with computation costs:} (R3-S) also incurs a small cost of $\varepsilon$~\cite{RBpebbling3, RBpebbling8}.
\end{itemize}
\vspace{-4pt}
The first two variants feature somewhat artificial restrictions to prohibit the deletion and then recomputation of a node entirely.
On the other hand, SPP with computation costs is more realistic, discouraging this in a natural way by ensuring that computation steps also incur some cost, as in practice. This last SPP variant is also the closest one our multiprocessor pebbling model.

We note that besides this, there are also smaller aspects of the definition that vary over different previous works: for instance, some assume that source nodes begin with a blue pebble, or that sink nodes specifically require a blue pebble in the end. For most proof constructions, these model variants can be reduced to each other with some simple tricks; we refer to \cite{RBpebbling3} for a summary. Similarly, some works assume in rule \ref{rule:single-redbyblue} that the new red pebble replaces the blue pebble, and in rule \ref{rule:single-bluebyred} vice versa~\cite{RBpebbling3, RBpebbling5}; our SPP-related claims also carry over easily to this variant.

\subsection{Multiple Processor Red-Blue Pebbling}

We introduce the multiprocessor red-blue pebble game (MPP) which extends the red-blue pebble game to a setting where $k$ processors compute a \cdag $G = (V,E)$ in parallel. 
Data stored in a processor's fast memory is represented by a red pebble of its \emph{shade}, i.e., there are red pebbles of $k$ different shades.
The number of red pebbles of each shade is limited by $r$.
The processors share unlimited slow memory that is used to
(i) store data that cannot be kept in fast memory, and 
(ii) to communicate data between processors.
Data available in the slow memory is represented by blue pebbles. Note that we assume that several pebbles of different shade/color can be placed on a node at the same time.

In our parallel version of the game, the rules allow multiple processors to simultaneously either transfer data between fast and slow memory, or compute nodes. 
More formally, define set $R^j \subseteq V$ as the set of \emph{red pebbles} of \emph{shade $j$}, for $j \in [k]$, and
set $B \subseteq V$ as the set of \emph{blue pebbles}.
For $m \in \pZ$ such that $m \leq k$, we call an injective function $f_m: [m] \to [k]$ a \emph{\colorful selection}, and
extend the transition rules as follows: 
\begin{enumerate}[label=(R\arabic*-M), align=right, leftmargin=3.5em, itemsep=-10pt]
\item \label{rule:multi-redbyblue} For a \colorful selection $f_m$ and vertices $v_1,v_2, \ldots, v_m$ such that $v_i \in R^{f_{m}(i)}$, add $v_i$ to $B$, for $i \in [m]$,  
\item For a \colorful selection $f_{m}$ and vertices $\{v_1,v_2, \ldots, v_m\} \subseteq B$, add $v_i$ to $R^{f_{m}(i)}$, for $i \in [m]$,
 \label{rule:multi-bluebyred}
\item For a \colorful selection $f_{m}$ and vertices $v_1,v_2, \ldots, v_m$ such that for all predecessors $u$ of $v_i$ we have $u \in R^{f_{m}(i)}$, for $i\in [m]$, add $v_i$ to $R^{f_{m}(i)}$, for $i\in [m]$, 
\label{rule:multi-compute}
\item Remove a (red or blue) pebble. \label{rule:multi-remove}
\end{enumerate}
A parallel version of \ref{rule:multi-remove} could also be defined; we use this simpler version since this step incurs no cost in our model. 
Note that any single rule can place at most $k$ pebbles, and that processors can be idle.

A \emph{configuration} is a $(k\!+\!1)$-tuple $C_i = (R^1_i,R^2_i, \ldots, R^k_i, B_i) \subseteq V^{k+1}$
defining red and blue pebbles placed on the \cdag.
A configuration $C_i$ is \emph{valid} if $|R^j_i| \leq r$, for $j \in [k]$, i.e., it respects the fast memory size. 
An initial configuration $C_0$ sets $R^j_0 = \emptyset$, for $j \in [k]$, and $B_0 = \emptyset$.
Let $\sink \subseteq V$ be the sink nodes.
We say configuration $C_i$ is terminal, if $\sink \subseteq B_i \bigcup_{j=1}^k R_i^j$ holds.
A \emph{pebbling strategy} $(C_0, C_1, \ldots, C_T)$, for $T \in \pZ$, is a sequence of valid configurations such that (i) $C_{i}$ is obtained by applying a transition rule to $C_{i-1}$, for $i \in [T]$, and (ii) $C_0$ and $C_T$ are initial and terminal, respectively. 

The pebbling strategy is represented equivalently by a sequence of transition rules
$(t_1,t_2, \ldots, t_{T})$, where $t_i \in \{$\ref{rule:multi-redbyblue},  \ref{rule:multi-bluebyred}, \ref{rule:multi-compute}, \ref{rule:multi-remove}$\}$ such that $C_i$ is the result of applying $t_i$ to $C_{i-1}$, for $i \in [T]$. 
To assign costs to a pebbling strategy, we assign costs to each rule.
Let $g \in \pZ$ be a parameter specifying the cost of an I/O step. We define the cost function for the transition rules as follows:
\begin{itemize}[itemsep=-3pt, topsep=1pt]
    \item $c(t_i) = g \text{ if $t_i =$ \ref{rule:multi-redbyblue} or \ref{rule:multi-bluebyred}}$,
   \item $c(t_i) = 1 \text{ if $t_i =$ \ref{rule:multi-compute}}$,
    \item $c(t_i) = 0 \text{ if $t_i =$ \ref{rule:multi-remove}}$.
\end{itemize}

Then, the cost of a strategy $C(t_1,t_2, \ldots, t_T) = \sum_{i=1}^{T} c(t_i)$ is the total costs of its rules.
Given a \cdag $G$ and parameters $k,r,g \in \pZ$, the
goal of MPP is to find a minimum cost pebbling strategy.
We denote the cost of the optimum pebbling strategy by \texttt{OPT}.

\subsection{Model Discussion}

We first note that as in SPP, the rules of MPP allow for \emph{recomputation}. Indeed, when I/O is expensive, it can sometimes be beneficial to compute the same node more than once; see Section~\ref{sec:basics} for an example.

Furthermore, the transition rules assume that the processors compute and access memory \emph{synchronously}.
This simplifies the analysis greatly, enabling us to formulate the cost function as a linear term of I/O and compute costs.
However, in some practical settings, one processor may be computing while another one is accessing memory.
This could be modelled by allowing each processor in a step to execute one of the SPP rules independently;
however, assigning costs to a pebbling strategy then becomes an intricate matter.
We expect that most of the general reasoning about pebbling strategies also carries over to such an asynchronous setting; it has been shown that the improvements from a non-synchronous schedule are limited to a factor $2$ \cite{BSPscheduling}. Synchronization of communication is also natural in some hardware architectures, and several parallel programming models, like BSP~\cite{BSPintro}, also feature synchronization steps.

While parallel computing models typically study trade-offs between computation time and communication, SPP studies trade-offs between I/O costs and memory size. Combining these in MPP allows us to study the three-fold trade-off between computation time, communication, and memory size. The I/O steps in MPP can happen either due to (i) communicating data between processors, or (ii) saving data to slow memory to free up space in fast memory.

We note that several models in previous works are closely related to MPP; we discuss these in detail in Appendix~\ref{app:models}. For instance, \cite{RBpebbling5} also outlines a generalization of SPP to multiple processors; however, in contrast to MPP, their work captures computation costs via an artificial balance constraint, which imposes heavy limitations on the model (see Appendix \ref{sec:constrained_MPP}). We also note that with $r _{\!}= _{\!} \infty$ and minor adjustments, MPP also becomes equivalent to DAG scheduling in the BSP model \cite{BSPscheduling}. This shows that with small variations, MPP is indeed a generalization of both SPP and DAG scheduling problems.

\section{Fundamental Properties of MPP}\label{sec:basics}

\subparagraph*{Straightforward bounds.} Similarly to SPP, if $r \leq \Delta_{in}$, then there can be no valid pebbling strategy, since a node of in-degree $\Delta_{in}$ requires $(\Delta_{in}+1)$ red pebbles of the same shade to be computed: $\Delta_{in}$ on its in-neighbors, and one more on the node itself. Thus we always implicitly assume $r \geq \Delta_{in}+1$.

However, if $r \geq \Delta_{in}+1$, there is indeed always a valid pebbling. Consider the nodes in any topological order. We can always select any processor $p$ to compute the next node $v$, load all the (already computed) in-neighbors of $v$ from slow memory to $p$ (at a cost of at most $\Delta_{in\!} \cdot _{\!} g$), compute the node on $p$ (at a cost of $1$), save the value of $v$ to slow memory (at a cost of $g$), and then remove the red pebbles from $v$ and its in-neighbors (for free). This strategy incurs a cost of at most $(\Delta_{in}+1) \cdot g + 1$ for each node.

On the other hand, since each \ref{rule:multi-compute} step can compute at most $k$ nodes, the number of compute steps is at least $\frac{n}{k}$. This shows the following simple bounds for the optimal cost.
\begin{lemma} \label{lem:trivial_bounds}
For any instance of MPP, we have $\frac{n}{k} \leq \texttt{OPT} \leq (g \cdot (\Delta_{in} + 1) + 1) \cdot n$.
\end{lemma}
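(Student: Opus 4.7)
The plan is to verify the two bounds separately; both arguments are essentially sketched in the paragraphs preceding the lemma, so the task is mainly one of careful formalization.

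For the lower bound $\frac{n}{k} \leq \texttt{OPT}$, I would first observe that the very first pebble ever placed on a node $v$ must come from an application of \ref{rule:multi-compute}, since \ref{rule:multi-redbyblue} and \ref{rule:multi-bluebyred} each require a preexisting pebble on $v$. Consequently, every node that is reachable by the pebbling process undergoes at least one compute step. Because a terminal configuration requires a pebble on every sink, and because computing a node requires red pebbles on all its predecessors in the same shade, an easy induction (downward in topological order from the sinks) shows that every ancestor of every sink must be computed at least once. Assuming, as is standard for pebbling formulations, that all $n$ nodes are ancestors of some sink, the total number of applications of \ref{rule:multi-compute} is at least $\lceil n/k \rceil$, since each such step places red pebbles on at most $k$ distinct nodes (one per shade). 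Each \ref{rule:multi-compute} step costs $1$, so $\texttt{OPT} \geq n/k$.

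For the upper bound, I would exhibit an explicit strategy using only processor $p_1$ (the other processors stay idle throughout) and processing the nodes in any fixed topological order. For each node $v$ in turn, the routine is: (i) for each of the at most $\Delta_{in}$ predecessors $u$ of $v$, which carries a blue pebble from when $v$ was processed in a previous iteration... wait, when \emph{$u$} was processed in its earlier iteration, apply \ref{rule:multi-bluebyred} to put a shade-$1$ red pebble on $u$, at a cost of at most $\Delta_{in} \cdot g$ in total; (ii) apply \ref{rule:multi-compute} to place a shade-$1$ red pebble on $v$, at cost $1$; (iii) apply \ref{rule:multi-redbyblue} to place a blue pebble on $v$, at cost $g$; (iv) apply \ref{rule:multi-remove} to clear all shade-$1$ red pebbles on $v$ and its predecessors, at cost $0$. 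The per-node cost is therefore at most $(\Delta_{in}+1) \cdot g + 1$, and summing over the $n$ nodes gives the claimed bound.

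The only thing that genuinely needs to be checked is the validity of the constructed strategy, i.e., that $|R_i^1| \leq r$ at every intermediate configuration. Since only shade $1$ is ever used and the iteration ends with $R^1$ empty again, the peak value of $|R^1|$ within one iteration is at most $\Delta_{in}+1$ (the predecessors of $v$ plus $v$ itself). The standing assumption $r \geq \Delta_{in}+1$, noted at the start of this section, is precisely what is needed to ensure validity. There is no substantive obstacle in the argument; the lemma is really a sanity check on the model, bracketing $\texttt{OPT}$ by the trivial work-per-processor lower bound and the naive node-by-node upper bound.
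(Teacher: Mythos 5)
Your proposal is correct and follows essentially the same argument as the paper: the lower bound comes from each \ref{rule:multi-compute} step computing at most $k$ nodes, and the upper bound from a single-processor, topological-order strategy that loads predecessors, computes, saves, and deletes, at a per-node cost of $(\Delta_{in}+1)\cdot g + 1$. Your additional checks (that every node must in fact be computed, assuming all nodes are ancestors of sinks, and that the strategy respects $r \geq \Delta_{in}+1$) are reasonable formalizations of points the paper leaves implicit.
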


\subparagraph{A simple example gadget.} 

We briefly discuss a simple example, the \emph{zipper gadget}, which highlights many vital aspects of pebbling problems, and different variants of it are used throughout our proofs. We only describe the gadget and these properties here on a high level, leaving the details to the corresponding proofs in the appendices. The gadget, shown in Figure \ref{fig:gadget}, consists of two input groups $S_1$, $S_2$ of $d$ nodes each, and a \emph{main chain} of $n_0$ nodes, with the input groups having edges to the main chain nodes in an alternating fashion. We typically have $n_0=n-O(1)$ to make $S_1$ and $S_2$ asymptotically irrelevant.
\begin{itemize}[itemsep=-2pt, topsep=4pt, leftmargin=2em]
 \item The gadget was used in \cite{RBpebbling3} to analyze trade-offs in SPP. E.g.\ for $r\!=\!2d\!+\!2$, we can always keep red pebbles on both $S_1$ and $S_2$, and use the last $2$ red pebbles to compute the main chain without any I/O steps. However, for $r\!=\!d\!+\!2$, we still need $2$ red pebbles to compute along the main chain, so we repeatedly need to move the other $d$ red pebbles between $S_1$ and $S_2$, yielding a much higher cost.
 \item This example with $r\!=\!d\!+\!2$ also highlights the role of recomputation: we repeatedly need to move $d$ red pebbles between $S_1$ and $S_2$, and doing this with I/O steps (repeatedly loading from slow memory) incurs a cost of $d \cdot g$ for each main chain node. However, since the nodes in $S_1$ and $S_2$ are sources, we can also simply compute them again with \ref{rule:multi-compute} steps anytime, at a much lower cost of $d$ per main chain node. To ensure that such a recomputation is always suboptimal, we can also attach a chain of length $2g$ in front of each node $u_i$: then recomputing $u_i$ from a source node requires $2g+1$ compute steps, whereas saving $u_i$ to slow memory and loading it back later costs at most $2g$.
 \item Considering the gadget in MPP, if $k\!=\!1$ and $r\!=\! d\! +\! 2$, we again need to keep moving the $d$ red pebbles between $S_1$ and $S_2$, which incurs a cost of $(d _{\!} \cdot _{\!} g + 1)$ per main chain node. However, with $k=2$ processors and $r\!=\! d\! + \!2$, we can keep $S_1$ and $S_2$ in the fast memory of different processors, compute the main chain alternatingly, and only communicate these main chain nodes. This incurs a cost of $(2 _{\!} \cdot _{\!} g + 1)$ per main chain node, thus describing a superlinear speedup for larger $d$ values.
\end{itemize}

\begin{figure}
    \centering
    \vspace{-4pt}
    \resizebox{0.4\textwidth}{!}{
    \begin{tikzpicture}

    \begin{scope}[thick, arrows=-stealth]
    \draw (50pt,5pt) -- (72pt,-3pt);
    \draw (75pt,-5pt) -- (97pt,3pt);
    \draw (100pt,5pt) -- (122pt,-3pt);
    \draw (125pt,-5pt) -- (147pt,3pt);
    \draw (150pt,5pt) -- (172pt,-3pt);
    \draw (40pt,27pt) -- (49pt,9pt);
    \draw (90pt,27pt) -- (99pt,9pt);
    \draw (140pt,27pt) -- (149pt,9pt);
    \draw (65pt,-27pt) -- (74pt,-9pt);
    \draw (115pt,-27pt) -- (124pt,-9pt);
    \draw (165pt,-27pt) -- (174pt,-9pt);
    \end{scope}
    \begin{scope}[thick]
    \draw (175pt,-5pt) -- (188pt,0pt);
    \draw (20pt,10pt) -- (35pt,27pt);
    \draw (20pt,20pt) -- (35pt,27pt);
    \draw (20pt,45pt) -- (35pt,27pt);
    \draw (35pt,27pt) -- (188pt,27pt);
    \draw (20pt,-10pt) -- (35pt,-27pt);
    \draw (20pt,-35pt) -- (35pt,-27pt);
    \draw (20pt,-45pt) -- (35pt,-27pt);
    \draw (35pt,-27pt) -- (188pt,-27pt);
    \end{scope}

    \draw[black, fill=white] (50pt,5pt) circle (0.9ex);
    \draw[black, fill=white] (75pt,-5pt) circle (0.9ex);
    \draw[black, fill=white] (100pt,5pt) circle (0.9ex);
    \draw[black, fill=white] (125pt,-5pt) circle (0.9ex);
    \draw[black, fill=white] (150pt,5pt) circle (0.9ex);
    \draw[black, fill=white] (175pt,-5pt) circle (0.9ex);

    \draw[black, fill=white] (20pt,10pt) circle (0.9ex);
    \draw[black, fill=white] (20pt,20pt) circle (0.9ex);
    \node[anchor=center] at (20pt,30pt) {$...$};
    \draw[black, fill=white] (20pt,45pt) circle (0.9ex);

    \draw[black, fill=white] (20pt,-45pt) circle (0.9ex);
    \draw[black, fill=white] (20pt,-35pt) circle (0.9ex);
    \node[anchor=center] at (20pt,-25pt) {$...$};
    \draw[black, fill=white] (20pt,-10pt) circle (0.9ex);

    \node[anchor=center] at (53pt,-4pt) {$v_1$};
    \node[anchor=center] at (78pt,3.5pt) {$v_2$};
    \node[anchor=center] at (103pt,-4pt) {$v_3$};
    \node[anchor=center] at (128pt,3.5pt) {$v_4$};
    \node[anchor=center] at (153pt,-4pt) {$v_5$};
    \node[anchor=center] at (178pt,3.5pt) {$v_6$};

    \node[anchor=center] at (11pt,-51pt) {$u_1$};
    \node[anchor=center] at (8pt,-37pt) {$u_2$};
    \node[anchor=center] at (8pt,-12pt) {$u_d$};

    \node[anchor=center] at (4pt,7pt) {$u_{d_{\!}+_{\!}1}$};
    \node[anchor=center] at (4pt,17pt) {$u_{d_{\!}+_{\!}2}$};
    \node[anchor=center] at (6pt,42pt) {$u_{2d}$};
    
    \begin{scope}[gray, thick, arrows=-stealth]
    \draw (2pt,-45pt) -- (16pt,-45pt);
    \draw (-28pt,-45pt) -- (-11pt,-45pt);
    \end{scope}
    \begin{scope}[gray]
    \node[anchor=center] at (-5pt,-47pt) {$...$};
    \draw[fill=white] (-28pt,-45pt) circle (0.9ex);
    \end{scope}

    \draw[white, fill=white] (200pt,0pt) circle (0.9ex);

\end{tikzpicture}}
    \vspace{-5pt}
    \caption{Zipper gadget consisting of $2$ input groups $S_{1\!} =_{\!}\{u_1, u_2, \ldots, u_d \}$ and $S_{2\!} =_{\!} \{u_{d+1}, \ldots, u_{2d} \}$, and a main chain $v_1, \ldots, v_{n_0}$. The edges going from the input groups are combined into a single arrow for simplicity. The extension to discourage recomputation is only illustrated for $u_1$ in gray.}
    \label{fig:gadget}
\end{figure}
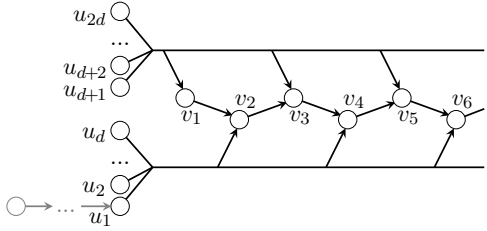

\subparagraph*{NP-Hardness.}

Regarding the complexity of MPP, it is not surprising that the problem is NP-hard, since it generalizes SPP. However, MPP is already NP-hard on rather simple subclasses of DAGs.

\renewcommand*{\proofname}{Proof sketch}

\begin{lemma} \label{lem:NPhard}
MPP is already NP-hard on the following subclasses:
\begin{itemize}[itemsep=-1pt, topsep=3pt]
 \item 2-layer DAGs (where the longest path has length $1$),
 \item in-trees (where every out-degree is at most $1$).
\end{itemize}
\end{lemma}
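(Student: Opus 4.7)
The plan is to reduce from \textsc{3-Partition}, which is strongly NP-hard: given $3n$ positive integers $a_1, \ldots, a_{3n}$ with $\sum_i a_i = nB$ and each $a_i \in (B/4, B/2)$, decide whether they can be partitioned into $n$ triples each summing to $B$. For both subclasses, the plan is to set $k=n$ processors and $g=1$, and construct gadgets whose total compute work equals $n$ times some threshold $T$, so that the MPP cost meets $T$ exactly when a perfect 3-partition exists.

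For 2-layer DAGs, I would construct a DAG with $3n$ sinks $s_1, \ldots, s_{3n}$, where $s_i$ has $a_i$ \emph{private} source nodes as in-neighbors (no source is shared between sinks). Pick $r$ large enough that any sink's inputs fit in one processor's fast memory (e.g.\ $r = \lceil B/2 \rceil$). The DAG has $nB+3n$ nodes, so Lemma~\ref{lem:trivial_bounds} gives $\texttt{OPT} \geq B+3$. The canonical strategy, which assigns each sink $s_i$ to a single processor $p(i)$ that computes $s_i$'s sources and then $s_i$ in sequence, achieves cost $\max_p \sum_{i:\,p(i)=p}(a_i+1)$; this equals $B+3$ precisely when the instance is a yes-instance. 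For in-trees, I would take the union of $3n$ disjoint chains where chain $i$ has $a_i$ nodes (admissible under the ``every out-degree at most $1$'' definition), with $r=2$ and $k=n$. The $nB$ total nodes give $\texttt{OPT} \geq B$, and the canonical chain-to-processor assignment achieves makespan $B$ iff the instance is yes. If a single connected in-tree is desired, append a common root collecting all chain endpoints, adjusting the threshold by a small additive constant.

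The main technical obstacle is ruling out non-canonical strategies that use I/O to shift compute work between processors. The key observation is that meeting the threshold ($B+3$ for 2-layer DAGs, $B$ for in-trees) requires two simultaneous conditions: (i) no I/O step is ever used, since each such step contributes at least $g \geq 1$ on top of the compute lower bound; and (ii) the compute load is perfectly balanced across the $n$ processors, so that no processor is idle in any compute step. In both constructions the only strategies satisfying both conditions are canonical assignments of entire sinks or chains to processors, and such balanced assignments exist iff a 3-partition does. Recomputation cannot reduce total compute work below the lower bound, so it offers no help either. Together, these observations reduce the decision question ``is $\texttt{OPT}$ at most the threshold?'' to \textsc{3-Partition}, giving NP-hardness for both subclasses.
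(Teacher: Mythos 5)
Your reduction from \textsc{3-Partition} is a genuinely different route from the paper's. The paper proves Lemma~\ref{lem:NPhard} by importing the NP-hardness constructions for BSP scheduling from \cite{BSPscheduling} and then spending considerable effort (see Appendix~\ref{app:NPhard}) showing that those reductions survive the translation from the direct-sending communication model to MPP's \ref{rule:multi-redbyblue}/\ref{rule:multi-bluebyred} pair; the hardness there is driven by a tight communication budget of $m_0 \cdot g$. Your construction instead makes I/O and memory entirely inactive: you pick $r$ and the gadgets so that any cost-\texttt{OPT} pebbling does zero I/O and no recomputation, collapsing MPP to makespan minimization on $n$ identical machines ($P\,\|\,C_{\max}$), whose \textsc{3-Partition}-hardness is classical. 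That is a legitimate and noticeably shorter self-contained proof (since MPP generalizes $P\,\|\,C_{\max}$ already on disconnected 2-layer DAGs), though it buys less insight: it establishes the claimed NP-hardness without touching the I/O/memory trade-off that is the point of MPP, whereas the paper's adaptation shows the hardness persists in a regime where communication genuinely matters.

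Two technical repairs are needed. For in-trees, $r = 2$ is too small: after finishing chain $i_1$ a processor must keep a red pebble on that chain's sink (the terminal condition requires it, and converting it to blue is an I/O step), so while advancing through chain $i_2$ it needs two pebbles on the active chain plus the retained sink pebble, and later also chain $i_2$'s retained sink pebble --- a peak of $4$. Take $r = 4$ (or any constant $\geq 4$); the lower-bound side of your argument is unaffected since it only uses $n/k$. For 2-layer DAGs, $r = \lceil B/2 \rceil$ does work, but only if the processor computes its three sink groups in decreasing $a_i$ order (so the retained sink pebbles $0,1,2$ are paired with successively smaller groups) and $B$ is, say, at least $18$; it is worth stating this explicitly. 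Finally, your parenthetical about appending a common root to make the in-tree connected is not the small fix you suggest: gathering $3n$ chain endpoints onto one processor costs $\Theta(n)$ I/O steps, so the threshold shifts by a quantity that depends on the assignment, and the clean ``no I/O $\Rightarrow$ balanced assignment'' logic breaks. The paper's out-degree-at-most-$1$ characterization admits forests, so you can simply drop that aside.
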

The lemma follows from \cite{BSPscheduling}, where the same results are established for BSP scheduling; however, it requires some further work to adapt these proof constructions to MPP.

\subparagraph*{A Greedy Algorithm.}

It is also natural to wonder if we can obtain good solutions e.g.\ by following a greedy pebbling strategy. We can derive a naive upper bound on this approach, using the long known results that in \cdag scheduling model memory limits or communication costs, any non-idle greedy strategy is a $2$-approximation of the optimum \cite{greedy2approx}. Employing such a greedy approach for the computations, and assuming the worst-case strategy discussed for Lemma \ref{lem:trivial_bounds}, we get the following bound.

\begin{lemma} \label{lem:greedy_upper}
Any pebbling where the compute steps follow a non-idle greedy schedule gives a $2 \cdot (g \cdot (\Delta_{in} + 1) + 1)$-factor approximation of the optimum.
\end{lemma}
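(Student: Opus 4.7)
The plan is to combine the classical Graham-style list-scheduling bound alluded to before the lemma with the naive per-node pebbling used in Lemma~\ref{lem:trivial_bounds}. A non-idle greedy schedule for the pure precedence-constrained scheduling problem on $k$ identical processors, where each node takes unit time and memory/communication is ignored, finishes in at most $2\cdot T^{*}$ parallel compute rounds, with $T^{*}$ the optimum makespan of that relaxed problem. The idea is to execute exactly this schedule inside MPP, and around every compute round insert just enough I/O steps to make the round legal under the memory constraint $r\geq\Delta_{in}+1$.

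Concretely, let the non-idle greedy schedule yield $T\leq 2\cdot T^{*}$ rounds, where round $i$ computes a set $V_i$ of at most $k$ nodes, each assigned to one processor. I realise round $i$ as a short block of MPP steps. First, in up to $\Delta_{in}$ parallel applications of rule~\ref{rule:multi-bluebyred}, each processor in charge of some $u\in V_i$ loads the in-neighbors of $u$ from slow memory into its own fast memory; this is legal because all in-neighbors were computed in strictly earlier rounds and can be saved to slow memory immediately after being computed. Then a single application of~\ref{rule:multi-compute} computes all of $V_i$ in parallel, a single application of~\ref{rule:multi-redbyblue} saves the fresh results to slow memory, and finally the newly-used red pebbles are removed for free via~\ref{rule:multi-remove}. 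Each processor holds at most $\Delta_{in}+1\leq r$ red pebbles at any point, so every configuration stays valid, and the block costs at most $\Delta_{in}\cdot g + 1 + g = (\Delta_{in}+1)\cdot g + 1$.

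The only real subtlety is that the claimed factor is measured against the MPP optimum $\texttt{OPT}$ rather than against $T^{*}$, so we still need $T^{*}\leq \texttt{OPT}$. This however is immediate: any MPP strategy induces a valid schedule for the pure scheduling problem by projecting onto its \ref{rule:multi-compute} steps, and it must contain at least $T^{*}$ such rounds; each such round contributes at least $1$ to the MPP cost, hence $T^{*}\leq \texttt{OPT}$. Multiplying the per-block cost by $T\leq 2\cdot T^{*}\leq 2\cdot \texttt{OPT}$ then gives the claimed factor. I expect the main obstacle to be phrasing the Graham-type bound cleanly for our rule set (in particular, identifying an MPP execution with a makespan in the pure scheduling model) and verifying that the inserted I/O blocks keep each processor's red-pebble budget within $r$; everything else is straightforward bookkeeping.
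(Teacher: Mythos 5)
Your proposal is correct and follows essentially the same route as the paper: Graham's $2$-approximation bound for the number of non-idle greedy compute rounds against a lower bound on $\texttt{OPT}$, combined with the worst-case per-node I/O wrapper of cost $(\Delta_{in}+1)\cdot g+1$ from Lemma~\ref{lem:trivial_bounds}. The only difference is cosmetic — the paper phrases the lower bound via $L_0$, the minimal number of compute steps in any pebbling, while you use the relaxed makespan $T^*$, but these play the identical role in the argument.
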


Unsurprisingly, greedy strategies can also return rather bad solutions. In general, it is even non-trivial to precisely define a greedy strategy for MPP, since pebbling consist of various aspects. Here we make observations for a general class of greedy strategies: we only assume that processor $p$ always picks the next node to compute as the yet uncomputed node with the largest number (or largest fraction) of in-neighbors having a red pebble of $p$. We show a lower bound for any such greedy strategy, regardless of how compute steps are parallelized, how ties are broken, or how I/O steps are applied to compute the chosen node.

\begin{lemma} \label{lem:greedy_lower}
There exist DAGs where any such greedy pebbling algorithm is worse than the optimum
\begin{itemize}[itemsep=2pt, topsep=3pt]
 \item by a $\, \frac{1}{5} \! \cdot \! \Delta_{in\!} - _{\!} 1\, $ factor asymptotically (for any $\Delta_{in\!} = _{\!} O(1)$),
 \item by a $\, \frac{2}{3} \! \cdot \! g _{\!} + _{\!} 1 \,$ factor asymptotically (for any $g \geq 2$).
\end{itemize}
\end{lemma}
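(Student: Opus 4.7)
I would prove both lower bounds by exhibiting explicit counterexample DAG families built from variants of the zipper gadget defined earlier in the section. The single construction template is: take the zipper gadget with input groups $S_1,S_2$ of size $d$, main chain of length $n_0$ (which we will let grow to infinity), and the recomputation-blocking chains of length $2g$ prepended to each input node so that recomputation is never cheaper than I/O. I would set $k=2$ processors with fast memory $r=d+2$, and choose $d$ differently for the two parts: for part 1, $d=\Delta_{in}-1$ is a fixed constant; for part 2, $d$ grows linearly with $g$ (specifically $d=\lceil(4g+8)/3\rceil$ suffices). So $\Delta_{in}=d+1$ in both cases.

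\textbf{Bounding OPT.} The analysis of the gadget from the main text already gives the strategy: $p_1$ computes and retains $S_1$, $p_2$ computes and retains $S_2$, and then the two processors alternate computing main chain nodes, with each transition $v_i\to v_{i+1}$ requiring only that the previous main chain value be communicated across processors via slow memory (two synchronized I/O steps, costing $2g$). Together with one compute step per main chain node, this gives $\texttt{OPT}\leq(2g+1)\cdot n_0+O(dg)$, where the additive $O(dg)$ covers the one-off cost of computing the input groups and their blocking chains.

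\textbf{Bounding greedy from below.} The crux is to show that \emph{every} greedy variant incurs cost at least $(d\cdot g+1)\cdot n_0$ on this DAG. The argument has two parts. First, one shows that greedy's local rule forces both processors to target the same main chain node at each stage: after $v_{i-1}$ has just been computed by some processor, the only ready uncomputed node with a strictly positive count of red in-neighbors (on either processor) is $v_i$, since all other candidates either are unready (like $v_{i+1}$, which requires $v_i$) or have zero red in-neighbors (the remaining sources). Hence the two processors cannot specialize the way OPT does. Second, one shows that in order to compute $v_i$ on the processor that is about to do so, the full corresponding input group must be present in that processor's red pebbles, and since $r=d+2$ cannot accommodate both $S_1$ and $S_2$ simultaneously, this requires loading $d$ blue pebbles (recomputation is strictly suboptimal by the blocking-chain extension). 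Combining: each main chain node contributes at least $d\cdot g$ I/O cost plus $1$ compute cost to the greedy schedule, asymptotically yielding $(dg+1)\cdot n_0$. Taking the ratio gives $\texttt{greedy}/\texttt{OPT}\geq(dg+1)/(2g+1)$ asymptotically in $n$; substituting the two choices of $d$ and simplifying yields $\tfrac{1}{5}\Delta_{in}-1$ for part 1 and $\tfrac{2}{3}g+1$ for part 2.

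\textbf{Main obstacle.} The hard part is the universal quantifier over greedy variants: the bound must hold \emph{irrespective} of tie-breaking, how the two compute slots are filled in parallel, and how I/O is scheduled around compute steps. The key invariant I would maintain inductively is that immediately before any main chain node $v_i$ is computed, no processor holds the full input group required by $v_i$ without having spent at least $d\cdot g$ I/O on it since the previous main chain compute step. Establishing this invariant requires a careful case analysis distinguishing which processor picks $v_i$, whether the other processor has been idle, duplicating, or loading speculatively, and ruling out that greedy's red-in-neighbor-count rule ever incentivizes pre-loading the ``opposite'' group on the non-computing processor. Once this invariant is in place, summing over the $n_0$ main chain nodes gives the claimed bound, and the lemma follows by letting $n_0\to\infty$.
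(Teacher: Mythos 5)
Your construction does not work: the single zipper gadget does \emph{not} defeat every greedy strategy in the class, so the key invariant you propose is false. Concretely, consider the greedy run in which the source-phase tie-breaking leaves $S_1$ red-pebbled on $p_1$ and $S_2$ on $p_2$. At each subsequent stage only one main-chain node $v_i$ is ready (the main chain is a chain), both processors select it as their next-goal node, and the tie between processors may be broken in favour of the processor that already holds the input group $v_i$ depends on --- indeed that processor has $d$ red-pebbled in-neighbors of $v_i$ versus at most one for the other, so this is the \emph{natural} greedy assignment, not an adversarial one. This run alternates the main chain between the two processors exactly as your \texttt{OPT} strategy does, paying only the $2g$ I/O needed to hand over $v_{i-1}$, for a total of $(2g+1)\cdot n_0+O(dg)$. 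Since the lemma quantifies over \emph{all} greedy variants (all tie-breakings), the existence of this one run means your DAG gives a ratio of $1$, not $(dg+1)/(2g+1)$. The step ``greedy's local rule forces both processors to target the same main chain node, hence they cannot specialize the way OPT does'' is exactly backwards: targeting the same node does not prevent the node from being assigned to the well-positioned processor.

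The paper circumvents this in two quite different ways. For the first bound it takes \emph{two} copies of the zipper gadget and rigs the input groups (extra edges $u_i\to$ first node of the chain attached to $u_{i+1}$, forcing a unique topological order within each copy) so that every greedy run necessarily sends the two processors down the two copies independently; each processor, alone on its copy with $r=d+2$, then pays $d\cdot g$ I/O per main-chain node, while the optimum deliberately \emph{gives up} compute parallelism (doing the two copies sequentially, alternating processors within one main chain) to pay only $2g$ per node --- yielding the ratio $\frac{1+dg}{2+4g}\ge\frac15(\Delta_{in}-1)$. For the second bound the paper abandons the zipper gadget entirely and uses four interleaved chains with $r=\infty$, where greedy's refusal to hand off a chain midway forces $\Theta(n)$ communication that the optimum avoids with a single swap; your attempt to reuse the zipper gadget with $d=\Theta(g)$ inherits the same fatal flaw as above. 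To salvage your approach you would need, at minimum, a mechanism that provably prevents every greedy run from reaching the favourable configuration ``$S_1$ on $p_1$, $S_2$ on $p_2$, alternate'', which is precisely what the paper's two-copy trick supplies.
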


\subparagraph*{Lower Bounds on Pebbling Costs.}
We now discuss how we can apply lower bounds on the number of required I/O steps from SPP to bound the optimum cost in MPP.
The key insight is that a pebbling strategy for MPP can be implemented using a single processor (SPP) with fast memory of size $r \cdot k$. Specifically, each parallel rule can be simulated using $k$ sequential rules.

\begin{lemma} \label{lem:k_to_1}
Let $G$ be a \cdag such that an SPP pebbling strategy with fast memory of size $k \cdot r$ 
requires at least $L$ steps of I/O, for some $L \in \pZ$. 
Then, an MPP pebbling strategy with $k$ processors, each having fast memory of size $r$,
requires at least $L / k$ steps of I/O.
\end{lemma}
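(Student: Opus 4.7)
The plan is to prove the contrapositive by a direct simulation: given any valid MPP strategy with $k$ processors, each having fast memory $r$, I will produce a valid SPP strategy with fast memory $k \cdot r$ that uses at most $k$ times as many I/O steps. If the MPP optimum used fewer than $L/k$ I/O steps, the simulated SPP strategy would use fewer than $L$ I/O steps, contradicting the assumed lower bound.

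The simulation is the obvious ``merge all shades into one color.'' Given an MPP configuration $(R^1, \ldots, R^k, B)$, the corresponding SPP configuration has red pebbles on $\bigcup_{j=1}^{k} R^j$ and blue pebbles on $B$. The red-pebble count in SPP is at most $\sum_{j=1}^k |R^j| \leq k \cdot r$, so the fast-memory bound of SPP is respected at every step. For each MPP rule $t_i$, I simulate it by a short sequence of SPP rules as follows: an application of \ref{rule:multi-redbyblue} that acts on $m \leq k$ nodes becomes $m$ applications of \ref{rule:single-bluebyred} (sic: MPP \ref{rule:multi-redbyblue} places blue on nodes having red, which corresponds to SPP \ref{rule:single-bluebyred}); an application of \ref{rule:multi-bluebyred} becomes $m$ applications of \ref{rule:single-redbyblue}; an application of \ref{rule:multi-compute} that computes $m$ nodes becomes $m$ sequential applications of \ref{rule:single-compute}; and \ref{rule:multi-remove} is translated directly to \ref{rule:single-remove}. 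I must check that each step of the SPP simulation is legal, which amounts to verifying: (i) every node targeted by an I/O translation carries the required pebble in the merged configuration; and (ii) when simulating \ref{rule:multi-compute} on $v_1, \ldots, v_m$ sequentially, each $v_i$ has all its predecessors red-pebbled in SPP at the moment we compute it. Point (i) is immediate from the merging. Point (ii) follows because in the MPP step all predecessors of $v_i$ carry an $f_m(i)$-shaded red pebble, hence belong to $R^{f_m(i)}$, hence lie in the SPP red set; moreover, since the sequential \ref{rule:single-compute} applications only \emph{add} red pebbles, no predecessor's pebble can be lost before we compute $v_i$.

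Each MPP rule contributes at most $k$ SPP rules, and in particular each MPP I/O rule contributes exactly $m \leq k$ SPP I/O rules, while MPP compute/remove rules produce no SPP I/O rules at all. Thus, writing $N_{IO}$ for the number of I/O steps in the MPP strategy, the simulated SPP strategy uses at most $k \cdot N_{IO}$ I/O steps. Since the simulated SPP strategy ends in a configuration where the merged pebble sets cover $\sink$ (because the MPP terminal condition $\sink \subseteq B_T \cup \bigcup_j R_T^j$ translates to ``every sink carries an SPP pebble''), the SPP strategy is terminal, and by the hypothesis of the lemma it must use at least $L$ I/O steps. Therefore $k \cdot N_{IO} \geq L$, i.e.\ $N_{IO} \geq L/k$, which is the claimed lower bound.

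The only mildly delicate point, and the one I would spell out most carefully in the final write-up, is verifying the validity of the SPP simulation of a single parallel \ref{rule:multi-compute} step when several $v_i$ are computed simultaneously in MPP: one must serialize them without losing the predecessor pebbles needed by later $v_i$'s in the batch. As argued above, this is automatic because SPP \ref{rule:single-compute} only adds pebbles and we insert no \ref{rule:single-remove} rules between the serialized computations. Everything else is bookkeeping.
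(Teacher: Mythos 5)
Your proposal is correct and follows essentially the same route as the paper: simulate the MPP strategy in SPP with fast memory $k\cdot r$ by merging all shades into one red set and expanding each parallel rule into at most $k$ sequential rules, so that the SPP I/O count is at most $k$ times the MPP I/O count, and the assumed lower bound $L$ forces $N_{IO}\geq L/k$. Your write-up is in fact somewhat more careful than the paper's (which states the contradiction hypothesis with a typo), particularly in checking the legality of the serialized compute steps.
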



\begin{corollary}\label{coro:lb_pebbling_1}
Let $G$ be a \cdag such that an SPP pebbling strategy with fast memory of size $k \cdot r$ 
requires at least $L$ steps of I/O, for some $L \in \pZ$. 
Then, an MPP strategy with $k$ processors and fast memory of size $r$ each
has cost of at least $g \cdot L / k + {n} / {k}$.
\end{corollary}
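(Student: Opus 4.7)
The plan is to combine Lemma~\ref{lem:k_to_1} with the trivial compute-step lower bound from Lemma~\ref{lem:trivial_bounds}, since the two contributions to the MPP cost (I/O and computation) can be bounded independently and then added.

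First, I would invoke Lemma~\ref{lem:k_to_1} to conclude that any MPP pebbling of $G$ with $k$ processors, each of fast memory size $r$, must perform at least $L/k$ I/O steps (i.e., applications of \ref{rule:multi-redbyblue} or \ref{rule:multi-bluebyred}). Since each such step has cost $g$ in the MPP cost function, the total I/O contribution to the cost is at least $g \cdot L / k$.

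Next, I would observe that each node of $G$ must be computed at least once for any valid pebbling strategy to reach a terminal configuration, so at least $n$ node-computations occur in total (counted with multiplicity if recomputation happens, which only increases the count). Each application of rule \ref{rule:multi-compute} places red pebbles on at most $k$ nodes simultaneously, so the number of \ref{rule:multi-compute} steps is at least $n/k$, each contributing cost $1$. This gives a computation cost of at least $n/k$, matching the lower bound already used in Lemma~\ref{lem:trivial_bounds}.

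Finally, summing the two independent lower bounds yields $\texttt{OPT} \geq g \cdot L/k + n/k$, as claimed. There is no real obstacle here: the argument is a direct combination of Lemma~\ref{lem:k_to_1} and the counting bound on compute steps, both of which hold simultaneously for any pebbling strategy.
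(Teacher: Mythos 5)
Your proposal is correct and matches the paper's own argument: the paper likewise invokes Lemma~\ref{lem:k_to_1} to get the $g\cdot L/k$ I/O lower bound and adds the $n/k$ compute-step lower bound from Lemma~\ref{lem:trivial_bounds}. No issues.
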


It follows that we can translate lower bounds for a single processor 
to $k$ processors. Indeed, many previous works \cite{RBpebbling1,RBpebbling6} derive I/O lower bounds for specific computations in SPP, which are obtained utilizing a special \cdag partition. 
For example, Hong and Kung~\cite{RBpebbling1} derive a lower bound of $\frac{n \log n}{ \log(r k)}$ on the number of required I/Os in SPP (with fast memory size $r \cdot k$) for the $n$-point FFT \cdag. The bound translates to a lower bound of $ \frac{n}{k} \cdot (g \cdot \frac{\log n}{\log(r k)} + 1)$ on the cost in MPP for the same \cdag. 
Another well-studied computation is matrix-matrix multiplication. Kwasniewski et al.~\cite{RBpebbling6} improve the technique of \cite{RBpebbling1} and derive a lower bound of $\frac{2n^3}{\sqrt{rk}} + n^2$ in the single processor case, resulting in a lower bound of 
$\frac{n}{k} \cdot (g \cdot (\frac{2n^2}{\sqrt{rk}} + n) + 1)$ on the costs of matrix-matrix multiplication in MPP.

Finally, we show that there are instances where this lower bound is essentially tight.

\begin{lemma} \label{lem:IO_lower_strict}
For any $n$, there is a \cdag construction with $\texttt{OPT} \leq g \cdot L / k + {n}/{k} + O(1)$ in MPP.
\end{lemma}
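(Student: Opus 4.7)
My plan is to exhibit a DAG $G$ that splits cleanly into $k$ independent pieces, so that the MPP pebbling parallelizes perfectly and matches the lower bound of Corollary~\ref{coro:lb_pebbling_1} up to an additive $O(1)$ term. Specifically, I would take $G$ to be the disjoint union of $k$ identical subDAGs $H_1, \ldots, H_k$, each on $n/k$ nodes; since no edges cross between the $H_j$'s, any MPP work done on one copy is entirely independent of the others, and pebbling $H_j$ with shade $j$ never interferes with the memory budget of any other processor.

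For the upper bound, I would fix an SPP strategy $\pi$ on a single copy of $H$ with fast memory $r$ that uses $L_H$ I/O steps and $n/k$ compute steps. In MPP, processor $j$ is then assigned to pebble $H_j$ by executing $\pi$, but casting each single-processor rule as the corresponding shaded MPP rule for shade $j$. Because the $k$ plans are disjoint, they can be merged synchronously: at each time $t$, every processor simultaneously performs the $t$-th step of its local plan, so compute rounds and I/O rounds in MPP advance all $k$ copies at once. This yields a total MPP cost of at most $g \cdot L_H + n/k + O(1)$, where the $O(1)$ absorbs a few tail steps in case one copy finishes slightly before the others.

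To match the corollary's lower bound $g \cdot L / k + n/k$, we want $L \geq k \cdot L_H$, i.e., every SPP strategy on $G$ with memory $kr$ should still use at least $L_H$ I/O steps on each copy. This is immediate once $H$ has the property that its SPP I/O cost is robust under a memory increase from $r$ to $kr$, since one can always restrict an SPP strategy on $G$ to each copy and obtain a valid SPP pebbling of $H$ with at most $kr$ memory. The principal obstacle is exhibiting such an $H$, because standard Hong--Kung-style bounds typically decrease with available memory. A concrete safe choice is the degenerate case $L_H = 0$: take $H$ to be, e.g., a chain (or any low-width DAG) on $n/k$ nodes, which SPP pebbles with no I/O for any memory $\geq \Delta_{in}+1$; then $L = 0$ and the lemma reduces to exhibiting MPP cost $n/k + O(1)$ on $k$ disjoint such pieces, which our synchronous parallel pebbling achieves exactly. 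For non-trivial $L$, one would instead pick $H$ from a family where the partition-based I/O bound is essentially flat over memory sizes from $r$ up to $kr$, so that $L_H$ is preserved under the memory increase and the above argument goes through verbatim.
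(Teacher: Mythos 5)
Your high-level plan — take $k$ disjoint identical copies so that MPP parallelizes perfectly, then argue the parallel cost matches the lower bound of Corollary~\ref{coro:lb_pebbling_1} — is the same structure the paper uses, and your meta-argument for why you need a gadget whose SPP I/O cost is essentially unchanged from memory $r$ to memory $kr$ is exactly right. However, you then retreat to the degenerate case $L_H=0$ (disjoint chains). That choice does satisfy the lemma as literally stated, but only vacuously: the lower bound $g\cdot L/k + n/k$ collapses to $n/k$, so the claim that the bound is ``essentially tight'' carries no content. The whole point of the lemma is to exhibit tightness in a regime where the I/O term $g\cdot L/k$ is the dominant part of the cost.

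The paper supplies exactly the gadget you said would be needed but did not exhibit: each copy is a chain $u_1, \ldots, u_{2m}$ augmented with back-edges $(u_i, u_{m+i})$ for $i\in[m]$. With $r=3$ per processor, computing $u_{m+i}$ needs both $u_{m+i-1}$ and $u_i$, so essentially every $u_1,\ldots,u_m$ must be written out and reloaded; crucially, the same is true in SPP with memory $kr=3k$ as long as $3k=O(1)\ll m$, since keeping $O(1)$ nodes resident saves only $O(1)$ of the $2m$ I/O steps. This yields $L=n-O(1)$ in SPP, while MPP parallelizes both compute and I/O across the $k$ copies in lockstep, giving $\texttt{OPT}=\frac{n}{k}(g+1)=g\cdot L/k + n/k + O(1)$. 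So the missing ingredient in your attempt is this concrete ``memory-insensitive'' gadget; with it, your restrict-to-each-copy argument closes the gap as you anticipated.
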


\renewcommand*{\proofname}{Proof sketch}

\section{The Impact of More Processors: Trade-offs Between $k$, $r$ and \texttt{OPT}} \label{sec:tradeoff}

We next analyze how more available processors can affect the optimal pebbling strategy for a DAG, which captures fundamental trade-offs for parallel computing. For convenience, we compare the simplest case of $1$ processor to $k$ processors, but our proofs are easy to carry over to any $k$-fold increase in the number of processors. We use $\texttt{OPT}^{(k)}$ as a short notation for the optimal pebbling cost with $k$ processors. 

There are two natural ways to do this comparison. Let $r_0$ denote the amount of fast memory in the $1$-processor case. One option is to compare this to a setting with $k$ processors and $r:=\frac{r_0}{k}$ fast memory on each; we call this the \emph{fair} comparison, as the total fast memory over all processors remains unchanged. Another option is to compare to a setting with $k$ processors and simply $r:=r_0$ for each, i.e., all processors having the same fast memory $r_0$ as before. This \emph{practical} comparison is more relevant for applications, where computations are often parallelized by simply employing more processors of the same kind.

\subparagraph*{``Fair'' Comparison.}

We first compare the case of $1$ processor with $r _{\!} = _{\!} r_0$, to the case of $k$ processors with $r _{\!} = _{\!} \frac{r_0}{k}$. This is an interesting setting: on the one hand, the $k$ processors allow for parallelization of computations and I/O, but on the other hand, processors have less fast memory, possibly resulting in further I/O steps to save and reload data. We first show that the optimum cost can decrease by a factor $k$ at most; intuitively, this is because in the fair case, any pebbling strategy with $k$ processors can be transformed into a $1$-processor schedule with at most $k$ times larger cost. The bound is tight, as can be seen in e.g.\ a DAG with $k$ independent chains of length $\frac{n}{k}$.

\renewcommand*{\proofname}{Proof}

\begin{lemma} \label{lem:factor_k}
In the fair case, we have $\frac{\texttt{OPT}^{(k)}}{\texttt{OPT}^{(1)}} \geq \frac{1}{k}$, and there are DAGs such that $\frac{\texttt{OPT}^{(k)}}{\texttt{OPT}^{(1)}} = \frac{1}{k}$.
\end{lemma}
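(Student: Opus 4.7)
The plan is to prove the two halves separately: the inequality $\texttt{OPT}^{(k)}/\texttt{OPT}^{(1)} \geq 1/k$ via a simulation argument, and tightness by exhibiting an explicit DAG of $k$ vertex-disjoint chains.

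For the inequality, I would take any valid MPP pebbling strategy with $k$ processors and per-processor fast memory $r_0/k$, and convert it into an SPP pebbling strategy with a single processor and memory $r_0$ of cost at most $k$ times larger. The conversion collapses the $k$ red shades into a single red color. Since at every step $\sum_j |R^j| \leq k \cdot (r_0/k) = r_0$, the aggregate red-pebble set has size at most $r_0$, so the SPP memory constraint is respected. Each parallel MPP rule acts on at most $k$ nodes simultaneously and is simulated by up to $k$ sequential SPP rules of the same type: an R3-M step of cost $1$ becomes up to $k$ compute steps (total cost $\leq k$), and an R1-M or R2-M step of cost $g$ becomes up to $k$ I/O steps (total cost $\leq kg$). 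Each serialized step is valid because the shade-specific predecessor condition of R3-M already implies the shadeless condition of R3-S, and the I/O rules translate directly. Summing over all rules gives $\texttt{OPT}^{(1)} \leq k \cdot \texttt{OPT}^{(k)}$.

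For tightness, consider the DAG $G$ consisting of $k$ vertex-disjoint chains, each of length $n/k$. Assuming $r_0 \geq 2k$ (so that $r_0/k \geq 2 = \Delta_{in}+1$, the minimum feasible memory), in the $k$-processor case we assign chain $j$ to processor $j$; every processor sweeps its chain with $2$ red pebbles, and all $k$ processors advance one node per parallel R3-M step. This pebbles $G$ in $n/k$ compute steps with no I/O, so $\texttt{OPT}^{(k)} \leq n/k$; the matching lower bound $\texttt{OPT}^{(k)} \geq n/k$ is given by Lemma~\ref{lem:trivial_bounds}. In the single-processor setting, every node of every chain must be computed at least once in order to eventually place a pebble on each of the $k$ sinks, so $\texttt{OPT}^{(1)} \geq n$, and this is attained by a sequential pebbling that keeps the already-completed sinks in memory (feasible since $r_0 \geq 2k \geq k+1$). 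The ratio is thus exactly $1/k$.

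The only mildly technical point is ensuring that the SPP memory bound holds after every individual serialized sub-step, not just after each completed parallel rule; this is automatic, since the collapsed red-pebble set at any intermediate sub-step is a subset of the collapsed red-pebble set at the end of the parallel rule, whose size is at most $r_0$ by MPP validity. Apart from this bookkeeping, the argument is entirely elementary and I do not anticipate any further obstacle.
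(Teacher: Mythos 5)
Your proposal is correct and follows essentially the same route as the paper: the lower bound via serializing each parallel rule into at most $k$ single-processor rules while collapsing the shades (noting the aggregate red-pebble count is at most $k\cdot(r_0/k)=r_0$), and tightness via $k$ independent chains of length $n/k$ with the trivial compute-cost bounds. Your extra remark that the memory bound holds at every intermediate sub-step of the serialization is a valid and slightly more careful bookkeeping point than the paper spells out, but it does not change the argument.
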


\noindent With the same fast memory scattered over $k$ processors, the optimum can also increase notably.

\begin{lemma} \label{lem:fair_increase}
In the fair case, there is a construction with 
\[ \frac{\texttt{OPT}^{(k)}}{\texttt{OPT}^{(1)}} \geq \frac{k-1}{k} \cdot g \cdot (\Delta_{in}-1) + 1 - o(1) \, . \]
\end{lemma}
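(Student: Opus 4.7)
The plan is to adapt the zipper gadget of Figure~\ref{fig:gadget} into a cyclic $k$-group version. Specifically, I would construct $k$ input groups $S_1,\dots,S_k$, each containing $d := \Delta_{in}-1$ source nodes prefixed by a recomputation-deterrent chain of length $2g$, and a main chain $v_1,\dots,v_n$ where $v_i$ has predecessors $\{v_{i-1}\}\cup S_{((i-1)\bmod k)+1}$, giving in-degree exactly $\Delta_{in}$. I would fix $r_0 := k(d+2)$, chosen so that in the fair comparison the single-processor instance has $k(d+2)$ red pebbles (easily holding all $kd$ source pebbles plus two chain pebbles), while each of the $k$ processors has exactly $d+2$ red pebbles -- the minimum for pebble-ability and precisely enough for one input group plus $v_{i-1}$ and $v_i$.

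For $\texttt{OPT}^{(1)}$, I would argue $\texttt{OPT}^{(1)}\leq n+o(n)$: after a one-time setup cost of $O(kdg)$ for computing all source nodes via their deterrent chains, the single processor keeps every input group in fast memory and executes each chain step with a single \ref{rule:multi-compute} rule, so the amortized per-chain-node cost is~$1$. For $\texttt{OPT}^{(k)}$, I would establish $\texttt{OPT}^{(k)} \geq (2g+1)n - o(n)$. The memory budget forces each processor $p_j$ to hold at most one input group persistently, so the natural (and essentially optimal) strategy assigns $S_j$ to $p_j$ and has $p_j$ compute the chain nodes of its cyclic class. Because consecutive chain nodes lie on different processors, $v_{i-1}$ must cross shades via the blue set: at least one \ref{rule:multi-redbyblue} and one \ref{rule:multi-bluebyred} rule (cost $2g$) before the compute rule. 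To rule out alternatives, the deterrent chains make recomputing a source strictly more expensive than reloading it from blue, and any strategy that uses only one processor incurs at least $d\cdot g$ per chain step to reload the needed input group, which is strictly worse than $2g+1$ whenever $d>2$.

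Combining the two bounds gives $\frac{\texttt{OPT}^{(k)}}{\texttt{OPT}^{(1)}} \geq 2g+1 - o(1)$; choosing $\Delta_{in}$ so that $\tfrac{k-1}{k}g(\Delta_{in}-1)+1 \leq 2g+1$, i.e.\ $\Delta_{in}-1\leq \tfrac{2k}{k-1}$, makes the construction's ratio match the lemma's bound. The principal obstacle is the tight lower bound on $\texttt{OPT}^{(k)}$: one must rule out subtle strategies such as partially splitting groups across processors, exploiting persistent blue pebbles as shared state, or interleaving save/load rules in parallel across distinct shades. This is handled by a charging argument leveraging the invariant that at the moment $v_i$ is computed its $d+1$ predecessors must simultaneously carry red pebbles of a single shade, and the cyclic structure forces $v_{i-1}$ to originate on a different shade than the one on which $v_i$ is computed, yielding an unavoidable $2g$-cost communication per chain step.
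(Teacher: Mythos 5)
There is a genuine gap: your construction can only ever certify a ratio of $2g+1-o(1)$, whereas the lemma asserts a ratio of $\frac{k-1}{k}\cdot g\cdot(\Delta_{in}-1)+1-o(1)$, which must grow with $\Delta_{in}$ (the paper explicitly notes afterwards that this is meant to be essentially tight against the universal upper bound $(g(\Delta_{in}+1)+1)\cdot n$ for \emph{large} $\Delta_{in}$). In your gadget each chain node's $d$ non-chain inputs form one entire group that a processor keeps permanently resident, so the only recurring I/O is the $2g$ handoff of the previous chain node; the per-node I/O is $\Theta(g)$ rather than $\Theta(dg)$, and the $\Delta_{in}$-dependence is lost. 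Your attempted repair --- restricting $\Delta_{in}-1\leq\frac{2k}{k-1}$ so that the target expression drops below $2g+1$ --- inverts the logic of the lemma: it only re-proves the statement ``there is a construction with ratio $\geq 2g+1-o(1)$,'' which is exactly the \emph{preliminary, weaker} bound the paper itself proves first with the same cyclic zipper idea before doing more work. Moreover, as written your parameter window is empty for $k\geq 4$: your lower-bound argument requires $d>2$ (so that reloading a whole group is strictly worse than the $2g$ handoff), while your repair requires $d\leq\frac{2k}{k-1}<3$, and no integer satisfies both.

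The missing idea is the paper's refined construction: split the $kd$ sources into $k^2$ subgroups of size $d/k$ each, and let the main chain cycle through $k^2$ positions where each position's input set is a union of $k$ subgroups, chosen (via a Latin-square-type pattern with $k$ prime) so that any two positions' input sets intersect in at most one subgroup. With $r=d+2$ a processor can hold only one such $k$-tuple, so moving to any other position forces it to reload at least $\frac{k-1}{k}\cdot d$ values, i.e., an I/O cost of $\frac{k-1}{k}\cdot d\cdot g$ per chain node; the remaining work in that proof is to rule out strategies that pin a fixed $k$-tuple on some processors, using the fact that the $k$-tuples within a batch are pairwise disjoint so that concentrating work on fewer processors costs exactly as much. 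Your charging-argument sketch for the lower bound is the right kind of reasoning, but it is being applied to a construction that structurally cannot produce the claimed $\Delta_{in}$-linear separation.
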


Recall that $\texttt{OPT}^{(1)} \geq n$ and $\texttt{OPT}^{(k)} \leq (g \cdot (\Delta_{in} + 1) + 1) \cdot n$, so this bound is essentially tight for large $k$ and $\Delta_{in}$ values. Another construction shows that the optimum can also be non-monotonic in $k$.

\begin{lemma} \label{lem:nonmonotone}
In the fair case, we can have e.g.\ $\texttt{OPT}^{(2)\!} \leq \texttt{OPT}^{(1)}$ and $\texttt{OPT}^{(2)\!} \leq \texttt{OPT}^{(4)}$.
\end{lemma}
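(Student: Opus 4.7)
The plan is to exhibit a specific DAG $G$ for which all three optima can be pinned down and shown to satisfy both inequalities. I would take $G$ to be the disjoint union of two copies $Z_1, Z_2$ of the (chain-extended) zipper gadget described in Section~\ref{sec:basics}: each $Z_z$ has input groups $S_1^{(z)}, S_2^{(z)}$ of $d$ nodes with a chain of length $2g$ in front of every input node (to discourage recomputation), and a main chain $v_1^{(z)}, \ldots, v_{n_0}^{(z)}$. Setting $r_0 = 4d+8$ and picking $d, g \geq 2$ with $n_0 \gg dg$, the three fair cases correspond to $r = 4d+8$, $r = 2d+4$, and $r = d+2$ per processor, respectively. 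The construction is designed so that $k=1$ and $k=2$ both admit I/O-free strategies, while $k=4$ is forced into the costly cross-processor communication pattern exhibited by the same gadget in Section~\ref{sec:basics}.

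For the upper bounds I would simply write down the obvious strategies. With $k=1$ and $r_0 \geq 2d+2$, one processor pebbles each zipper sequentially with no I/O, so $\texttt{OPT}^{(1)} \leq n = 4d(2g+1) + 2 n_0$. With $k=2$ and $r = 2d+4 \geq 2d+2$, each processor pebbles one full zipper with no I/O, both running in parallel, giving $\texttt{OPT}^{(2)} \leq n/2 = 2d(2g+1) + n_0$; this already establishes $\texttt{OPT}^{(2)} \leq \texttt{OPT}^{(1)}$. The harder part is the lower bound on $\texttt{OPT}^{(4)}$. With $r = d+2$, no processor can hold both input groups of any zipper at once, and the $d+1$ in-neighbors of a main chain node $v_i^{(z)}$ ($i \geq 2$) together with $v_i^{(z)}$ itself already saturate the computing processor's memory. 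Pairing consecutive main chain nodes $v_{i-1}^{(z)}, v_i^{(z)}$ whose required input groups alternate, I would argue that the processor computing $v_i^{(z)}$ either coincides with the one that computed $v_{i-1}^{(z)}$ and must swap input groups (costing at least $dg$ via I/O loads, or $d(2g+1)$ via recomputation thanks to the chain extensions) or differs from it and must receive $v_{i-1}^{(z)}$ through slow memory (costing at least one save rule plus one load rule, totaling $\geq 2g$).

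Since each individual I/O rule adds at most one red pebble to any given processor, these costs cannot be amortized across successive main chain nodes of the same zipper; the only amortization available is across the two zippers in parallel, when both save or both load in a single rule. Thus each of the $n_0$ sequential super-steps that advances both main chains must contain at least one compute rule and two I/O rules, for $2g+1$ cost each; summing gives $\texttt{OPT}^{(4)} \geq (2g+1) n_0 - O(1)$. For $n_0 > 3d$ and $g \geq 2$ this exceeds $\texttt{OPT}^{(2)} = 2d(2g+1) + n_0$, yielding $\texttt{OPT}^{(2)} \leq \texttt{OPT}^{(4)}$ as well. The main obstacle will be precisely this $k=4$ lower bound: one has to rule out strategies that mix recomputation with I/O, exploit cross-zipper batching of I/O rules in ways that amortize transfers within a single zipper, or dynamically reassign zippers among processors. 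The chain-extension trick handles the first, and the single-pebble-per-processor nature of each I/O rule combined with the intra-zipper sequentiality of the main chain handles the others.
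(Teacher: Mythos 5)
Your proposal is correct and uses essentially the same construction and argument as the paper: two disjoint copies of the (recomputation-proofed) zipper gadget with $r_0=4d+8$, so that $k=1$ and $k=2$ are I/O-free while $k=4$ forces each input group onto a separate processor and incurs a cost of $2g+1$ per main-chain node. The extra detail you supply for the $k=4$ lower bound (ruling out same-processor group swaps, recomputation, and cross-zipper amortization of I/O rules) is sound and in fact more explicit than the paper's one-line justification of that case.
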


\subparagraph*{``Practical'' Comparison.}

We now compare MPP with $1$ and with $k$ processors, with the processor(s) having the same $r=r_0$ in both cases. Here the larger $k$ value only comes with advantages, so the optimum can never increase; on the other hand, it can easily decrease as before, e.g., for $k$ independent chains. What makes this setting more interesting is that the optimum can decrease by a factor larger than $k$; such a \emph{superlinear speedup} is a well-known (and highly desired) phenomenon in real-world systems. To our knowledge, MPP is the first DAG scheduling or pebbling model that naturally captures this behavior.

\begin{lemma} \label{lem:suplin}
In the practical case, for any $\varepsilon>0$, we can have $\frac{\texttt{OPT}^{(1)}}{\texttt{OPT}^{(2)}} \geq \frac{\Delta_{in}-1}{2} - \varepsilon$.
\end{lemma}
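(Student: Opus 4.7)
The plan is to use a variant of the zipper gadget from Section~\ref{sec:basics}. Given $\varepsilon > 0$, set $d := \Delta_{in} - 1$ and construct the gadget with two input groups $S_1, S_2$ of $d$ nodes each, a main chain $v_1, \ldots, v_{n_0}$ in which each $v_i$ (for $i \geq 2$) has $v_{i-1}$ and all of $S_{j(i)}$ as predecessors (with $j(i) \in \{1,2\}$ alternating), and a length-$2g$ chain prepended in front of each input node $u \in S_1 \cup S_2$ to discourage recomputation. This DAG has maximum in-degree exactly $\Delta_{in}$; I set $r := d + 2$, and will choose $g$ and $n_0$ sufficiently large at the end in terms of $\Delta_{in}$ and $\varepsilon$.

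The upper bound on $\texttt{OPT}^{(2)}$ is given by the explicit schedule already outlined in Section~\ref{sec:basics}: in a preparation phase, $p_1$ computes and retains all of $S_1$ in its fast memory while $p_2$ does the same for $S_2$, in parallel (cost $O(dg)$); then the two processors alternate on the main chain, with the processor holding the required input group computing $v_i$ (cost $1$), saving it to slow memory (cost $g$), and the other processor loading it (cost $g$) before computing $v_{i+1}$. This yields $\texttt{OPT}^{(2)} \leq (2g + 1)\, n_0 + O(dg)$.

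The main technical step is a lower bound of the form $\texttt{OPT}^{(1)} \geq (n_0-1)(dg + 1)$. For any SPP strategy, consider any compute step producing $v_i$ for $i \geq 2$: at that moment the $d+2$ red pebbles are forced (since $r = d+2$ leaves no slack) to sit exactly on $v_{i-1}$, on all $d$ nodes of $S_{j(i)}$, and on $v_i$; in particular no red pebble lies on $S_{j(i-1)}$. Comparing this snapshot with the analogous snapshot for the subsequent compute of $v_{i+1}$ shows that between those two steps a fresh red pebble must be placed on each of the $d$ nodes of $S_{j(i+1)}$. Each placement is either a rule~\ref{rule:single-redbyblue} load from a blue pebble (direct cost $g$) or a rule~\ref{rule:single-compute} recomputation of $u$, which by a straightforward induction along the length-$2g$ prepended chain costs $\geq 2g+1$ compute steps in total; in either case the amortized cost is at least $g$ per new red pebble. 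Charging the $dg$ units from these placements together with the cost-$1$ compute of $v_i$ to each transition and summing over $i=2,\ldots,n_0$ gives the claim.

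Combining the two bounds,
\[
\frac{\texttt{OPT}^{(1)}}{\texttt{OPT}^{(2)}} \;\geq\; \frac{(n_0-1)(dg+1)}{(2g+1)\, n_0 + O(dg)} \;\xrightarrow[n_0\to\infty]{}\; \frac{dg+1}{2g+1} \;=\; \frac{d}{2} - \frac{d/2 - 1}{2g+1},
\]
which further tends to $\frac{d}{2} = \frac{\Delta_{in}-1}{2}$ as $g \to \infty$. Choosing $g$ large enough (depending on $\Delta_{in}$ and $\varepsilon$) to shrink the $\frac{d/2-1}{2g+1}$ gap below $\varepsilon/2$, and then $n_0$ large enough to absorb the $O(dg/n_0)$ overhead, yields the claimed bound $\frac{\Delta_{in}-1}{2} - \varepsilon$. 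The main obstacle is the lower bound: the crucial observation that $r = d+2$ leaves zero slack at each main-chain compute step forces the fast memory onto a prescribed set of $d+2$ pebbles and rules out any clever amortization that would otherwise reuse input-group pebbles across non-adjacent main-chain computes.
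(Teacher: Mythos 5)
Your proposal is correct and follows essentially the same route as the paper: the zipper gadget with $r=d+2$ and recomputation-discouraging chains of length $2g$, the alternating two-processor schedule giving $\texttt{OPT}^{(2)}\leq(2g+1)n_0+O(dg)$, and the asymptotic ratio $\frac{dg+1}{2g+1}\to\frac{\Delta_{in}-1}{2}$ as $g$ grows. Your only addition is to spell out the single-processor lower bound (the zero-slack argument forcing $d$ fresh red pebbles, each costing at least $g$, between consecutive main-chain computes), which the paper merely asserts.
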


The proof idea has already been outlined in Section \ref{sec:basics} with the zipper gadget. With the appropriate $\Delta_{in}$, the lemma can achieve a speedup of any constant factor, already for $k=2$.

\subparagraph*{The number of I/O steps.}

We also briefly study the number of I/O steps separately: let $\texttt{OPT}_{I/O}\,\!^{(k)}$ denote the number of steps \ref{rule:multi-redbyblue} and \ref{rule:multi-bluebyred} in the optimal pebbling with $k$ processors. Our observations here hold for both the fair and the practical case.

Since $k=1$ requires no communication, whereas $k=2$ might, it is easy to construct a DAG where $\texttt{OPT}_{I/O}\,\!^{(1)}=0$, but $\texttt{OPT}_{I/O}\,\!^{(2)}=\Theta(n)$. More surprisingly, we can also have a similar decrease in I/O, i.e., a DAG with $\texttt{OPT}_{I/O}\,\!^{(1)}=\Theta(n)$, but $\texttt{OPT}_{I/O}\,\!^{(2)}=0$. Intuitively, this can happen when the computations can only be distributed in a very imbalanced way, so when we have $2$ processors, it becomes more beneficial to do a lot of recomputations with one of the processors, instead of using I/O steps.

\section{Inapproximability}
\label{sec:inapx}

Given the NP-hardness of MPP, a natural follow-up question is whether the optimum can be approximated to some factor in polynomial time. We show that MPP is also hard from this perspective.

\begin{theorem} \label{th:apx}
MPP is APX-hard: there is a constant $\delta>0$ such that no polynomial-time algorithm can approximate the optimum to a $(1+\delta)$ factor, unless $P=NP$.
\end{theorem}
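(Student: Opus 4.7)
The plan is to establish APX-hardness already for single-processor red-blue pebbling with computation costs (SPP$_\varepsilon$), which then transfers to MPP by taking the same DAG with $k$ independent copies (one per processor), or simply by setting $k=1$. For the SPP level, I would give a gap-preserving reduction from a known APX-hard problem with bounded occurrences, e.g.\ MAX-3SAT(B) or MIN-VertexCover on cubic graphs, where for some fixed $\eta>0$ it is NP-hard to distinguish optimum value $s$ from optimum value at most $(1-\eta)s$ (or, for a minimization version, $s$ from $(1+\eta)s$).

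The construction uses a strengthened zipper gadget per variable, exactly of the kind outlined in Section~\ref{sec:basics}. For each variable $x$, I would build two source groups $S_1(x)$ and $S_2(x)$ of size $d$, each source prepended by a chain of length $2g$ to rule out cheap recomputation, together with a long main chain $v_1(x), v_2(x), \ldots$ whose nodes alternate edges from $S_1(x)$ and $S_2(x)$. The memory bound is set to $r = d+2$, so that any pebbling strategy must commit to keeping exactly one of $S_1(x), S_2(x)$ resident in fast memory while traversing the main chain; the commitment encodes a Boolean assignment to $x$. For each clause $C = \ell_1 \lor \ell_2 \lor \ell_3$ I would attach a small ``witness'' subgadget whose cheap pebbling requires the presence of a red pebble on the side of some variable gadget corresponding to a satisfied literal, and which otherwise incurs an additional penalty of $\Theta(g)$ I/O operations. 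The resulting total cost is of the form $A\cdot n + B \cdot (\text{\# unsatisfied clauses})$ for explicit constants $A$ and $B$ with $B/A$ a sufficiently large constant, so the $\eta$-gap of MAX-3SAT(B) transfers into a $(1+\delta)$ gap in pebbling cost for a fixed $\delta>0$.

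Completeness is routine: given a satisfying assignment, pebble each variable gadget by committing to the corresponding side, advance each main chain cheaply, and use the witness gadget of each clause at base cost. The soundness direction is where the real work lies. I need to show that no strategy can do essentially better than what corresponds to some fixed assignment. Two structural claims will do the job: (i) because of the $2g$-chain extensions before each source, recomputing an input node is always at least as expensive as loading it from slow memory, which rules out ``recomputation tricks'' and forces commitments to occur via honest I/O bookkeeping; (ii) an exchange argument that, given any strategy which switches the resident side of $S_1(x)/S_2(x)$ during the pebbling of the $x$-gadget, produces a no-worse strategy consistent with a fixed commitment per variable (essentially because each switch costs at least $d\cdot g$ I/O, more than any potential saving on the clause gadgets downstream).

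The main obstacle I expect is exactly this soundness step, and in particular ruling out ``global'' strategies that mix many partial commitments across different variable gadgets or exploit shared slow memory in unexpected ways. The $2g$-extension handles recomputation, but one still has to show, gadget by gadget, that the cheapest strategy on a variable gadget given its required interface to clause gadgets is essentially one of the two committed strategies, up to $O(1)$ slack per gadget. Once this is proven, lifting to MPP is comparatively painless: use the same DAG but make $k$ parallel independent copies, so that the optimum cost is $k$ times the SPP optimum up to lower-order terms; any purported $(1+\delta)$-approximation for MPP on this instance yields one for SPP on a single copy by restricting to one processor's work, contradicting the SPP hardness.
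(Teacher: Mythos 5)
Your overall strategy (prove APX-hardness for single-processor pebbling with computation costs via an L-/gap-reduction from a bounded-occurrence APX-hard problem, then lift to MPP) matches the paper's, and your fallback target of vertex cover on cubic graphs is in fact exactly the problem the paper reduces from. However, the concrete construction you sketch has two genuine gaps. First, the variable gadget does not implement the commitment semantics you ascribe to it. With $r=d+2$ and a main chain alternating edges from $S_1(x)$ and $S_2(x)$, there is no ``cheap traversal while committed to one side'': every second chain node requires all $d$ pebbles of the \emph{other} side, so any strategy pays $\Theta(d\cdot g)$ per chain node regardless of any choice, and the two ``assignments'' are symmetric in cost. Moreover, SPP has a single fast memory of size $r$, so the committed sides of many variable gadgets cannot be simultaneously resident, which breaks the premise that a clause gadget can check for ``the presence of a red pebble on the side of a satisfied literal'' at the time the clause is processed. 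Your completeness direction already fails as stated. The paper avoids this trap entirely: it does not encode a choice via residency, but makes the combinatorial structure (an independent set, equivalently a vertex cover) emerge from precedence constraints --- a target node of $S_1(v')$ is a \emph{member} of $S_2(u')$ for each neighbor $u'$, which forces the groups of adjacent graph nodes to be visited non-consecutively, and the only variable cost is the $2gB_0$ paid per non-consecutive pair.

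Second, and this is the issue the paper explicitly identifies as the main obstacle, your cost accounting $A\cdot n + B\cdot(\#\text{unsat})$ only yields a constant-factor gap if the clause-dependent term is $\Theta(n)$. Since computation cost alone contributes at least $n$, this forces every gadget to have size $O(1)$, i.e.\ $n=\Theta(N)$ where $N$ is the instance size. Zipper gadgets with long main chains (their whole point is $n_0=n-O(1)$) make the penalty term $o(n)$ and the gap vanishes; shrinking them to constant size destroys the properties you rely on. The paper's entire technical effort goes into shrinking the group gadgets of \cite{RBpebbling3} to constant size while preserving the reduction (and patching the new problems this creates, e.g.\ target nodes of one group sitting inside another group with a red pebble already on them), plus restricting to $3$-regular graphs so the number of target nodes is $O(N)$. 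Your proposal does not address this normalization at all. A smaller point: your lift to MPP via $k$ independent copies miscomputes the parallel optimum ($k$ copies on $k$ processors cost roughly \emph{one} SPP optimum, not $k$ times it), and one would still need to rule out cross-copy strategies; the paper sidesteps this by proving the statement for $k=1$ directly.
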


We prove this property already for the simplest case of $k=1$, i.e., SPP with computation costs. The proof can then easily be extended to any $k$ value.

\begin{lemma}
SPP with computation costs is APX-hard.
\end{lemma}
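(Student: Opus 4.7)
The plan is a gap-preserving reduction from a constant-gap version of \textsc{Max-3Sat}: by the PCP theorem, there exists $\eta>0$ such that, given a 3CNF formula $\phi$ with $n$ variables and $m$ clauses (of bounded occurrence, if convenient), it is NP-hard to distinguish whether $\phi$ is satisfiable or whether every assignment leaves at least $\eta m$ clauses unsatisfied. I would build from $\phi$ a \cdag $G_\phi$ for SPP with computation costs whose optimum equals some $C_{\mathrm{YES}}$ in the satisfiable case and is at least $(1+\delta)\,C_{\mathrm{YES}}$ in the NO case, for a constant $\delta=\delta(\eta)>0$; this immediately yields APX-hardness of the problem.

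The construction leans on the zipper idea from Section~\ref{sec:basics}. For each variable $x_i$ I would create two ``literal source groups'' $S_i^{0}$ and $S_i^{1}$ of $d$ source nodes each, with every source prefixed by a length-$2g$ chain so that recomputing a source is strictly worse than saving and reloading it, exactly as in the zipper gadget analysis. The memory budget $r$ is set so that the pebbler can keep at most one of $\{S_i^{0}, S_i^{1}\}$ in fast memory at a time, alongside the local workspace of one clause gadget. For each clause $C_j = \ell_{j,1} \vee \ell_{j,2} \vee \ell_{j,3}$ I would append a small ``clause gadget'' whose internal nodes must read from the three literal groups $S_{i(j,t)}^{b(j,t)}$ indexed by the clause's literals. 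The gadget is designed so that if at least one of these three source groups is currently in fast memory, the clause can be pebbled with only $O(1)$ I/O, whereas otherwise it forces an $\Omega(d)$ reload of a missing literal's group from slow memory.

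The cost analysis then turns into a combinatorial accounting. Any pebbling induces an ``in-memory literal choice'' $\alpha_i \in \{0,1\}$ for each variable, defined as the choice held during the processing of $x_i$'s clauses (or via an amortized/majority-rule definition if the pebbler swaps mid-schedule). By construction, the total cost decomposes as a fixed baseline $C_0 = \Theta(m d)$ (compute steps plus the mandatory one-time save of each literal group) plus an additive I/O penalty of $\Theta(d)$ per clause unsatisfied by $\alpha$. With $d$ a sufficiently large constant and $g$ chosen comparable to $d$, the optimum equals $C_0 + \Theta(d) \cdot U(\alpha^*)$, where $U(\alpha^*)$ is the minimum number of unsatisfied clauses. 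Since $C_0 = \Theta(m d)$ while $U \geq \eta m$ in the NO case, this produces a constant multiplicative gap.

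The main obstacle, which I expect to occupy the bulk of the argument, is ruling out pebblings that are not consistent with a single global assignment: in principle the pebbler could swap which of $S_i^{0}$ and $S_i^{1}$ is in memory partway through the schedule, effectively using \emph{different} assignments to satisfy different clauses. I would address this with a structural lemma showing that every such swap costs $\Omega(d)$ I/O (since the $2g$-chain prefixes forbid cheap recomputation, any swap must go through slow memory), and then an amortization / exchange argument: one may replace an arbitrary schedule by one committed to a single global assignment $\alpha$ without increasing its cost by more than a vanishing $o(1)$ fraction of $C_0$, because every saved swap both costs and saves $\Theta(d)$ but can benefit at most one clause at a time. Once this reduction to a global assignment is established, the MAX-3Sat gap transfers cleanly and closes the APX-hardness proof.
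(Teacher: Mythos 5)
There is a genuine gap, and it sits exactly where your construction needs to encode a disjunction. In red-blue pebbling every dependency is conjunctive: a node can be computed only when \emph{all} of its predecessors carry red pebbles, and there is no way for a fixed \cdag to offer alternative sets of predecessors. Your clause gadget is specified to be ``pebbled with only $O(1)$ I/O if at least one of the three literal groups is currently in fast memory,'' but any gadget that transitively depends on all three groups $S_{i(j,t)}^{b(j,t)}$ must bring \emph{each} of them into fast memory at some point near its processing, at a cost of $\Theta(dg)$ per group not already resident. The realizable accounting is therefore ``pay $dg$ for each \emph{false literal occurrence} of the clause under the resident assignment,'' not ``pay per unsatisfied clause.'' Summed over all clauses this equals $dg\,(3m - \#\{\text{true literal occurrences}\})$, which is minimized by a per-variable majority vote over occurrences and carries no information about satisfiability; the YES/NO gap you want disappears. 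Your swap-amortization lemma (which is fine in spirit, given bounded occurrence) addresses a secondary obstacle, but not this one. A further, fixable but unaddressed issue is that you never pin down the memory contents: with $r\approx nd+O(1)$ nothing forces the pebbler to hold exactly one group per variable rather than, say, both groups of one variable and neither group of another, so the extraction of an assignment $\alpha$ from a pebbling is not well defined without an additional mechanism.

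The paper avoids disjunctive gadgets entirely by reducing from vertex cover in $3$-regular graphs (APX-hard) rather than from gap MAX-3SAT. There, every ``target node'' has exactly $r-1$ in-neighbors, so computing it requires all red pebbles and completely determines the memory state at that moment; the penalty of $2gB_0$ is incurred per graph node whose two groups $S_1(v')$ and $S_2(v')$ are visited \emph{non-consecutively} in the schedule, and the edge constraints force the non-consecutive set to be a vertex cover. The hardness thus comes from an ordering/scheduling structure rather than from a logical OR, which is what makes it implementable in a purely conjunctive pebbling model. If you want to salvage a SAT-based route, you would need a gadget where the penalty for an unsatisfied clause is an extra forced eviction-and-reload of some \emph{other} resident group (i.e., an ordering conflict) rather than a per-literal load count --- at which point you are essentially rebuilding the vertex-cover-style argument.
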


\renewcommand*{\proofname}{Proof sketch}

\begin{proof}
Our proof is motivated by a construction in \cite{RBpebbling3}, which reduces one-shot SPP to vertex cover on a graph of $N$ nodes to show another property. However, this proof considers SPP, where (R3-S) compute steps are free, and hence the proof is rather careless with the number of nodes, using very large gadgets to ensure asymptotic behavior. As a result, the I/O cost is only $\Theta(N)=o(n)$. With computation costs adding up to at least $n$ in SPP, this makes the I/O costs asymptotically irrelevant. Hence the reduction does not work directly for pebbling models with computation cost.

To adapt this idea, we substantially decrease the size of gadgets to ensure that $n=O(N)$, and hence the I/O cost of any solution is in $\Theta(N)=\Theta(n)$. By design, in any reasonable solution, a specific part of this I/O cost is proportional to the size of a vertex cover in the underlying graph; as such, an approximation for the best pebbling could be transformed into an approximation of vertex cover. The main ingredients of the proof are as follows:
\begin{itemize}[topsep=3pt, itemsep=-2pt]
\item we show how to modify the node gadgets in the construction of \cite{RBpebbling3} to reduce their size to constant, thus ensuring that our construction has $n=\Theta(N)$ altogether,
\item we execute some further changes in the gadgets to avoid some other undesired properties, which could be ignored in the original proof of \cite{RBpebbling3} due to their asymptotic analysis,
\item we then show that in any reasonable pebbling, the I/O cost has a linear component that is proportional to the size of a vertex cover in the underlying graph.
\end{itemize}
Altogether, this modified construction allows for an L-reduction to the vertex cover problem in $3$-regular graphs, which is known to be APX-hard \cite{vcapprox}. This implies that it is already NP-hard to find a pebbling of cost at most $(1+\delta) \cdot \texttt{OPT}$ for an appropriate $\delta>0$. 
\end{proof}

In general, the approximability of MPP is an intricate question, as the total cost is often dominated by computation costs, which are at least $\frac{n}{k}$. Thus, even if we have two solutions where I/O costs differ by a large factor, this can translate to only a negligible difference in total cost. To better capture these differences, we introduce an alternative cost metric.

\begin{definition}
Given an MPP pebbling strategy of cost $C$, its \emph{surplus cost} is $C-\frac{n}{k}$.
\end{definition}

Intuitively speaking, surplus cost ignores this unavoidable cost of $\frac{n}{k}$, and instead only measures ``imperfections'' in the pebbling, such as I/O steps, work imbalance between the processors, and recomputations. In terms of this new metric, finding a good solution is much more challenging: it is not possible to approximate the surplus cost in MPP to any finite factor. We again show this via proving an inapproximability result in standard one-shot SPP (i.e.\ minimizing I/O costs, with computations being free): we show that it is already NP-hard here to distinguish the cases when $\texttt{OPT}=0$ and when $\texttt{OPT} \geq n^{1-\varepsilon}$, for any $\varepsilon>0$.

\begin{theorem} \label{th:inapprox}
In one-shot SPP, it is NP-hard to approximate the optimum to any finite multiplicative factor, and to any additive $n^{1-\varepsilon}$ term, in polynomial time (for any $\varepsilon>0$).
\end{theorem}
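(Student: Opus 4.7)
The plan is to prove inapproximability by reducing from an NP-hard decision problem and then amplifying the gap by taking many disjoint copies. The key observation I would use is that one-shot SPP with $0$ I/O is exactly equivalent to one-shot (no-recomputation) black pebbling of the DAG using the red-memory size as the pebble budget, so the classical register-allocation-style NP-hardness of deciding whether a given DAG can be pebbled with $r$ pebbles in the one-shot black pebbling game already implies that it is NP-hard to decide whether $\texttt{OPT} = 0$ versus $\texttt{OPT} \geq 1$ in one-shot SPP. This immediately yields the multiplicative part of the statement: any polynomial-time $\alpha$-approximation for a finite $\alpha$ must output $0$ on YES instances (as $\texttt{OPT}=0$) and strictly positive on NO instances (as $\texttt{OPT} \geq 1$), hence decide the NP-hard problem.

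For the additive $n^{1-\varepsilon}$ part, I would fix any $\varepsilon > 0$, take a base instance $(G_0, r)$ from the above reduction with $|V(G_0)| = s$, and amplify by forming $G$ as the disjoint union of $T := \lceil s^{(1-\varepsilon)/\varepsilon}\rceil + 1$ copies of $G_0$, keeping the memory bound $r$; then $n = T \cdot s$ and $T > n^{1-\varepsilon}$. In the YES case, I would pebble the copies sequentially, reusing memory between them (with a shallow merging tree or careful scheduling handling the $T$ sinks), giving $\texttt{OPT} = 0$. For the NO case, I would argue that each individual copy $i$ forces at least one I/O step: restricting any valid pebbling of $G$ to the steps that involve nodes of copy $i$ yields a one-shot pebbling of $G_0$ that uses at most $r$ red pebbles (since the red pebbles on copy-$i$ nodes are always a subset of all red pebbles), and a restriction with zero I/O steps would contradict the NO property of $G_0$. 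Hence $\texttt{OPT} \geq T > n^{1-\varepsilon}$, and any additive-$n^{1-\varepsilon}$ approximation would distinguish the two cases.

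The main technical obstacle, in my view, will be cleanly handling the ``sink accumulation'' when reusing memory between copies in the YES case: pebbling a single copy in the base already needs the full $r$ red pebbles, so one cannot simply keep the $T-1$ previously computed sinks in red memory while pebbling the next copy. The cleanest fix I can see is to design the base reduction so that its YES-strategy runs with one fewer pebble, leaving a free slot to pipeline a ``carry'' pebble through a shallow merging binary tree over the $T$ sinks, introducing only $O(\log T)$ extra memory overhead that can be absorbed by slightly strengthening the base NP-hardness; alternatively, one could establish the base hardness in a ``memory-gap'' form so that NO instances still require memory at least $r + O(\log T)$, making the merging tree's overhead harmless for the NO analysis. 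Everything else should reduce to direct bookkeeping once this technicality is resolved.
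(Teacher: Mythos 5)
There are two genuine gaps here, one of which you partly flag yourself. First, the base hardness: you invoke ``the classical register-allocation-style NP-hardness'' of one-shot black pebbling as if it directly gives NP-hardness of deciding $\texttt{OPT}=0$ vs.\ $\texttt{OPT}\geq 1$ in one-shot SPP. The classical result (Sethi) is for the pebble game \emph{with sliding}, where computing a node may move a pebble up from an in-neighbor; in the SPP rule (R3-S) there is no sliding, so a node with $d$ in-neighbors needs $d+1$ simultaneous pebbles rather than $d$. The paper explicitly notes that adapting the sliding-based proof is nontrivial and instead builds a new reduction (from clique, via the tower/level gadgets) precisely to establish the $0$-vs-$1$ hardness in the non-sliding model. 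Your proposal treats this as a citation, but the cited result is for a different game.

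Second, and more seriously, the amplification by \emph{disjoint union} does not work, and your proposed repairs do not close the hole. In a zero-cost one-shot pebbling there are no blue pebbles and no recomputation, so every sink must retain its red pebble from the moment it is computed until the end; with $T$ disjoint copies this forces at least $T$ red pebbles to sit permanently on sinks, which is impossible for $T\approx n^{1-\varepsilon}\gg r$. You correctly identify this ``sink accumulation'' obstacle, but the fix you sketch (a merging tree plus a strengthened, memory-gap version of the base hardness) is exactly the unproved part: you would need NP-hardness of distinguishing ``pebbleable with $r-O(\log T)$'' from ``not pebbleable with $r$,'' which does not follow from the $0$-vs-$1$ hardness you start from. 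The paper's amplification avoids all of this by chaining the copies \emph{sequentially}: it first normalizes the construction to have a single source and a single sink, then identifies the sink of copy $i$ with the source of copy $i+1$. Then at most one copy is ever ``live,'' the YES case still has cost $0$, and in the NO case each copy independently contributes at least one unit of cost, giving $\texttt{OPT}\geq T>n^{1-\varepsilon}$. Your NO-case restriction argument is fine; it is the YES case that your decomposition cannot support without the sequential chaining.
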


\begin{proof}
The main part of our proof is to develop a \cdag construction where it is already NP-hard to decide whether one-shot SPP has $\texttt{OPT}=0$ or $\texttt{OPT} \geq 1$. After this, some technical steps allow for extending the same result to an optimum of either $0$ or $n^{1-\varepsilon}$.

When looking for a solution of cost $0$, pebbling becomes notably simpler: blue pebbles cannot be used at all, since (R1-S) and (R2-S) incur cost. The use of (R4-S) is also simple: without recomputation, a red pebble should be deleted exactly when all out-neighbors have been pebbled. As such, a pebbling is characterized by the order of the $n$ computation steps.

We point out that pebbling with only (R3-S) and (R4-S) is essentially equivalent to one-shot \emph{black pebbling}, which is long known to be NP-hard \cite{blackNPhard}; as such, the novelty of this part of our theorem is somewhat limited. However, \cite{blackNPhard} considers a slightly different variant of pebbling, where compute steps can also ``slide'' a pebble from an in-neighbor. It may also be possible to adapt the proof in \cite{blackNPhard} to our case with further work; instead, we present a novel, somewhat simpler reduction based on a different problem, and we also devise new gadgets that might be of independent interest for future works on pebbling.

Our construction is organized into consecutive chains of \emph{level gadgets} that form \emph{towers}. Intuitively, a pebbling strategy always needs to keep one level of each tower in fast memory, and it can `proceed’ to the next level, computing the nodes of the next level and deleting pebbles from the current level. There is no benefit to having partially pebbled levels in our DAG, and hence intuitively, each level can be considered a single entity in our analysis. To pebble the DAG correctly, we need to go through the towers of levels in a carefully designed order, to ensure that the current set of levels never requires more than $r$ pebbles altogether. The concrete level gadgets are shown in Figure \ref{fig:level}, and discussed in detail in Appendix \ref{app:inapprox}.

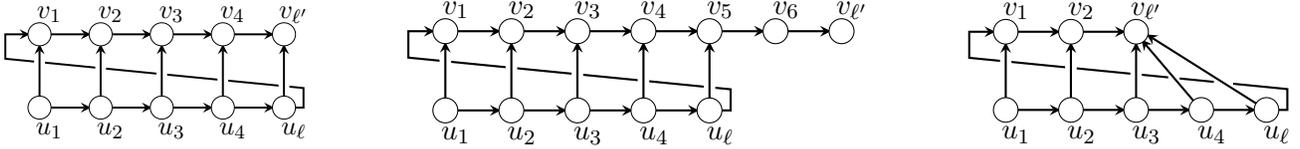
\begin{figure}[t!]
    \vspace{-7pt}
    \centering
    \begin{minipage}{.25\textwidth}
      \centering
      \resizebox{1.0\textwidth}{!}{
      \begin{tikzpicture}

    \draw[thick, arrows=-stealth] (100pt,0pt) -- (108pt,0pt) -- (108pt,8pt) -- (-14pt,20pt) -- (-14pt,30pt) -- (-4pt,30pt);

    \draw[white, fill=white] (-2pt,8pt) rectangle (2pt,22pt);
    \draw[white, fill=white] (23pt,8pt) rectangle (27pt,22pt);
    \draw[white, fill=white] (48pt,8pt) rectangle (52pt,22pt);
    \draw[white, fill=white] (73pt,8pt) rectangle (77pt,22pt);
    \draw[white, fill=white] (98pt,8pt) rectangle (102pt,22pt);

    \begin{scope}[thick, arrows=-stealth]
    \draw (0pt,0pt) -- (21pt,0pt);
    \draw (25pt,0pt) -- (46pt,0pt);
    \draw (50pt,0pt) -- (71pt,0pt);
    \draw (75pt,0pt) -- (96pt,0pt);
    
    \draw (0pt,30pt) -- (21pt,30pt);
    \draw (25pt,30pt) -- (46pt,30pt);
    \draw (50pt,30pt) -- (71pt,30pt);
    \draw (75pt,30pt) -- (96pt,30pt);

    \draw (0pt,0pt) -- (0pt,26pt);
    \draw (25pt,0pt) -- (25pt,26pt);
    \draw (50pt,0pt) -- (50pt,26pt);
    \draw (75pt,0pt) -- (75pt,26pt);
    \draw (100pt,0pt) -- (100pt,26pt);

    \end{scope}

    \draw[black, fill=white] (0pt,0pt) circle (1ex);
    \draw[black, fill=white] (25pt,0pt) circle (1ex);
    \draw[black, fill=white] (50pt,0pt) circle (1ex);
    \draw[black, fill=white] (75pt,0pt) circle (1ex);
    \draw[black, fill=white] (100pt,0pt) circle (1ex);

    \draw[black, fill=white] (0pt,30pt) circle (1ex);
    \draw[black, fill=white] (25pt,30pt) circle (1ex);
    \draw[black, fill=white] (50pt,30pt) circle (1ex);
    \draw[black, fill=white] (75pt,30pt) circle (1ex);
    \draw[black, fill=white] (100pt,30pt) circle (1ex);

    \node[anchor=center] at (4pt,-9.5pt) {$u_1$};
    \node[anchor=center] at (29pt,-9.5pt) {$u_2$}; 
    \node[anchor=center] at (54pt,-9.5pt) {$u_3$}; 
    \node[anchor=center] at (79pt,-9.5pt) {$u_4$}; 
    \node[anchor=center] at (104pt,-9.5pt) {$u_{\ell}$}; 

    \node[anchor=center] at (4pt,38pt) {$v_1$};
    \node[anchor=center] at (29pt,38pt) {$v_2$}; 
    \node[anchor=center] at (54pt,38pt) {$v_3$}; 
    \node[anchor=center] at (79pt,38pt) {$v_4$}; 
    \node[anchor=center] at (104pt,38pt) {$v_{\ell'}$};

\end{tikzpicture}}
    \end{minipage}
    \hspace{.04\textwidth}
    \begin{minipage}{.37\textwidth}
      \centering
      \resizebox{1.0\textwidth}{!}{
      \begin{tikzpicture}

    \draw[thick, arrows=-stealth] (100pt,0pt) -- (108pt,0pt) -- (108pt,8pt) -- (-14pt,20pt) -- (-14pt,30pt) -- (-4pt,30pt);

    \draw[white, fill=white] (-2pt,8pt) rectangle (2pt,22pt);
    \draw[white, fill=white] (23pt,8pt) rectangle (27pt,22pt);
    \draw[white, fill=white] (48pt,8pt) rectangle (52pt,22pt);
    \draw[white, fill=white] (73pt,8pt) rectangle (77pt,22pt);
    \draw[white, fill=white] (98pt,8pt) rectangle (102pt,22pt);

    \begin{scope}[thick, arrows=-stealth]
    \draw (0pt,0pt) -- (21pt,0pt);
    \draw (25pt,0pt) -- (46pt,0pt);
    \draw (50pt,0pt) -- (71pt,0pt);
    \draw (75pt,0pt) -- (96pt,0pt);
    
    \draw (0pt,30pt) -- (21pt,30pt);
    \draw (25pt,30pt) -- (46pt,30pt);
    \draw (50pt,30pt) -- (71pt,30pt);
    \draw (75pt,30pt) -- (96pt,30pt);
    \draw (100pt,30pt) -- (121pt,30pt);
    \draw (125pt,30pt) -- (146pt,30pt);

    \draw (0pt,0pt) -- (0pt,26pt);
    \draw (25pt,0pt) -- (25pt,26pt);
    \draw (50pt,0pt) -- (50pt,26pt);
    \draw (75pt,0pt) -- (75pt,26pt);
    \draw (100pt,0pt) -- (100pt,26pt);

    \end{scope}

    \draw[black, fill=white] (0pt,0pt) circle (1ex);
    \draw[black, fill=white] (25pt,0pt) circle (1ex);
    \draw[black, fill=white] (50pt,0pt) circle (1ex);
    \draw[black, fill=white] (75pt,0pt) circle (1ex);
    \draw[black, fill=white] (100pt,0pt) circle (1ex);

    \draw[black, fill=white] (0pt,30pt) circle (1ex);
    \draw[black, fill=white] (25pt,30pt) circle (1ex);
    \draw[black, fill=white] (50pt,30pt) circle (1ex);
    \draw[black, fill=white] (75pt,30pt) circle (1ex);
    \draw[black, fill=white] (100pt,30pt) circle (1ex);
    \draw[black, fill=white] (125pt,30pt) circle (1ex);
    \draw[black, fill=white] (150pt,30pt) circle (1ex);

    \node[anchor=center] at (4pt,-9.5pt) {$u_1$};
    \node[anchor=center] at (29pt,-9.5pt) {$u_2$}; 
    \node[anchor=center] at (54pt,-9.5pt) {$u_3$}; 
    \node[anchor=center] at (79pt,-9.5pt) {$u_4$}; 
    \node[anchor=center] at (104pt,-9.5pt) {$u_{\ell}$}; 

    \node[anchor=center] at (4pt,38pt) {$v_1$};
    \node[anchor=center] at (29pt,38pt) {$v_2$}; 
    \node[anchor=center] at (54pt,38pt) {$v_3$}; 
    \node[anchor=center] at (79pt,38pt) {$v_4$}; 
    \node[anchor=center] at (104pt,38pt) {$v_5$};
    \node[anchor=center] at (129pt,38pt) {$v_6$}; 
    \node[anchor=center] at (154pt,38pt) {$v_{\ell'}$};

\end{tikzpicture}}
    \end{minipage}
    \hspace{.04\textwidth}
    \begin{minipage}{.265\textwidth}
      \centering
      \resizebox{1.0\textwidth}{!}{
      \begin{tikzpicture}

    \draw[thick, arrows=-stealth] (100pt,0pt) -- (108pt,0pt) -- (108pt,8pt) -- (-14pt,20pt) -- (-14pt,30pt) -- (-4pt,30pt);

    \draw[white, fill=white] (-2pt,8pt) rectangle (2pt,22pt);
    \draw[white, fill=white] (23pt,8pt) rectangle (27pt,22pt);
    \draw[white, fill=white] (48pt,8pt) rectangle (52pt,22pt);
    \draw[white, fill=white] (62.5pt,8pt) rectangle (66.5pt,22pt);
    \draw[white, fill=white] (80pt,8pt) rectangle (87pt,22pt);

    \begin{scope}[thick, arrows=-stealth]
    \draw (0pt,0pt) -- (21pt,0pt);
    \draw (25pt,0pt) -- (46pt,0pt);
    \draw (50pt,0pt) -- (71pt,0pt);
    \draw (75pt,0pt) -- (96pt,0pt);
    
    \draw (0pt,30pt) -- (21pt,30pt);
    \draw (25pt,30pt) -- (46pt,30pt);

    \draw (0pt,0pt) -- (0pt,26pt);
    \draw (25pt,0pt) -- (25pt,26pt);
    \draw (50pt,0pt) -- (50pt,26pt);
    \draw (75pt,0pt) -- (52pt,27pt);
    \draw (100pt,0pt) -- (54pt,29pt);

    \end{scope}

    \draw[black, fill=white] (0pt,0pt) circle (1ex);
    \draw[black, fill=white] (25pt,0pt) circle (1ex);
    \draw[black, fill=white] (50pt,0pt) circle (1ex);
    \draw[black, fill=white] (75pt,0pt) circle (1ex);
    \draw[black, fill=white] (100pt,0pt) circle (1ex);

    \draw[black, fill=white] (0pt,30pt) circle (1ex);
    \draw[black, fill=white] (25pt,30pt) circle (1ex);
    \draw[black, fill=white] (50pt,30pt) circle (1ex);

    \node[anchor=center] at (4pt,-9.5pt) {$u_1$};
    \node[anchor=center] at (29pt,-9.5pt) {$u_2$}; 
    \node[anchor=center] at (54pt,-9.5pt) {$u_3$}; 
    \node[anchor=center] at (79pt,-9.5pt) {$u_4$}; 
    \node[anchor=center] at (104pt,-9.5pt) {$u_{\ell}$}; 

    \node[anchor=center] at (4pt,38pt) {$v_1$};
    \node[anchor=center] at (29pt,38pt) {$v_2$}; 
    \node[anchor=center] at (54pt,38pt) {$v_{\ell'}$};

\end{tikzpicture}}
    \end{minipage}
    \vspace{-7pt}
    \caption{Examples of consecutive levels. The 1st level (bottom row) always has size $\ell=5$. The 2nd level (top row) has $\ell'=5$ on the left side, $\ell'=7$ in the middle, and $\ell'=3$ on the right.}
    \label{fig:level}
\end{figure}

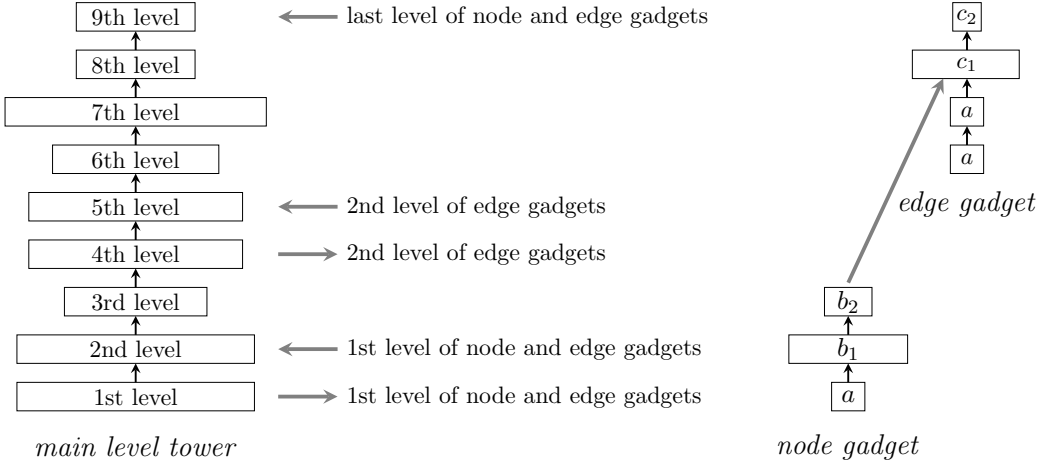
\begin{figure}
\vspace{7pt}
    \centering
    \resizebox{0.8\textwidth}{!}{
    \begin{tikzpicture}

    \node[anchor=center] at (0pt,-15pt) {\large \textit{main level tower}};

    \draw[black, fill=white] (-50pt,0pt) rectangle (50pt,12pt);
    \draw[black, fill=white] (-50pt,20pt) rectangle (50pt,32pt);
    \draw[black, fill=white] (-30pt,40pt) rectangle (30pt,52pt);
    \draw[black, fill=white] (-45pt,60pt) rectangle (45pt,72pt);
    \draw[black, fill=white] (-45pt,80pt) rectangle (45pt,92pt);
    \draw[black, fill=white] (-35pt,100pt) rectangle (35pt,112pt);
    \draw[black, fill=white] (-55pt,120pt) rectangle (55pt,132pt);
    \draw[black, fill=white] (-25pt,140pt) rectangle (25pt,152pt);
    \draw[black, fill=white] (-25pt,160pt) rectangle (25pt,172pt);

    \node[anchor=center] at (0pt,6pt) {\small 1st level};
    \node[anchor=center] at (0pt,26pt) {\small 2nd level};
    \node[anchor=center] at (0pt,46pt) {\small 3rd level};
    \node[anchor=center] at (0pt,66pt) {\small 4th level};
    \node[anchor=center] at (0pt,86pt) {\small 5th level};
    \node[anchor=center] at (0pt,106pt) {\small 6th level};
    \node[anchor=center] at (0pt,126pt) {\small 7th level};
    \node[anchor=center] at (0pt,146pt) {\small 8th level};
    \node[anchor=center] at (0pt,166pt) {\small 9th level};

    \begin{scope}[thick, arrows=-stealth]
    \draw (0pt,12pt) -- (0pt,20pt);
    \draw (0pt,32pt) -- (0pt,40pt);
    \draw (0pt,52pt) -- (0pt,60pt);
    \draw (0pt,72pt) -- (0pt,80pt);
    \draw (0pt,92pt) -- (0pt,100pt);
    \draw (0pt,112pt) -- (0pt,120pt);
    \draw (0pt,132pt) -- (0pt,140pt);
    \draw (0pt,152pt) -- (0pt,160pt);
    \end{scope}

    \draw[ultra thick, gray, arrows=-stealth] (60pt,6pt) -- (85pt,6pt);
    \node[anchor=west] at (85pt,6pt) {\small 1st level of node and edge gadgets};
    \draw[ultra thick, gray, arrows=-stealth] (85pt,26pt) -- (60pt,26pt);
    \node[anchor=west] at (85pt,26pt) {\small 1st level of node and edge gadgets};
    \draw[ultra thick, gray, arrows=-stealth] (60pt,66pt) -- (85pt,66pt);
    \node[anchor=west] at (85pt,66pt) {\small 2nd level of edge gadgets};
    \draw[ultra thick, gray, arrows=-stealth] (85pt,86pt) -- (60pt,86pt);
    \node[anchor=west] at (85pt,86pt) {\small 2nd level of edge gadgets};
    \draw[ultra thick, gray, arrows=-stealth] (85pt,166pt) -- (60pt,166pt);
    \node[anchor=west] at (85pt,166pt) {\small last level of node and edge gadgets};

    \node[anchor=center] at (300pt,-15pt) {\large \textit{node gadget}};

    \draw[black, fill=white] (293pt,00pt) rectangle (307pt,12pt);
    \draw[black, fill=white] (275pt,20pt) rectangle (325pt,32pt);
    \draw[black, fill=white] (290pt,40pt) rectangle (310pt,52pt);

    \node[anchor=center] at (300pt,6pt) {$a$};
    \node[anchor=center] at (300pt,26pt) {$b_1$};
    \node[anchor=center] at (300pt,46pt) {$b_2$};
    
    \begin{scope}[thick, arrows=-stealth]
    \draw (300pt,12pt) -- (300pt,20pt);
    \draw (300pt,32pt) -- (300pt,40pt);
    \end{scope}

    \node[anchor=center] at (350pt,87pt) {\large \textit{edge gadget}};

    \draw[black, fill=white] (343pt,100pt) rectangle (357pt,112pt);
    \draw[black, fill=white] (343pt,120pt) rectangle (357pt,132pt);
    \draw[black, fill=white] (327pt,140pt) rectangle (372pt,152pt);
    \draw[black, fill=white] (344pt,160pt) rectangle (356pt,172pt);

    \node[anchor=center] at (350pt,106pt) {\small $a$};
    \node[anchor=center] at (350pt,126pt) {\small $a$};
    \node[anchor=center] at (350pt,146pt) {\small $c_1$};
    \node[anchor=center] at (350pt,166pt) {\small $c_2$};
    
    \begin{scope}[thick, arrows=-stealth]
    \draw (350pt,112pt) -- (350pt,120pt);
    \draw (350pt,132pt) -- (350pt,140pt);
    \draw (350pt,152pt) -- (350pt,160pt);
    \end{scope}

    \draw[ultra thick, gray, arrows=-stealth] (300pt,54pt) -- (340pt,140pt);

\end{tikzpicture}}
    \vspace{-5pt}
    \caption{High-level sketch of our construction: the main tower on the left, and node/edge gadgets on the right, with the level sizes and the dependency between incident node-edge pairs also shown.}
    \label{fig:inapprox}
    \vspace{-5pt}
\end{figure}

Using these gadgets, we present a reduction from finding a clique of size $q$ in a graph $G'$. Our DAG has a main tower, which is used to control the number of available pebbles at all times. We also add smaller tower gadgets for each node and edge of $G'$, as illustrated in Figure \ref{fig:inapprox}. Furthermore, we draw edges between specific levels of the node and edge gadgets whenever a node is incident to an edge, and between specific levels of the main tower and all node and edge gadgets.

In the beginning, we can only pebble the 1st level of the main tower, then the first level of each node/edge gadget, and then the 2nd level of the main tower. Proceeding to the 3rd level of the main tower then frees up many pebbles; we can use these to proceed to the 2nd (and 3rd) level of node gadgets. However, due to $b_2>a$, we can only move to the 3rd level of at most $q$ node gadgets; otherwise, we would not have enough remaining pebbles to reach the 4th level of the main tower afterwards. Thus intuitively, any pebbling strategy corresponds to selecting at most $q$ node gadgets.

The next levels of the main tower allow/require us to move to the 2nd level in edge gadgets. The 6th level again provides more free pebbles, so we can move to the 3rd (and then 4th) level in any edge gadget, provided that both of its incident nodes were chosen earlier. The next (7th) level then allows the fewest free pebbles: we can only proceed to this level if we reached the 4th level in at least ${q \choose 2}$ edge gadgets. This is only possible if we found a set of $q$ nodes in $G'$ that span ${q \choose 2}$ edges. Finally, after this point, the pebbling is easy to finish. Hence a pebbling of cost $0$ exists if and only if $G'$ has a clique of size $q$.
\end{proof}

The theorem directly carries over to surplus cost in MPP: given the same DAG in MPP, a pebbling of cost $0$ will translate to a surplus cost of $0$, while a pebbling with $n^{1-\varepsilon}$ I/O steps (or the same amount of recomputation) will translate to a surplus cost of at least $n^{1-\varepsilon}$.

\begin{corollary} \label{cor:surplus_cost}
In MPP, it is NP-hard to approximate the optimum surplus cost to any finite multiplicative factor, and to any additive $n^{1-\varepsilon}$ term, in polynomial time (for any $\varepsilon>0$).
\end{corollary}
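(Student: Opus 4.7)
The plan is to reuse the DAG construction from the proof of Theorem~\ref{th:inapprox} in MPP with $k=1$, which is a legitimate instance of MPP and hence suffices for establishing MPP hardness. With $k=1$, each compute step processes exactly one node, so any pebbling has total cost $(n+R) + g \cdot I$, where $R$ counts recomputations (compute steps beyond the $n$ needed to pebble the DAG for the first time) and $I$ counts I/O steps; hence the surplus cost equals exactly $R + g \cdot I$. Every node is an ancestor of some sink in the construction of Theorem~\ref{th:inapprox}, so must be computed at least once, making this count meaningful.

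For the YES side of the reduction (the underlying clique instance is solvable), Theorem~\ref{th:inapprox} provides a zero-I/O one-shot pebbling; reinterpreted directly as an MPP strategy, it uses $n$ compute steps and no I/O, giving surplus cost $0$. For the NO side, I plan to show that any MPP pebbling has surplus cost at least $n^{1-\varepsilon}/2$. Given an arbitrary MPP pebbling with parameters $R$ and $I$, I will construct a one-shot SPP pebbling by inserting a save step immediately after the first computation of every recomputed node $x$, and replacing every subsequent recomputation of $x$ by a load from slow memory. The resulting strategy is one-shot (each node is computed exactly once), and its I/O count is at most $I + m + R \leq I + 2R$, where $m \leq R$ denotes the number of distinct recomputed nodes. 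It remains valid: each inserted save is applied when $x$ still carries a red pebble from its first computation, each load needs only one free red slot instead of the full predecessor set required by the recomputation it replaces, and the red pebble sets at every time step coincide with those of the original strategy, so the memory bound is preserved throughout.

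Applying Theorem~\ref{th:inapprox} to this one-shot pebbling yields $I + 2R \geq n^{1-\varepsilon}$, and since $g \geq 1$ a short arithmetic check gives $R + g \cdot I \geq (I + 2R)/2 \geq n^{1-\varepsilon}/2$. Combining the two cases, distinguishing surplus cost $0$ from surplus cost at least $n^{1-\varepsilon}/2$ is NP-hard on these instances. This immediately rules out approximating the optimum surplus cost to any finite multiplicative factor (since a finite-factor approximation of $0$ must return $0$), and by absorbing the constant $1/2$ into the exponent (replacing $\varepsilon$ by any slightly larger $\varepsilon' > \varepsilon$) also rules out any additive $n^{1-\varepsilon'}$ approximation.

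I expect the main technical obstacle to be carefully tracking the correspondence between steps in the MPP strategy and the rewritten one-shot SPP strategy, in particular justifying memory validity of the inserted save and load steps across the entire rewritten sequence; this reduces to the observation above that red pebble usage never exceeds that of the original pebbling at any intermediate configuration.
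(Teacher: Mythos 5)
Your proposal is correct and follows essentially the same route as the paper: run the Theorem~\ref{th:inapprox} construction in MPP with $k=1$, observe that the surplus cost is exactly the recomputation count plus $g$ times the I/O count, and conclude that the YES case gives surplus $0$ while the NO case forces surplus $\Omega(n^{1-\varepsilon})$. Your explicit conversion of an arbitrary MPP pebbling into a one-shot SPP pebbling (save after first computation, load in place of each recomputation) is a cleaner and more rigorous justification of the NO-case lower bound than the paper's brief per-copy argument, and the factor $1/2$ you lose is correctly absorbed by adjusting $\varepsilon$.
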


\newpage

\bibliography{references}

\newpage

\appendix

\section{Further model discussion} \label{app:models}

\subsection{Other Model Variations} \label{sec:model_variants}

We briefly discuss several (minor) variations of MPP. Most of our results are easily adapted to these slightly different settings.

Similarly to SPP, there are multiple ways to define the initial and terminal configurations in MPP. For instance, instead of having source nodes that are freely computable with \ref{rule:multi-compute}, another SPP variant assumes that source nodes cannot be computed at all; instead, they have a blue pebble initially, and their data needs to be loaded with \ref{rule:multi-bluebyred}. Similarly, instead of requiring all sink nodes to have a pebble of any color in the end, some SPP variants explicitly require that 
all sink nodes must have a blue pebble (i.e., sinks must all be saved to slow memory). Since the majority of our constructions have only $O(1)$ sources and sinks, these changes only add a $O(1)$ term to the cost of any pebbling.

Another alternative for MPP is to assume that processors cannot only exchange data via the slow memory, but also directly with each other, i.e., replacing a red pebble of $p_1$ directly by a red pebble of $p_2$. For such an \textit{MPP with direct sending}, we can replace the rules \ref{rule:multi-redbyblue} and \ref{rule:multi-bluebyred} by a new rule (R1-M)* which allows us to take any set of different-colored pebbles (blue and/or different shades of red), and replace it by any other set of different-colored pebbles; we define this formally in Appendix \ref{app:models}. An application of (R1-M)* corresponds to a single communication step in a fully connected network topology between the $k$ processors and the slow memory, with every participant sending and receiving at most one value. In the simplest case of $r=\infty$, this is essentially equivalent to DAG scheduling in the BSP model without latency \cite{BSPscheduling}. This demonstrates that with minor variations, MPP is indeed a generalization of both SPP and DAG scheduling problems. 

Formally, in the MPP variant with \emph{direct sending}, for $m \in \mathbb{Z}$ with $m \leq k+1$, let us define an \emph{extended \colorful selection} as a injective function $f_m: [m] \to [k] \cup \{ 0 \}$. Instead of rules \ref{rule:multi-redbyblue} and \ref{rule:multi-bluebyred}, we only have a single I/O rule, as follows:
\begin{enumerate}[label=(R\arabic*-M*), align=right, leftmargin=4.5em]
 \item Consider an extended \colorful selections $f_m$, and vertices $v_1,v_2, \ldots, v_m$ such that $f_m$ satisfies $v_i \! \in \! R^{f_{m}(i)}$ if $f_{m}(i) \neq 0$ and $v_i \! \in \! B$ if $f_{m}(i) = 0$, for all $i \! \in \! [m]$. For some other extended \colorful selection $f'_m$ with the same $m$, add $v_i$ to $R^{f'_{m}(i)}$ if $f'_{m}(i) \neq 0$, and add $v_i$ to $B$ if $f'_{m}(i) = 0$, for all $i \in [m]$.
\end{enumerate}

The remaining components of the MPP definition remain unchanged, including rules \ref{rule:multi-compute} and \ref{rule:multi-remove}. As before, rule \ref{rule:multi-compute} incurs a cost of $1$, while the new rule (R1-M)* incurs a cost of $g$.

This modified I/O rule essentially assumes a fully connected (clique-like) communication network between the $k$ processors and the slow memory, where in a single I/O step, any of the endpoints in this network (i.e., processor or slow memory) can send a data value to any other endpoint, but all endpoints can only send at most one and receive at most one piece of data. Several parallel computing models assume such an underlying communication setup, such as the communication cost function (named $h$-relation) in the BSP model \cite{BSPbook1, BSPbook2}.

This variant is relevant because with direct sending, MPP essentially becomes a generalization of the DAG scheduling problem in BSP without latency, analyzed in \cite{BSPscheduling}. This scheduling problem considers a more realistic model than previous works on DAG scheduling (already assuming e.g., communication costs that scale with data volume), but still makes some simplifications compared to practice (e.g., ignoring latency, assuming uniform node weights) for a simpler theoretical analysis. A BSP schedule in this model consists of computation and communication phases; however, these can all be split into separate steps for MPP with direct sending such that the total cost is unchanged. In particular, the cost $s_{comp}$ of a computation phase is the maximum number of nodes executed on any processor; we can indeed split any such phase into $s_{comp}$ distinct \ref{rule:multi-compute} steps. The cost $g \cdot s_{comm}$ of a communication phase is defined via $h$-relations; it is already discussed in \cite{BSPscheduling} that any such phase can be split into $s_{comm}$ distinct communication rounds, each of which are equivalent to a use of (R1-M)*. For more details on this scheduling model, we refer the reader to \cite{BSPscheduling}.

Thus in the special case of $r\!=\!\infty$, MPP with direct sending is almost equivalent to a DAG scheduling problem. We note that the equivalence is not complete: even with $r\!=\!\infty$, an MPP strategy might still prefer to use blue pebbles. For a simple example, consider a pebbling strategy with two (R1-M)* steps and three processors, where (i) $p_1$ and $p_2$ need to exchange a pair of node data in the first (R1-M)* step, (ii) $p_2$ and $p_3$ need to exchange a pair of node data in the second (R1-M)* step, and (iii) $p_3$ needs to send another node $v$ to $p_1$ throughout this process. Without blue pebbles, we cannot send $v$ in either of the steps, since $p_1$ is already receiving in the first step, and $p_3$ is already sending in the second. However, with blue pebbles also, we can move $v$ from $p_3$ to slow memory in the first (R1-M)* step, and from slow memory to $p_1$ in the second (R1-M)* step. As such, the optimal cost in direct sending MPP might be lower than the optimum of the corresponding DAG scheduling problem.

It is also an interesting question how this direct sending variant differs from our original MPP problem. We discuss this in more detail in Appendix \ref{app:NPhard} when analyzing how earlier NP-hardness results from DAG scheduling carry over to MPP.

We note that it would also be interesting to consider an MPP variant where a (R1-M)* step is \emph{added} to the set of rules, instead of replacing \ref{rule:multi-redbyblue} and \ref{rule:multi-bluebyred}. Such a model would allow for a natural separation of I/O steps, based on whether they are motivated by memory limitations or by inter-processor communication. We leave it to future work to analyze such a model extension.

\subsection{Comparison to constrained model} \label{sec:constrained_MPP}

The closest previous work to our model is the extension of SPP to multiple processors in \cite{RBpebbling5}. In this model, each processor can simultaneously execute a single one of the SPP rules, possibly different ones. The goal of the problem is to minimize the total number of applications of rules (R1-S) and (R2-S) in a pebbling; however, the strategy space is restricted to pebblings where the length of the sequence is minimal. This model differs from MPP in several ways:
\begin{itemize}[leftmargin=1.2em]
\item The model in \cite{RBpebbling5} assumes each processor concurrently executes one of (R1-S)--(R4-S) in each step, hence implicitly assuming that all SPP steps take the same time.
\item The model aims to minimize the total number of occurrences of (R1-S) and (R2-S) in a pebbling. This implicitly assumes that I/O on different processors cannot be parallelized.
\item To motivate the parallelization of compute steps, the model has a further constraint: it only allows pebbling strategies that consist of the minimal possible number of steps. However, it is not even clear whether this condition can be verified in polynomial time.
\end{itemize}

Moreover, the strict constraint on the length of the pebbling can often exclude solutions with much fewer I/O steps; as such, while this is a promising preliminary model, we believe that our work is a simpler and more realistic way to capture red-blue pebbling in a parallel setting.

For completeness, we show a simple example where this artificially constrained model only allows pebblings with much higher I/O cost. Consider a DAG on $6$ nodes: two chains of length $2$ and $3$, respectively, with their first node having an incoming edge from a common source node. Assume $k\!=\!2$ and $r\! \geq \! 6$. If executed on a single processor, this sequence consists of $6$ compute steps, with no I/O operations. On the other hand, the pebbling sequence of minimal length also uses I/O steps: $p_1$ computes the source node, then sends it to slow memory via (R1-S). For the next 3 steps, $p_1$ computes the chain of length $3$, while $p_2$ loads the source node from slow memory, and computes the chain of length $2$. This pebbling sequence consists of only $5$ steps, but has $2$ I/O steps. If $g$ is large (i.e., in any model that accounts for the fact that communication steps are much more costly in practice than compute steps), this is a significantly more costly strategy than the one with $6$ compute steps. However, since the constrained model only allows sequences of minimal length, it is restricted to this more expensive solution.

We can easily construct similar examples on arbitrary large DAGs: consider two sets $A$ and $B$ of $m$ nodes each, and two sets $C_1$ and $C_2$ of $(2m\!+\!1)$ and $(m\!+\!1)$ nodes, respectively, with $(u,v) \in E$ for all $u \in A$, $v \in C_1 \cup C_2$. If $A$, $C_1$ and $C_2$ are all computed on $p_1$ (and $B$ on $p_2$), we get a pebbling sequence of $4m+2$ compute steps, without any I/O. However, if $A$ and $C_1$ are computed on $p_1$, while $B$ and $C_2$ on $p_2$, then we need only need $4m+1$ steps in our sequence, but we have $2m$ I/O operations (of sending the data of $A$ to $p_2$).

\section{Proofs for properties in Section~\ref{sec:basics}} \label{app:basics}

\subsection{NP-hardness: proof of Lemma \ref{lem:NPhard}} \label{app:NPhard}

We now discuss how to apply the constructions in \cite{BSPscheduling} to show that MPP is already NP-hard on 2-layer DAGs and in-trees. Note that the work of \cite{BSPscheduling} considers DAG scheduling in the BSP model. As discussed in Appendix \ref{app:models}, this is equivalent to the direct sending variant of MPP with $r=\infty$ and the modification that we cannot use blue pebbles at all. As such, for the rest of this section, we will understand the BSP scheduling problem to refer to this direct sending MPP problem variant with $r=\infty$ and no use of blue pebbles.

\subsubsection{Original MPP vs. Direct Sending}

Consider the BSP scheduling problem with some parameter $g$, and our original MPP problem with $g'=\frac{g}{2}$ and $r=\infty$. At first glance, it might seem that these two settings are equivalent, since any direct sending (i.e., (R1-M)* communication step) in BSP scheduling can be replaced by a consecutive \ref{rule:multi-redbyblue} and \ref{rule:multi-bluebyred} step, by sending the same (at most $k$) data values first all to slow memory, then to the corresponding processors. Indeed, the existence of a BSP schedule of cost $C$ implies that there also exists an MPP pebbling with the same cost $C$. However, the reverse is not true: the optimum cost in MPP pebbling can be lower than in the corresponding scheduling problem.

Intuitively, there are two ways in which MPP with $g'=\frac{g}{2}$ and $r=\infty$ is a more flexible model than BSP scheduling with $g$. On the one hand, splitting each communication step into an \ref{rule:multi-redbyblue} and \ref{rule:multi-bluebyred} part allows for efficient broadcast operations: if a node $v$ has to be sent from processor $p_1$ to all the remaining $(k-1)$ processors, then this can be achieved in a single pair of \ref{rule:multi-redbyblue} and \ref{rule:multi-bluebyred} steps in MPP (at a cost of $g'+g'=g$), whereas it takes $(k-1)$ communication steps in BSP scheduling (and a cost of $(k-1) \cdot g$).

On the other hand, the splitting of communication steps into an \ref{rule:multi-redbyblue} and an \ref{rule:multi-bluebyred} part also allows us to better parallelize communications occasionally. For instance, assume that we want to send a node $v_1$ from processor $p_1$ to $p_3$ at some point in time $t_1$, and another node $v_2$ from $p_2$ to $p_1$ at a later time $t_2$, and finally, there is a third data $v_3$ to be sent from $p_2$ to $p_3$, which can be sent either at $t_1$ or $t_2$ (i.e., $v_3$ is computed on $p_2$ before $t_1$, but only required on $p_3$ after $t_2$). In BSP scheduling, the transfer of $v_3$ cannot be parallelized with either of the other operations, since $p_3$ is already receiving a value at $t_1$, and $p_2$ is already sending a value at $t_2$; as such, it requires a separate communication step, and hence the total cost of the communication steps is $3 \cdot g$. On the other hand, in MPP, we can already move $v_3$ to slow memory in the \ref{rule:multi-redbyblue} step at $t_1$ (together with $v_1$), and load it from slow memory in the \ref{rule:multi-bluebyred} step at $t_2$ (together with $v_2$), since $p_2$ is not yet sending a value at $t_1$, and $p_3$ is not yet receiving a value at $t_2$. The total cost is then only $4 \cdot g' = 2 \cdot g$.

As such, in order to show that the NP-hardness proofs for BSP scheduling from \cite{BSPscheduling} also carry over to MPP, we need to show that the differences above do not affect the optimum pebbling cost in the specific constructions, and hence an MPP scheduling of a given cost $C$ exists if and only if a BSP scheduling of the same cost exists.

\subsubsection{2-layer DAGs}

For $2$-layer DAGs, we use the fact that the proof reduction in \cite{BSPscheduling} only has $k=2$; we show that for this case, the optimum costs in the corresponding BSP scheduling and MPP problems are always identical. We show that any MPP pebbling can be turned into a BSP schedule of at most the same cost (the reverse was already shown above). Note that for $k=2$ and $r=\infty$, all data values sent from $p_1$ are communicated in order to be received by $p_2$, and all data values sent from $p_2$ are communicated in order to be received by $p_1$. We assume that our starting MPP schedule is reasonable, i.e., it does not send values to slow memory that are not loaded later by the other processor, or does not send/load the same value from/to a processor multiple times; otherwise, we can remove the corresponding unnecessary steps from the pebbling without increasing the cost.

We will consider the \ref{rule:multi-redbyblue} and \ref{rule:multi-bluebyred} operations in the MPP solution in order, and replace them by (R1-M)* steps without ever increasing the total cost. In particular, consider the first \ref{rule:multi-bluebyred} step in the pebbling that has not been converted yet. First let us assume that this step only loads a single value from slow memory (assume, w.l.o.g., that this is a node $v_1$ sent from $p_1$ to $p_2$). Let us find the preceding \ref{rule:multi-redbyblue} step that saves $v_1$ from $p_1$ to slow memory. If this \ref{rule:multi-redbyblue} step only transfers $v_1$, then we can combine the corresponding \ref{rule:multi-redbyblue} and \ref{rule:multi-bluebyred} into a (R1-M)* step, sending $v_1$ from $p_1$ to $p_2$, at the original time of the \ref{rule:multi-bluebyred} step, with the cost remaining unchanged ($g'+g'=g$). If, on the other hand, the \ref{rule:multi-redbyblue} step saving $v_1$ also transfers another value $v_2$ from $p_2$ to slow memory, then we can merge the two operations into a BSP communication step that, sends $v_1$ from $p_1$ to $p_2$ and $v_2$ from $p_2$ to $p_1$, at the original time of the \ref{rule:multi-bluebyred} step, again at the same cost. Note that in our original pebbling, $v_2$ was loaded to $p_1$ from slow memory at a later point in time; we can now remove $v_2$ from this \ref{rule:multi-bluebyred} operation, possibly deleting the entire operation if no other data value was loaded to $p_2$ in the same step.

On the other hand, assume that the first unprocessed \ref{rule:multi-bluebyred} step loads both $v_1$ (originally sent from $p_1$) to $p_2$ and $v_2$ (originally sent from $p_2$) to $p_1$. In this case, there are two corresponding \ref{rule:multi-redbyblue} steps earlier that send $v_1$ and $v_2$ to slow memory. If these two are in fact the same \ref{rule:multi-redbyblue} step, then we can simply combine the \ref{rule:multi-redbyblue} and \ref{rule:multi-bluebyred} step into a (R1-M)* step of the same cost, at the original time of the \ref{rule:multi-bluebyred} step. Otherwise, if one or both of these \ref{rule:multi-redbyblue} steps also save other value(s) to slow memory besides $v_1$ and $v_2$, we can still create the same (R1-M)* step sending $v_1$ from $p_1$ to $p_2$ and $v_2$ from $p_2$ to $p_1$, and merge the saving of these possible other values to slow memory into a new \ref{rule:multi-redbyblue} step. Note that besides $v_1$, an \ref{rule:multi-redbyblue} step could only send a value from $p_2$, and besides $v_2$, an \ref{rule:multi-redbyblue} step could only send a value from $p_1$, so the remainder of these \ref{rule:multi-redbyblue} steps can indeed be merged into a new \ref{rule:multi-redbyblue} step. Furthermore, the number of \ref{rule:multi-redbyblue} steps indeed decreases by at least $1$ in this process, and the newly merged \ref{rule:multi-redbyblue} step still results in a valid pebbling if placed at the time of the later one of the original two \ref{rule:multi-redbyblue} steps.

As such, the conversion shows that there exists a direct sending pebbling strategy with at most the same communication cost as our standard MPP pebbling, and hence the two problems have equal optimum values. As a result, the reduction of \cite{BSPscheduling} for NP-hardness also carries over to MPP.

\subsubsection{In-trees}

For in-trees, one needs to delve into the concrete proof details in \cite{BSPscheduling} to show that the same reduction holds for MPP. We outline the minor differences required in the proof when applied to MPP; for more details and context, we refer the reader to the detailed proof in \cite{BSPscheduling}.

On a high level, the reduction has an allowed cost of $\frac{n}{k} + m_0 \cdot g$ for some parameter $m_0$, which immediately implies that we can have at most $m_0$ communication steps. Furthermore, there is a sink node $u_0$; assume w.l.o.g.\ that it is assigned to processor $p_0$. This $u_0$ has $\frac{n}{k}-1$ in-neighbors, and also $m_0$ further incoming edges from nodes $u_i$ in other gadgets; one can show that all of these nodes $u_i$ must all be assigned to a different processor than $p_0$, otherwise we exceed the allowed cost. This means that we need to send $m_0$ distinct data values from other processors to $p_0$; in BSP scheduling, this requires $m_0$ separate communication steps. Since the allowed cost is $\frac{n}{k} + m_0 \cdot g$, this already implies that the cost of computation steps must sum up to $\frac{n}{k}$, and the cost of communication steps must sum up to $ m_0 \cdot g$. The rest of the analysis is based on these observations.

In contrast to this, in MPP with $g'=\frac{g}{2}$, the same setting does not immediately imply that these communication steps require all our budget for communication. In particular, loading the $m_0$ values from slow memory to $p_0$ still must happen in $m_0$ separate \ref{rule:multi-bluebyred} steps, and hence at a cost of $m_0 \cdot g'$. However, the $m_0$ values may, in fact, be saved to slow memory in a parallel fashion from the other $(k-1)$ processors, if the $u_i$ are distributed in a balanced way among them; instead of $m_0 \cdot g'$, this would only incur a cost of $\frac{m_0}{k-1} \cdot g'$. As such, this setting would allow us to have extra computation or communication steps compared to BSP scheduling, without violating the cost limit.

Hence in order to show that the communication steps must still sum up to a cost of $2 \cdot m_0 \cdot g'$, we need to consider further proof details. In particular, to enforce a timing on the communication steps, the construction has a critical path of length $\frac{n}{k}$; if we only have $\frac{n}{k}$ computation steps to fit into the allowed cost, then this implies that the $j$-th node of this path must always be computed in the $j$-th computation step. In MPP, we now only have a slack of $s:=(m_0-\frac{m_0}{k-1}) \cdot g'$, i.e., we have so far only shown an upper bound of $\frac{n}{k} + s$ on the number of allowed computation steps, so with a similar approach, we can only claim that the $j$-th node of the path must be computed in the $(j + s)$-th computation step at the latest.

The key observation for adapting the proof to MPP is that such a small additive term $s$ does not affect the induction step in the proof that establishes the timing of the communication steps, since $s \leq m_0 \cdot g'$, i.e., a much smaller magnitude than the size of gadgets in the construction. In particular, for the first node of the critical path that has a gadget attached (and then for each subsequent such node in the induction), it still holds that it has far too many predecessors to compute on a single processor in the first $(j + s)$ computation steps. Hence as in the original proof, a communication step is required within these gadgets to share the workload, which in MPP means that we need to have both a \ref{rule:multi-redbyblue} and a \ref{rule:multi-bluebyred} step. However, in these first $(j + s)$ compute steps, we still do not have enough remaining computations to finish more than one of the gadgets with $u_i$ that have an edge to the sink $u_0$. As such, any valid pebbling has at least one \ref{rule:multi-redbyblue} and \ref{rule:multi-bluebyred} step before the $(j + s)$-th compute step, and these must transfer exactly one of the data values $u_i$ to slow memory and then to $p_0$. The induction follows for the next splitting points of the critical path in a similar way, as in the original proof: even with the potential $s$ extra compute steps, the remaining compute steps until the given computation deadline are not enough for the remaining predecessors, and hence another communication (pair of \ref{rule:multi-redbyblue} and \ref{rule:multi-bluebyred} steps) is required. This \ref{rule:multi-redbyblue} can once again only upload a single one of the values $u_i$ to slow memory, since the compute steps are not enough to compute more than one of the gadgets connected to $u_0$. Altogether, this induction shows that any valid pebbling must have $m_0$ distinct \ref{rule:multi-redbyblue} steps, and save a single one of the $u_i$ values to slow memory in each of them.

This implies that any valid pebbling will have a total cost of at least $m_0 \cdot 2 \cdot g'= m_0 \cdot g$ for communication, and hence a cost of exactly $\frac{n}{k}$ for computation steps; from here, the proof becomes identical to that in the original setting. For simplicity, we can even delay all the \ref{rule:multi-redbyblue} steps in MPP to immediately before the succeeding \ref{rule:multi-bluebyred}; the pebbling still remains valid, and the resulting sequence now becomes essentially identical to the corresponding BSP schedule, since the \ref{rule:multi-redbyblue}-\ref{rule:multi-bluebyred} pairs can be converted into (R1-M)* steps.

\subsection{Greedy approach: proofs of Lemmas \ref{lem:greedy_upper} and \ref{lem:greedy_lower}}

The proof of the upper bounds on the greedy algorithm were already outlined before. Recall that these follow from a fundamental result in DAG scheduling, which states that in the simplest \cdag scheduling model (which is equivalent to MPP with $g\!=\!0$ and $r\!=\!\infty$), any non-idle greedy strategy (i.e., where no processor remains idle when there are computable nodes) is a $2$-approximation \cite{greedy2approx}.

\renewcommand*{\proofname}{Proof of Lemma \ref{lem:greedy_upper}}

\begin{proof}
Let $L_0$ denote the minimal number of compute steps required in any pebbling; clearly $L_0 \leq \texttt{OPT}$. As noted above, a non-idle greedy strategy has at most $2 \cdot L_0$ compute steps. Using the worst-case I/O strategy discussed for Lemma \ref{lem:trivial_bounds} (always saving every node to, and always loading every in-neighbor from slow memory) adds at most $(\Delta_{in}+1)$ I/O steps to each compute step, and hence its total cost is at most $2 \cdot L_0 \cdot (g \cdot (\Delta_{in}+1)+1)$.
\end{proof}

We now move on to discuss the constructions to show Lemma \ref{lem:greedy_lower}, i.e., the lower bounds on the suboptimality of the greedy pebbling approach.

Recall that our greedy pebbling strategy is not entirely defined; instead, we consider a class of greedy strategies, and our lower bound holds for any strategy from this class. In the first steps, each of the processors selects a distinct source node in the DAG, and computes it concurrently (our constructions both have $k=2$ and exactly two source nodes, so with this all sources are already computed). Then in every subsequent step, we consider the set of \textit{ready nodes}, i.e., nodes $v$ such that all the in-neighbors of $v$ have already been computed before on some processor, but $v$ has not yet been computed. Each processor $p_i$ selects a \textit{next-goal node} from among these ready nodes: according to the greedy rule, this could be either the ready node with the highest number of red pebbles of $p_i$ on its in-neighbors, or alternatively, the highest ratio of red pebbles on in-neighbors to the number of in-neighbors (at the point of computing the previous next-goal node of $p_i$). In case if there are multiple nodes tied for next-goal node, the processors can break ties arbitrarily; if multiple processors would select the same node, we can assign this node to one of these processors (with arbitrary tie-breaking), and the remaining processors can then follow the same rule to choose from the remaining ready nodes. Also, in case of the ratio of red pebbled in-neighbors, the ratio for source nodes can be interpreted as either $0$ or $1$, as desired.

Given the selected next-goal nodes, we assume that all processors compute these nodes concurrently. Processors can then execute an arbitrary amount of steps before this computation step, in order to ensure that in-neighbors of the next-goal node are available on the processor: at any point in the schedule before computing the next-goal node, they can load any node from slow memory, save any node to slow memory, delete any pebbles to save up space, or even recompute any node(s) that were previously computed on the same processor. That is, the strategy only defines the next new node to be computed by each processor, in a greedy manner, but the strategy is free to decide what I/O steps are executed (and when) in order to achieve this, which data values are kept in (or removed from) fast memory, and which data values are obtained by recomputation; as such, it covers a rather general class of greedy methods. Nonetheless, this simple restriction on the greedy selection of the next node to compute is already enough to show our lower bounds, i.e., that any strategy in this class can be rather far from the optimum.

\subsubsection{Lemma \ref{lem:greedy_lower}: lower bound of $\frac{\Delta_{in}-1}{2}$}

For the first claim in Lemma \ref{lem:greedy_lower}, we need to execute two modifications on the zipper gadget of Figure \ref{fig:gadget}. On the one hand, we take two copies of this DAG. On the other hand, we replace the input groups by two gadgets that misguide the greedy heuristic, to ensure that our $k=2$ processors will start working independently on the two copies.

More specifically, let $u_{1}, ..., u_{d}, u_{d+1}, ..., u_{2d}$ denote the nodes in the input groups (with $u_{1}, ..., u_{d}$ in the first, and $u_{d+1}, ..., u_{2d}$ in the second). We still attach a chain of length $2g$ in front of each of these $2d$ nodes as before, in order to ensure that it is not beneficial to recompute them. For all $i \in [2d-1]$, we also add a further outgoing edge from $u_i$ to the first node of the chain attached to $u_{i+1}$.

This ensures that in both copies, there is only a single topological ordering of the input group gadgets, which goes through the $2d$ chains consecutively, following each from front to end. In the greedy algorithm, the two processors will follow this in a parallel fashion for the two copies: in the first step, there are only two source nodes to compute, and after this, the next node in this order will always already have a red pebble of the given processor on (at least) one of its predecessors. If the greedy algorithm is sophisticated enough, then this process requires no I/O, apart from possibly saving the $u_i$ to slow memory for future use. In any case, after $(2g+1) \cdot 2d$ computation steps (and a cost of $O(1)$), both processors have a red pebble on $u_{2d}$.

Assume that the first nodes of the main chains have incoming edges from the second input group (nodes $u_{d+1}, ..., u_{2d}$) as in Figure \ref{fig:gadget}. This means that after computing the input groups, the corresponding processors will have a red pebble on an in-neighbor $u_{2d}$ of the first main chain node (in both copies), so this is the node that is computed next by each processor. From here, the processors will continue processing the two main chains independently, since in each step, the next node in the main chain has an immediate predecessor (the previous node in the main chain) that already has a red pebble of the corresponding processor, and no red pebbles of the other processor. For $r=d+2$, even if the input groups are loaded from slow memory for each main chain node (which is the lowest-cost solution), this results in an I/O cost of $d$ for each node of the main chain, and a total cost of $\frac{n}{2} \cdot (1+d \cdot g) - O(1)$ altogether.

In contrast to this, if (after computing the input groups in the beginning) we use both processors to compute the main chain, only computing the two copies one after the other, then each node in the main chain only induces an I/O cost of $2g$. Since we cannot parallelize the two copies, the cost of the computation steps is now $n-O(1)$, but the total cost is still only $n \cdot (1+2g)-O(1)$. Ignoring the additive constant terms, this results in a ratio of
\[ \frac{1+ d _{\!} \cdot _{\!} g}{2+4 _{\!} \cdot _{\!} g} = \frac{1 + (\Delta_{in}-1) \cdot g}{2 + 4 _{\!} \cdot _{\!} g} \, . \]
For $g \geq 2$, we can lower bound this by $\frac{(\Delta_{in}-1) \cdot g}{5 _{\!} \cdot _{\!} g} \geq \frac{1}{5} _{\!} \cdot _{\!} (\Delta_{in}-1)$ which is again lower bounded by the slightly simpler expression $\frac{1}{5} _{\!} \cdot _{\!}\Delta_{in}-1$ used in the lemma.

\subsubsection{Lemma \ref{lem:greedy_lower}: lower bound of $\frac{2}{3} \cdot g$}

The second claim requires a construction where the optimal pebbling has I/O cost of $O(1)$ only, but the greedy algorithm commits the mistake of using $\Theta(n)$ I/O steps. Let $m=\frac{n}{6}$. Our DAG consists of $2$ chains of length $2m$ each, and $2$ chains of length $m$ each; let us denote the nodes in the chains by $u_{1}, ..., u_{2m}$, $v_1, ..., v_{2m}$, $w_1, ..., w_{m}$, $z_1, ..., z_m$. Apart from $(u_i, u_{i+1}), (v_i, v_{i+1})$ for all $i \in [2m-1]$, and $(w_i, w_{i+1}), (z_i, z_{i+1})$ for all $i \in [m-1]$, we add the edges $(u_i, w_i)$, $(v_{m+i}, w_i)$, $(u_{m+i}, z_i)$ and $(v_i, z_i)$ to the DAG for all $i \in [m]$. We also add the edges $(u_{2m}, z_1)$ and $(v_{2m}, w_1)$. We select $k=2$ and $r=\infty$, i.e., we never need to remove an already placed red pebble.

The optimal pebbling strategy here is to first spend $m$ compute steps to compute $u_{1}, ..., u_{m}$ on processor $p_1$, and $v_{1}, ..., v_{m}$ on processor $p_2$ simultaneously. We then switch the processors for the second halves of the chains, i.e.\ here we transfer $u_m$ to $p_2$ and $v_m$ to $p_1$ simultaneously, at a cost of $2g$. After this, we spend another $m$ compute steps to compute $u_{m+1}, ..., u_{2m}$ on $p_2$, and $v_{m+1}, ..., v_{2m}$ on $p_1$. After this, we can compute $w_{1}, ..., w_{m}$ on $p_1$, and $z_{1}, ..., z_{m}$ on $p_2$ in a parallel manner. The total cost of the pebbling is only $3m + 2g = \frac{n}{2} + O(1)$.

In contrast to this, the greedy algorithm first computes $u_{1}, ..., u_{2m}$ on processor $p_1$, and $v_1, ..., v_{2m}$ on processor $p_2$ (without changing processors in the middle of the chains). However, this implies that all the nodes $w_i$ and $z_i$ will require the communication of at least one data value to another processor, regardless of which processor they are computed on. Note that recomputing the other chain, e.g., the nodes of $v_1, ..., v_{2m}$ with $p_1$ is not an option under the definition of our greedy strategy, since the strategy specifies that the next node to compute on $p_1$ (which was not computed on $p_1$ before) must be one of the nodes the nodes $w_i$ or $z_i$. Altogether, this greedy strategy results in an I/O cost of $ m \cdot 2g$ at least, and hence a total cost of at least $\frac{n}{2} + \frac{n}{3} \cdot g$. As such, the ratio between the two solutions is asymptotically $\frac{3+2g}{3}=1+\frac{2}{3} \cdot g$.

\subsection{Transferring lower bounds from SPP to MPP}

\subsubsection{Proofs of Lemma \ref{lem:k_to_1} and Corollary \ref{coro:lb_pebbling_1}}

\renewcommand*{\proofname}{Proof of Lemma \ref{lem:k_to_1}}

As mentioned before, the key insight for Lemma \ref{lem:k_to_1} is that the parallel I/O steps in MPP can each be simulated with at most $k$ sequential I/O rules.

\begin{proof}
Let $G, r, k, L$ be as in the lemma. Towards a contradiction, suppose that there is an MPP pebbling strategy $P_m$ for $k$ processors that uses $\ell < L$ many I/O operations.
Transform strategy $P_m$ to an SPP strategy $P_s$ by replacing each parallel transition rule with up to $k$ sequential transition rules as follows:
Rule \ref{rule:multi-remove} can be directly translated to \ref{rule:single-remove}.
Any other rule $t \in \{ \text{\ref{rule:multi-bluebyred}, \ref{rule:multi-redbyblue}, \ref{rule:multi-compute}} \}$ that is based on a \colorful selection $f_m$ and vertices $v_1,v_2, \ldots, v_m$ can be replaced by $m$ corresponding sequential rules applied to $v_1,v_2, \ldots, v_m$.
It follows that $P_s$ requires at most $k \cdot \ell$ I/O operations; with $k \cdot \ell < k \cdot L$, we reach the desired contradiction. Hence, any multiprocessor pebbling strategy must use at least $L$ many I/O operations.
\end{proof}

Considering the result of Lemma \ref{lem:k_to_1}, and adding the lower bound on computation costs established in Lemma \ref{lem:trivial_bounds}, Corollary \ref{coro:lb_pebbling_1} already follows.

\renewcommand*{\proofname}{Proof of Corollary \ref{coro:lb_pebbling_1}}

\begin{proof}
Let $G, n, r, k, g, L$ be as in the lemma.
The previous lemma shows that any multiprocessor strategy requires $L/k$ many I/O operations, translating 
to at least $g \cdot L/k$ I/O costs. 
In the best case, a strategy does not recompute nodes and each processor computes $\frac{n}{k}$ nodes. It follows that the costs of any multiprocessor pebbling strategy is at least $g \cdot L/k + {n} /{k}$.
\end{proof}

\subsubsection{Tight lower bound: proof of Lemma \ref{lem:IO_lower_strict}}

To prove the lemma, we compose $k$ copies of the following gadget: we begin with a chain of length $2m$, and we add edges form the $i$-th to the $(m+i)$-th node, to ensure that the data of the first $m$ nodes are also required later in the chain. More formally, we consider a chain of nodes $u_1, ..., u_m, u_{m+1}, ..., u_{2m}$, with $(u_i, u_{i+1}) \in E$ for $i \in [2m-1]$, and $(u_i, u_{m+i}) \in E$ for $i \in [m]$. We take $k$ independent copies of this gadget, and set $r=3 \cdot k$. Note that $n=2 _{\!} \cdot _{\!} k _{\!} \cdot _{\!} m$.

In this DAG, the nodes $u_1, ..., u_m, u_{m+1}, ..., u_{2m}$ must be computed in this order. Furthermore, all of $u_1, ..., u_m$ are required later for computing $u_{m+1}, ..., u_{2m}$, respectively. This means that almost all these data values need to be saved into slow memory and loaded later, since we only have $O(1)$ red pebbles ($r=3 \cdot k$ in the SPP case, and $r=3$ in the MPP case).

In SPP, we need to save and load $m-(3 \cdot k-2)$ values to/from slow memory in each component of the DAG (we need $2$ red pebbles to compute through the chain, so $(r-2)$ red pebbles can be kept on some nodes, e.g., $u_1, ..., u_{r-2}$). This results in $2 _{\!} \cdot _{\!} (m-(3 _{\!} \cdot _{\!} k-2)) = 2 _{\!} \cdot _{\!} m - O(1)$ I/O operations per copy, and $L=k _{\!} \cdot _{\!} 2 _{\!} \cdot _{\!} m - O(1)=n - O(1)$ I/O operations altogether in SPP, since the components can be computed one after the other.

However, in MPP with $k$ processors, both the compute and the I/O operations can be parallelized over the $k$ copies. More specifically, with $k$ processors and only $r=3$, all the values $u_1, ..., u_m$ need to be saved to slow memory and loaded back later (in any pebbling strategy), resulting in $2 _{\!} \cdot _{\!} m$ I/O steps for each component. We can compute each of the copies on a separate processor in a parallel manner. This leads to an I/O cost of $2 _{\!} \cdot _{\!} g _{\!} \cdot _{\!} m = \frac{n}{k} _{\!} \cdot _{\!} g$, and a compute cost of $2 _{\!} \cdot _{\!} m=\frac{n}{k}$ for the pebbling. With $L=n - O(1)$ and $k, g \in O(1)$, we have $\frac{L}{k} _{\!} \cdot _{\!} g =\frac{n}{k} _{\!} \cdot _{\!} g -O(1)$, so for the pebbling of optimal cost, we indeed have
\[ \texttt{OPT} = \frac{n}{k} + \frac{n}{k} _{\!} \cdot _{\!} g = \frac{n}{k} + \frac{L}{k} _{\!} \cdot _{\!} g + O(1) \, . \]

\section{Impact of having more processors: proofs for Section \ref{sec:tradeoff}} \label{app:tradeoff}

We now provide the proofs omitted in Section \ref{sec:tradeoff}.

\subsection{Simpler proofs for the fair and practical comparisons}

Recall that in the fair case, the optimum cost can decrease by a factor $k$ at most.

\renewcommand*{\proofname}{Proof of Lemma \ref{lem:factor_k}}

\begin{proof}
As before in Lemma \ref{lem:k_to_1}, any pebbling strategy with $k$ processors can be transformed into a pebbling for the $k=1$ case, by keeping the red pebbles on the same (altogether at most $r_0$) nodes as in the parallel case, and splitting each parallel \ref{rule:multi-redbyblue}, \ref{rule:multi-bluebyred} or \ref{rule:multi-compute} step into at most $k$ separate single-processor steps. The cost of the resulting pebbling is at most a factor $k$ larger than the original cost. This implies $\texttt{OPT}^{(1)} \leq k \cdot \texttt{OPT}^{(k)}$.

A factor $k$ decrease is easily possible in a DAG with $k$ independent chains of length $\frac{n}{k}$: the computation cost here drops from $n$ to $\frac{n}{k}$, and no I/O steps are required in either case.
\end{proof}

As for the maximal cost increase in the fair case, we can first use a zipper gadget to show an increase of a $(2g+1)$ factor asymptotically.

\renewcommand*{\proofname}{Proof of Lemma \ref{lem:fair_increase}, for a factor $(2g+1)-o(1)$ only}

\begin{proof}
Consider zipper gadget with $r_0=2 \cdot d + 4$. With $k=1$ and $r=r_0$, we can easily keep both input groups and the two current nodes of the chain in fast memory, thus requiring no I/O, and resulting in a cost of $\texttt{OPT}^{(1)}=n$. However, with $k=2$ and $r=d+2$, the best strategy is to keep an input group in the fast memory of both processors, and compute the chain nodes alternatingly on the two processors. This requires two I/O steps for each node in the main chain, giving $\texttt{OPT}^{(2)}=n+2 g \cdot n-O(1)$, which is asymptotically a factor $(2g+1)$ increase.
\end{proof}

With further technical steps, one can reorganize this into a construction with a pool of $k \cdot d$ source nodes, and a main chain where each node has $d$ inputs from this pool, but the inputs of consecutive chain nodes only have a small intersection. This forces any pebbling strategy to load almost all (a $\frac{k-1}{k}$ fraction) the inputs of every chain node from slow memory, giving the factor shown in the original lemma. Since this requires a more detailed analysis, we present it separately below in Appendix \ref{app:fair_increase}.

We can also show that in the fair case, the optimum can be non-monotonic in $k$.

\renewcommand*{\proofname}{Proof of Lemma \ref{lem:nonmonotone}}

\begin{proof}
Consider two copies of the zipper gadget with $r_0=4 \cdot d + 8$. With $k=1$ processor and $r=r_0$, the optimal cost is $n$. With $k=2$ and $r=2 \cdot d + 4$, we can execute the two copies of the DAG on two processors, as the two input groups and two chain nodes still all fit into fast memory; hence no I/O is needed, and $\texttt{OPT}^{(2)}=\frac{n}{2}$. For $k=4$ and $r= d + 2$, we again have to store the $4$ input groups on $4$ distinct processor, once again giving a cost of $2 \! \cdot \! g+1$ per chain node, and in total $\texttt{OPT}^{(4)}=\frac{n}{2}+2 \! \cdot \! g \cdot \frac{n}{2}-O(1)$.
\end{proof}

Finally, turning to the practical case, the zipper gadget also demonstrates that we can easily achieve a superlinear speedup.

\renewcommand*{\proofname}{Proof of Lemma \ref{lem:suplin}}

\begin{proof}
Consider the zipper gadget with $r_0= d + 2$. With $k=1$ processor, the optimal pebbling has $n$ computation steps, and $d \cdot n - O(1)$ steps of I/O: for each chain node, we need to load $d$ values from slow memory. On the other hand, with $k=2$, we can keep a separate input group in the fast memory of both processors, compute the chain nodes alternatingly, and only communicate the chain nodes between the processors. This results in $n - O(1)$ compute steps, and only $2 \cdot n - O(1)$ I/O steps ($2$ for each node of the main chain). This gives a ratio of $\frac{(g \cdot d+1) \cdot n - O(1)}{(2g+1) \cdot n - O(1)}$ for costs. Asymptotically (for $n$ large enough), and using $\Delta_{in}= d+1$, this is a ratio of $\frac{g \cdot (\Delta_{in}-1)+1}{2g+1}$. By selecting a large enough constant value for $g$, the claim follows for any $\varepsilon>0$.
\end{proof}

\subsection{For the tight bound in Lemma \ref{lem:fair_increase}} \label{app:fair_increase}

We now show how to achieve a cost increase of essentially a $\frac{k-1}{k} \cdot \Delta_{in} \cdot g$ factor in Lemma \ref{lem:fair_increase}. We first describe the construction for the simplest case of $k=2$, and then we show how to generalize it to arbitrarily large $k$.

For $k=2$, instead of having $2$ input groups of size $d$, our construction will have $4$ input ``subgroups'', named $S_{1,1}$, $S_{1,2}$, $S_{2,1}$ and $S_{2,1}$. Each of these will consist of $\frac{d}{2}$ nodes (assume for simplicity that $d$ is an even number). The main chain nodes will now periodically alternate between $4$ different kind of nodes: they will have incoming edges from all of the nodes in (i) $S_{1,1}$ and $S_{1,2}$, (ii) $S_{2,1}$ and $S_{2,2}$, (iii) $S_{1,1}$ and $S_{2,2}$, (iv) $S_{1,2}$ and $S_{2,1}$. As before, each main chain node also has an edge to the following node; as such, the in-degree of nodes in the main chain is $d+1$ as before, and hence we can still have $r_0=2d+4$.

The best pebbling strategy in this DAG is to compute the first two main chain nodes (in each cycle of $4$) by processors $p_1$ and $p_2$, and then the next two nodes again by two different processors, e.g., the 3rd node by $p_1$ and the 4th node by $p_2$ (or vice versa). In this case, the 3rd and 4th nodes both require us to load $\frac{d}{2}$ nodes from slow memory. After this, the 1st and 2nd nodes of the next cycle of $4$ will also require us to load $\frac{d}{2}$ nodes from slow memory. Thus apart from the first $2$, every main chain node induces an I/O cost of $\frac{d}{2} \cdot g$. Note that the node in the subgroups only need to be saved into slow memory once, so this can be considered part of the $O(1)$ initialization cost. The current main chain node also needs to be communicated to the other processor sometimes, incurring an I/O cost of $2g$, but we ignore this for our formula, since it becomes irrelevant besides $d$ for $d$ large enough.

As such, the total cost of this pebbling strategy is at least $\frac{d}{2} \cdot n - O(1) = \frac{\Delta_{in}-1}{2} \cdot n - O(1)$. One can check that this is indeed the best pebbling strategy in our construction. In particular, the intersection between the in-neighbors of any two nodes in the cycle of $4$ is always at most $\frac{d}{2}$. Hence whenever a processor $p_i$ computes the next node of the main chain, and this has a role in its $4$-cycle (i.e., different index modulo $4$) than the last node computed by $p_i$, then this requires us to load at least $\frac{d}{2}$ data values into the fast memory of $p_i$. The only reasonable alternative is to select a processor, e.g., $p_1$, and keep a specific pair of subgroups, e.g., $S_{1,1}$ and $S_{1,2}$, always on $p_1$ for the entire cycle of $4$; this ensures that the next chain node with inputs $S_{1,1}$ and $S_{1,2}$ requires no I/O at all (except for the steps to communicate the previous main chain node). However, this implies that the other $3$ main chain nodes in the cycle all need to be computed by $p_2$; throughout an entire cycle, this implies that $p_2$ has to load $\frac{d}{2}+\frac{d}{2}+d = 2d$ values, again averaging to $\frac{d}{2}$ I/O operations per chain node.

Let us now consider a general $k$ value. Let $r_0=k \cdot (d+2)$. Assume for simplicity that $k$ is a prime number; this still allows us to increase $k$ arbitrarily large. We will now form $k^2$ different subgroups, each of size $\frac{d}{k}$, labeled with $S_{i, j}$ with $i, j \in \{ 1, ..., k \}$. The main chain nodes again periodically follow a cycle of length $k^2$: for all $i, j \in \{ 1, ..., k \}$, the $((i-1) \cdot k + j)$-th node in the cycle will have an incoming edge from each of the groups $S_{(j+(\ell-1) \cdot i) \! \mod k, \, \ell}$ for $\ell \in \{1, ..., k\}$. For instance, the cycle for $k=3$ will be (i) $S_{1,1}$, $S_{1,2}$, $S_{1,3}$ (ii) $S_{2,1}$, $S_{2,2}$, $S_{2,3}$, (iii) $S_{3,1}$, $S_{3,2}$, $S_{3,3}$, (iv) $S_{1,1}$, $S_{2,2}$, $S_{3,3}$, (v) $S_{2,1}$, $S_{3,2}$, $S_{1,3}$, (vi) $S_{3,1}$, $S_{1,2}$, $S_{2,3}$, (vii) $S_{1,1}$, $S_{3,2}$, $S_{2,3}$, (viii) $S_{2,1}$, $S_{1,2}$, $S_{3,3}$, (ix) $S_{3,1}$, $S_{2,2}$,  $S_{1,3}$. Note that any two of these $k$-tuples of subgroups only intersect in $0$ or $1$ subgroups; as such, whenever a processor computes two different nodes of the cycle consecutively, this requires loading $\frac{k-1}{k} \cdot d$ data values to slow memory at least (or up to $d$ values, if the two $k$-tuples are disjoint). Also, for simplicity, let us call the main chain nodes $1$ to $k$, main chain nodes $(k+1)$ to $2k$, etc., \textit{batches} of nodes; our entire cycle consists of $k$ batches, each batch contains $k$ distinct $k$-tuples, and any two $k$-tuples within a batch are always entirely disjoint.

In this construction, we can still follow the same pebbling strategy as in the $k=2$ case before: compute each of the first batch of main chain nodes on the $k$ different processors, then again distribute the next batch of $k$ main chain nodes over all the $k$ processors, and so on. In this strategy, we need at least $\frac{k-1}{k} \cdot d$ I/O operations for each main chain node (except for possibly the first cycle).

Alternatively, we can again select one or more processors that keep the same $k$-tuple of subgroups in fast memory for an entire cycle; however, in this case, the remaining processors need to compute all the other nodes in the cycle, which results in a similarly high cost. This is because the $k$-tuples within a batch have no intersection at all, so distributing them among less than $k$ processors implies that we need to load $d$ values instead of $\frac{k-1}{k} \cdot d$ on some occasions. In particular, if e.g., there is only one processor $p_1$ that keeps the same $k$-tuple for a cycle (and hence it requires no I/O steps, apart from loading the preceding chain node), then the remaining $(k-1)$ processors need to compute the other $(k^2-1)$ nodes in the cycle. In $(k-1)$ of the batches (i.e., except for the single batch that has the node computed on $p_1$), this means that at least one processor needs to compute at least $2$ chain nodes of the batch; these nodes have entirely disjoint inputs, so this leads to an extra cost of $d-\frac{k-1}{k} \cdot d = \frac{d}{k}$ per batch, and $\frac{k-1}{k} \cdot d$ over the $(k-1)$ batches, which is exactly the I/O cost saved by $p_1$. In general, if we have $k'$ processors that always keep the same $k$-tuple in fast memory for an entire cycle, then the remaining $(k-k')$ processors must compute the other $(k^2-k')$ chain nodes; a simple analysis shows that this result in altogether at least $(k-1) \cdot k'$ occasions when two nodes within a batch are consecutively computed on the same processor. This comes with an extra I/O cost of $(k-1) \cdot k' \cdot \frac{d}{k} = k' \cdot \frac{k-1}{k} \cdot d$, which is again exactly the I/O cost saved by the $k'$ processors. As such, other pebbling strategies also result in at least the same cost.

As such, the pebbling strategy has a total cost of
\[ n + \frac{k-1}{k} \cdot d \cdot g \cdot n - O(1) = \left(\frac{k-1}{k} \cdot (\Delta_{in}-1) \cdot g + 1 \right) \cdot n - O(1) \, .\]
In contrast to this, in the single-processor case of $k=1$, all the subgroups still fit into fast memory simultaneously, so there is no need for I/O at all; the optimal pebbling cost is simply $n$. Asymptotically, the ratio between the two values is 
\[ \frac{k-1}{k} \cdot (\Delta_{in}-1) \cdot g + 1 \, .\]
This essentially matches the upper bound for $k$ large enough.

\subsection{Number of I/O steps in the optimum}

Recall that $\texttt{OPT}_{I/O}$ was defined as the number of I/O steps in the optimal solution. In case there are multiple minimum-cost pebblings with a different number of I/O steps, we define $\texttt{OPT}_{I/O}$ to be the minimum of these values, to make the definition precise. However, since our constructions below only allow a single optimal strategy, our claims also hold for any other tie-breaking rule among minimum-cost solutions.

It is easy to see that $\texttt{OPT}_{I/O}$ can notably increase when adding multiple processors, both in the fair and in the practical setting. For a simple example, consider two chain DAGs of length $2g+1$ each, and add a common source node $v_s$ that has an outgoing edge to the first node of both chains, and a common target node $v_t$ that has an incoming edge from the last node of both chains. Let $r=\infty$ for simplicity. With $k=1$, this DAG is computed on a single processor, without requiring any I/O. However, with $k=2$, it is more beneficial to parallelize the two chains on the two processors (this adds an I/O cost of $2g$, but saves $2g+1$ compute steps). As such, the number of I/O steps increases from $0$ to $2$.

We can then take many, up to $\Theta(n)$ copies of this small DAG, and merge them sequentially, i.e., always merge the node $v_t$ of the current copy and the node $v_s$ of the next copy into a single node. All of the copies will still behave as before; hence the I/O cost for $k\!=\!1$ will be $0$, but the I/O cost for $k\!=\!2$ will be $2 \! \cdot \! g \! \cdot \! \Theta(n)$. This shows a multiplicative increase of an infinite factor, or an additive increase of a $\Theta(n)$ term.

More surprisingly, $\texttt{OPT}_{I/O}$ can also decrease in a similar way. For this, we consider a DAG consisting of two connected components. The first one is simply a chain DAG of length $m$. The second one is variant of the zipper gadget, with the modification that instead of $2$, it now has $4$ input groups of size $d$ (still with chains of size $2g$ added to discourage recomputation), and the main chain is alternating periodically between these $4$ input groups (besides having the previous main chain node as a predecessor, as before). Let $r_0=2d+4$, so that we still have $\frac{r_0}{2} \geq d+2=\Delta_{in}+1$ for the fair case. Let the main chain in this second component consist of $n_0=\frac{m}{d\cdot (2g+1)+1}$ nodes; note that this still implies $n_0=\Theta(n)$.

With $k=1$, the optimal pebbling here is to compute the chain component (a compute cost of $m$), and then in the second component, to always use I/O steps to reload the nodes in the input groups, since recomputing them is more costly. In both cases, with $k=1$, we can only essentially store $2$ input groups in fast memory at the same time (plus $2$ more nodes), so each cycle of $4$ main chain nodes will incur at least $(2d-2)$ I/O steps (in fact, almost $3d$ even with the best strategy). This implies that the optimal pebbling strategy contains $(2d-2) \cdot \frac{n_0}{4} = \Theta(n)$ I/O steps.

On the other hand, if $k=2$, then one of the processors (say, $p_1$) has to compute the chain component alone, so the number of compute steps is at least $m$. However, in this case, for a smaller total cost, $p_2$ is actually better off using recomputation for each main chain node in the other component: the recomputation of an input group requires $d\cdot (2g+1)$ compute steps, and then a further step to compute the next node in the main chain. That is, in $(d\cdot (2g+1)+1) \cdot n_0 = m$ steps, $p_2$ is actually able to compute the entire second component without ever requiring an I/O step, and instead recomputing the input group nodes every time. Since $p_1$ requires $m$ compute steps anyway, this does not increase the number of computation steps, and hence provides an optimal pebbling strategy of cost $m$, without any I/O steps at all.

\section{APX-hardness: proof of Theorem \ref{th:apx}} \label{app:apx}

As outlined before, the proof is based on the reduction idea in \cite{RBpebbling3}; however, it requires significant adjustments to work in a setting with computation costs. Note that SPP with computation costs, as defined in \cite{RBpebbling3}, assigns a cost of $1$ to steps (R1-S) and (R2-S), and a cost of some constant $\epsilon>0$ to (R3-S). Our MPP with $k=1$ (assigning a cost of $g$ to \ref{rule:multi-redbyblue}, \ref{rule:multi-bluebyred} and a cost of $1$ to \ref{rule:multi-compute}) is indeed equivalent to this with the appropriate scaling factor. As such, we present the proof using our notation for MPP, with $k=1$. Note that the proof also holds without changes for the SPP definition variant where pebbles are replaced in the I/O rules.

The base of the reduction is the vertex cover problem: given an undirected graph $G'(V',E')$ on $N$ nodes, the goal is to find the smallest set of nodes $V_0 \subseteq V'$ such that every edge $e' \in E'$ is incident to at least one node in $V_0$, where $|V_0|$ is minimal. The original proof in \cite{RBpebbling3} uses the property that assuming the Unique Games Conjecture, it is not possible to approximate vertex cover to any multiplicative factor below $2$. Instead, we use the fact that the problem is APX-complete already in $3$-regular graphs \cite{vcapprox, MIS_bounded2}.

Furthermore, note that a $3$-regular graph on $N$ nodes has $M=\frac{3}{2} _{\!} \cdot _{\!} N$ edges, and each node can cover at most $3$ edges; as such, the smallest vertex cover still has size at least $\frac{N}{2}$.

\subsection{Main idea of the original reduction}

On a high level, the idea of the construction in \cite{RBpebbling3} is as follows. The DAG consists of several gadgets called \textit{groups}. Each group consists of exactly $(r-1)$ nodes, and has one or more so-called \textit{target nodes}, i.e., a node that has an incoming edge from all the $(r-1)$ nodes in the group. This allows for a simple analysis, since the computation of each target node requires all the $r$ red pebbles: $(r-1)$ on the group, and one on the target node. The pebbling strategy is then essentially described by the order of computing the target nodes.

The construction has $2 \cdot N$ groups: two groups corresponding to each node $v' \in V'$ of the input graph in the vertex cover problem, called the \textit{first} and \textit{second} group of the node; we will use the shorthand notation $S_1(v')$ and $S_2(v')$ for these two groups, respectively. For each node $v' \in V'$, $S_1(v')$ has one of its target nodes included in the second group $S_2(u')$ of every node $u'$ that is adjacent to $v'$ in $G'$. The group $S_2(v')$ only has a single target node. Finally, $S_1(v')$ and $S_2(v')$ intersect in a large number of nodes; in fact, they are almost identical (the number of nodes differing between them becomes irrelevant asymptotically).

In this construction, the optimal strategy to minimize I/O costs is to visit the first and second group of a node consecutively whenever possible; otherwise, the intersection of the two groups has to be saved to slow memory and then reloaded later. In fact, the sizes of the gadgets are chosen such that asymptotically, all other I/O costs become irrelevant, and the total I/O cost is proportional to the number of nodes $v' \in V'$ where the visits to $S_1(v')$ and $S_2(v')$ is non-consecutive. However, recall that if $(u', v') \in E'$, then $S_2(u')$ contains a target node of $S_1(v')$, and $S_2(v')$ contains a target node of $S_1(u')$. As such, if we visit $S_1(v')$ and $S_2(v')$ consecutively at some point, then this means that $S_1(u')$ must be visited before, and $S_2(u')$ must be visited after this point; hence the visits to the groups of $u'$ cannot be consecutive. More specifically, the set of nodes $v' \in V'$ visited consecutively can only form an independent set in $G'$. As such, the optimal strategy is to visit the first groups for the smallest vertex cover, then both groups of the complementing (largest) independent set consecutively, and then the second groups of the smallest vertex cover. This way, the (asymptotic) I/O cost becomes proportional to the size of the smallest vertex cover in $G'$, and hence inapproximbaility results for vertex cover also carry over to pebbling.

\subsection{Main differences for computation costs}

To adapt this construction to our setting, the following main modifications are required:
\begin{enumerate}[itemsep=-10pt]
 \item The original proof considers SPP in the one-shot model, where recomputation is explicitly banned; however, in standard SPP or MPP, recomputation is allowed. Due to this, we need to add simple gadgets to the groups of the construction to ensure that recomputation is not beneficial in any case, and hence we can apply the same analysis as before.
 \item Most importantly, if computational steps also have a cost (as in MPP, or in SPP with computation costs), then this adds a further cost of $n$ to any pebbling strategy. Note that this total computation cost is indeed exactly $n$ if $k=1$ and no recomputation happens. Since the number of I/O steps happening in the construction is originally $o(n)$, this makes the I/O costs asymptotically negligible, and hence the original proof strategy does not apply: finding a much smaller vertex cover has no asymptotic effect on the total cost. As such, the most important adjustment is to reduce the size of the gadgets in the construction such that the new construction will have $n=\Theta(N)$, or in other words, each group gadget has constant size. In this case, a difference of $\Theta(N)$ in I/O costs also results in a difference of $\Theta(n)$ in total costs. 
 \item Since the number of target nodes required in the construction is $2 _{\!} \cdot _{\!} M$, we restrict ourselves to vertex cover in $3$-regular graphs, to ensure that the number of target nodes is $O(N)$, and hence we can have $n=\Theta(N)$ indeed.
 \item Finally, the size reduction of group gadgets to $O(1)$ creates a new problem that needs to be addressed. In particular, $(u', v') \in E'$, then visiting $S_2(u')$ after $S_1(v')$ can also save a small I/O cost, since the target node of $S_1(v')$ in $S_2(u')$ already has a red pebble on it, as it was just computed. This implies that I/O costs may not be proportional to the size of a vertex cover: we can also save I/O by visiting the groups of adjacent nodes consecutively. In the original construction this could be ignored, since the target nodes were asymptotically negligible compared to the size of group gadgets. To resolve this, we slightly modify the design of group gadgets and target nodes, ensuring that the last target node computed for a group gadget is never directly included in another group, and hence the same method does not save any I/O costs.
\end{enumerate}
We first discuss point (1) separately, since this requires a simple gadget that is independent from the rest of the construction. We then describe the modified construction and analyze its properties.

\subsection{Disabling recomputation}

In order to ensure that recomputation is not beneficial in or DAG, we use the same trick as in the zipper gadget: in front of each original source node $u$ in the DAG, we attach a chain of length $2g$, i.e. nodes $u_1, ..., u_{2g}$ such that $(u_i, u_{i+1}) \in E$ for all $i \in [2g-1]$, and $(u_{2g}, u) \in E$.

In any reasonable pebbling, this chain can be computed consecutively: first computing $u_1$, then computing $u_2$, deleting $u_1$, computing $u_3$, deleting $u_2$, and so on. This only requires $2$ red pebbles at any point, so it still remains possible to compute any group gadget: even when computing the last node in the group, we still must have $2$ free red pebbles, one to place on this last node, and one to place on the target node to compute. Furthermore, the gadget ensures that if there are no red pebbles on the chain, then recomputing $u$ results in a cost of $2g+1$ (i.e., $2g+1$ new compute steps), whereas saving it to slow memory earlier and then reloading it only has a cost of $2g$, hence recomputation is never beneficial.

More formally, any pebbling can be transformed (without increasing its cost) into a strategy that computes the nodes of such a chain consecutively, and does not execute any I/O steps on $u_1, ..., u_{2g}$. Consider the step when $u_2$ is computed, and hence there is a red pebble on both $u_1$ and $u_2$. We can move forward the computation of the whole chain (as described above, using only $2$ red pebbles) to immediately after this point, and hence compute $u$ already here. If a red pebble was kept on any of the chain nodes after this point, we can keep it on $u$ instead; otherwise, if any of $u_1, ..., u_{2g}$ was saved to slow memory or reloaded from slow memory afterwards, we can save/reload $u$ instead.

Note that the reason why we only apply the above modification to source nodes is that non-source nodes will have a very limited role in our construction, and it is easy to show that it is not worth recomputing any of these anyway.

This modification allows us to disregard the recomputation question in the rest of the analysis, since it is independent from the remaining properties of the construction. That is, we will present the construction without mentioning these gadgets, but assuming that source nodes cannot be recomputed in a reasonable solution. The only effect of this modification for our cost is that it increases the number of nodes in the DAG by a $2 \cdot g$ term for every original source node, i.e., by at most a $(2 g + 1)$ factor. Since we select $g$ to be a constant, this is only a constant factor increase in the DAG size, and hence it will not change the fact that the I/O costs in the construction are linear in $n$.

\subsection{Our construction}

Our modified DAG construction has parameters $B_0$ and $B_1$, with $B_0, B_1 \in O(1)$.

As before, each original node $v' \in V'$ will have group gadgets $S_1(v')$ and $S_2(v')$. All these $2N$ groups gadgets will essentially have a size of $B_0 + B_1 + 1$ (the first groups are technically a bit more complicated), and we set $r=B_0+B_1+2$. Every target node will have exactly $B_0+B_1+1$ in-neighbors, so the computation of any target node always requires all the available red pebbles, and hence there can be no red pebbles anywhere else in the DAG.

For each $v' \in V'$, $S_1(v')$ and $S_2(v')$ intersect in $B_0$ so-called common nodes, to motivate the consecutive visitation of the two groups. Furthermore, for each first group $S_1(v')$, we add $B_1$ nodes only present in this group, to motivate the consecutive computation of all the target nodes of $S_1(v')$ when it is visited. In $S_2(v')$, we include a target node of $S_1(u')$ for all $3$ nodes $u'$ adjacent to $v'$, and add $(B_1-2)$ further extra nodes to ensure that the group has size $B_0+B_1+1$. We select $B_1 > 3$ for our parameters; this also ensures $(B_1-2) > 0$.

As for the target nodes: each group $S_2(v')$ has a single target node. On the other hand, $S_1(v')$ has $3$ distinct targets, one for each neighbor $u'$ of $v'$ in $G'$. However, instead of having a single target node for $u'$, we now add a chain of $4$ nodes $w_0, w_1, w_2, w_3$ to our DAG to establish this relation, with $(w_0, w_1), (w_1, w_2), (w_2, w_3) \in E$. From these nodes, $w_0$ is a source of our DAG (can be considered as a ``pseudo-member'' of $S_1(v')$); $w_1$ is the target node included in $S_2(u')$; and each of $w_1$, $w_2$ and $w_3$ has an incoming edge from all nodes in $S_1(v')$. This ensures the following properties:
\begin{itemize}
 \item For computing $w_i$ for any $i \in \{1, 2, 3\}$, we still require all the red pebbles: $B_0+B_1$ pebbles on the rest of $S_1(v')$, one red pebble on $w_{i-1}$, and one red pebble on the $w_i$ being computed.
 \item We still need to visit $S_1(v')$ before $S_2(u')$, since $w_1$ is included in $S_2(u')$.
 \item If all target nodes of $S_1(v')$ are computed when visiting this group, then there is no red pebble on $w_1$ at the end of this process. This is because $w_2$ and $w_3$ has to be computed after $w_1$, and when computing $w_3$, we have no free red pebble to leave on $w_1$. In other words, in any reasonable pebbling, $w_1$ needs to be saved to slow memory while $w_2$ and $w_3$ is computed, and reloaded later. This ensures that we cannot save any cost by visiting $S_2(u')$ immediately after $S_1(v')$.
\end{itemize}

Intuitively, when visiting $S_1(v')$, the reasonable pebbling strategy is always to (i) keep $B_0+B_1$ pebbles on the first group, and use the remaining $2$ red pebbles to (ii) go over each target chain of $S_1(v')$, and (iii) compute $w_0, w_1, w_2, w_3$ in this order.

Note that except for the target nodes of second groups and $w_1, w_2, w_3$ in the target chains, all the other nodes are source nodes; each of these is replaced in the end by the chain gadget described before to discourage recomputation. On the other hand, the sink nodes in the DAG are (i) the final nodes $w_3$ in each target chain, which is $2 _{\!} \cdot _{\!} M$ nodes in total, and (ii) the $N$ target nodes of second groups. 

Altogether, we have $N _{\!} \cdot _{\!} (B_0+B_1)$ source nodes in first groups, $2 _{\!} \cdot _{\!} M = 3 _{\!} \cdot _{\!} N$ target chains of length $4$ (with a single source node), $N _{\!} \cdot _{\!} (B_1-2)$ source nodes in second groups, and $N$ target nodes for second groups. Recall that we attach a chain of length $2g$ to all sources. With this, the total number of nodes is
\begin{gather*}
 n = N  _{\!} \cdot _{\!} (B_0+B_1) \cdot (2g+1) + 3 _{\!} \cdot _{\!} N _{\!} \cdot _{\!} (2g + 4) \, +\,  N \cdot (B_1-2) \cdot (2g+1) + N =  \\
 = N \cdot (2gB_0 + B_0 + 4gB_1 + 2B_1 + 2g + 11) \, .
\end{gather*}
Since we have $B_0, B_1, g \in O(1)$, this implies $n \in O(N)$.

\subsection{Analysis and reduction}

We begin the analysis with some simple observations:
\begin{itemize}
 \item As discussed before, the computation of all target nodes and target chains require all the red pebbles.
 \item As also discussed, a reasonable pebbling strategy never uses recomputation. As such, the total cost of the computation steps is exactly $n$.
 \item The sink nodes play no significant role: the computation of the last sink node will require all previous sinks to be already saved to slow memory anyway, so we might as well save each sink node (except the last one to be computed) to slow memory as soon as it is computed, and then delete the red pebble from it immediately. These operations will add an inherent I/O cost of $(2 \cdot M + N -1) \cdot g$ to any pebbling.
 \item The source nodes in second groups also play no significant role: they can be computed immediately before the target node of the second group is computed, and then deleted immediately afterwards.
\end{itemize}

As the next step, we observe that a reasonable pebbling strategy computes all the target chains of any first group $S_1(v')$ consecutively. Indeed, if we compute any other target node (of a first or second group) between computing two targets of $S_1(v')$, then this implies that the $B_1$ nodes that only appear in $S_1(v')$ need to be all saved to slow memory and then loaded back, at a cost of $2 \cdot B_1 \cdot g$. On the other hand, among the target nodes of $S_1(v')$, there are only $3$ nodes $w_1$ that are required for a future computation, and each of these is only needed a single time in the future (since their out-degree is $1$). As such, if we have $B_1 > 3$, then transferring the $B_1$ nodes in $S_1(v')$ to/from slow memory is actually more expensive than computing all target nodes of $S_1(v')$ consecutively, saving all the corresponding $w_1$ to slow memory, and loading them later at a total cost of at most $2 \cdot 3 \cdot g$. Hence if a pebbling strategy does not compute all target chains of $S_1(v')$ consecutively, then we can modify this to a strategy where these computations are consecutive, and it has strictly lower cost.

Note that there is a natural order to computing the target nodes of any first group: we consider the target chains in any desired order, and compute the chain from $w_0$ to $w_3$ in order, saving $w_1$ to slow memory (and also $w_3$ since it is a sink node), and deleting $w_0$ and $w_2$ when not needed anymore. This order does not induce any I/O costs besides the unavoidable ones mentioned so far. Note that the last target node to compute is a $w_3$ in one of the target chains, so as mentioned before, there can be no red pebbles on any of the $w_1$ when the computation of all target chains is finished.

This shows that every group gadget is indeed ``visited'' only once (when its targets are computed), and hence any pebbling strategy is essentially characterized by the order of visiting the groups. Besides the fixed, unavoidable I/O costs mentioned so far (total of $(2 \cdot M + N -1) \cdot g$ for saving sink nodes, and total of $2M \cdot 2 \cdot g$ for saving and loading the nodes $w_1$ in target chains), the only I/O operations in such a pebbling come from the fact that if $S_1(v')$ and $S_2(v')$ are not visited consecutively, then their $B_0$ common nodes need to be saved to and later loaded from slow memory, at a cost of $2 \cdot g \cdot  B_0$ for each such $v'$.

As in the original construction, the nodes $v'$ for which $S_1(v')$ and $S_2(v')$ are visited consecutively must form an independent set in $G'$, since for all $(u', v') \in E'$, $S_2(u')$ must be preceded by $S_1(v')$, and $S_2(v')$ must be preceded by $S_1(u')$. In other words, the nodes $v'$ where the groups are non-consecutive must form a vertex cover in $G'$. Let us denote the size of this vertex cover in $G'$ by $VC$, and the size of the smallest vertex cover in $G'$ by $\texttt{OPT}_{VC}$. Then the corresponding I/O cost in the pebbling strategy is $2  \cdot g \cdot B_0 \cdot VC$, and the total cost of the pebbling is
\[ n + (6 _{\!} \cdot _{\!} M + N -1) _{\!} \cdot _{\!} g + 2 _{\!} \cdot _{\!} g _{\!} \cdot B_{0\!} \cdot _{\!} VC \, . \]

This shows that finding an optimal pebbling is clearly equivalent to finding a minimum vertex cover in $G'$. For convenience, we use $n'$ to denote $n' := n + (6 \cdot M + N -1) \cdot g$, and $\alpha:=2 _{\!} \cdot _{\!} g _{\!} \cdot _{\!} B_{0}$. Recall that $n=O(N)$ so $n'=O(N)$, while $\alpha =O(1)$.

The above observations provide a natural L-reduction between the two problems. In particular, with $\texttt{OPT}_{VC} \leq N$, the cost of the optimal pebbling is at most $n'+\alpha _{\!} \cdot _{\!} N = O(N)$, whereas $\texttt{OPT}_{VC}\geq \frac{N}{2}$, so the two optimum values are indeed only a constant factor apart. Furthermore, the optimal pebbling strategy has a cost of $n' + \alpha _{\!} \cdot _{\!} \texttt{OPT}_{VC}$. Given any pebbling strategy (of cost $\texttt{SOL}$) in our construction, we can transform it into a reasonable schedule outlined above: we replace any recomputations by I/O steps, we compute all target nodes of a group when a group is first visited (in their natural order), and we remove any further steps that are unnecessary. According to our observations above, this transformation can only decrease the cost of our pebbling. We then consider the vertex cover (of size $VC$) defined by the group-pairs that are visited non-consecutively; if this has size $VC$, then our pebbling incurs a total cost of n' + $\alpha _{\!} \cdot _{\!} VC$. As such, we get that
\[ \texttt{SOL} - (n' + \alpha _{\!} \cdot _{\!} \texttt{OPT}_{VC}) \geq (n' + \alpha _{\!} \cdot _{\!} VC) - (n' + \alpha _{\!} \cdot _{\!} \texttt{OPT}_{VC}) = \alpha _{\!} \cdot _{\!} (VC - \texttt{OPT}_{VC}) \, , \]
which completes the L-reduction. Since Vertex-Cover is APX-hard, this shows that MPP (or SPP with computation costs) is also APX-hard, allowing no PTAS.

\section{Inapproximability of I/O costs: proof of Theorem \ref{th:inapprox}} \label{app:inapprox}

The main components of the proof of Theorem \ref{th:inapprox} have already been outlined in Section \ref{sec:inapx}. in this section, we discuss the technical details. Recall that all pebbles in our proof are red pebbles, since we cannot use any I/O steps.

We again point out that the main technical contribution of this theorem shows that it is NP-hard to decide whether we can a pebble a DAG using only (a given number of) red pebbles. This is very similar to the long known claim that the black pebble game is NP-hard \cite{blackNPhard}; as such, in some sense, this result is not entirely novel. However, \cite{blackNPhard} considers a slightly different version of pebbling where upon computing a node $v$, you also have an alternative option to \emph{slide} a pebble to $v$ from one of its in-neighbors. While it might be possible also to adapt this proof of \cite{blackNPhard} to the non-sliding case with some further gadgets and analysis, we present an entirely different NP-hardness proof instead. We believe that our reduction is somewhat simpler than that of \cite{blackNPhard} (which is based on a SAT problem variant), and the concept of level gadgets in our proof might be of independent interest, possibly being applicable in future result on pebbling problems. Finally, to our knowledge, it has not been explicitly observed in previous works that such an NP-hardness result leads to the given additive/multiplicative inapproximability bounds for SPP/MPP, even though this only requires some simple technical steps.

\subsection{Level gadgets}

We first formalize the properties of level gadgets. Given two consecutive levels, we denote the nodes on the earlier level by $u_1, \ldots,u_{\ell}$, and those in the later level by $v_1, \ldots, v_{\ell'}$. We always have $(u_i, u_{i+1}) \in E$ for $i \in [\ell-1]$ and $(v_i, v_{i+1}) \in E$ for $i \in [\ell'-1]$. We also have $(u_i, v_i) \in E$ for all $i \in [1, min(\ell, \ell')]$, and we have $(u_{\ell}, v_1) \in E$. Finally, if $\ell > \ell'$, then we have $(u_i, v_{\ell'}) \in E$ for all $(\ell'+1) \leq i \leq \ell$. Also, recall that for any level $u_1, \ldots,u_{\ell}$, any external node (i.e., which is not part of the preceding or subsequent level of the same tower) must have a directed edge either to none or to all of the nodes in $u_1, \ldots,u_{\ell}$, and any external node has an incoming edge either from none or from all of the nodes in $u_1, \ldots,u_{\ell}$.

Assume that all the nodes $u_1, \ldots, u_{\ell}$ have a pebble on them at a given point in time. Then with the use of a single extra pebble, we can proceed to the next level in such a gadget. More specifically, if $\ell=\ell'$, then we can first compute $v_1$, then delete $u_1$, then compute $v_2$, then delete $u_2$, and so, finally compute $v_{\ell'}$ and delete $u_{\ell}$, never having more than $(\ell+1)$ red pebbles on the gadget in the process. In case if $\ell' > \ell$, we can continue this process in the end by computing $v_{\ell+1}, \ldots, v_{\ell'}$ in this order; this again does not require more than $\ell'$ red pebbles at any point. If $\ell' < \ell$, then after computing $v_{\ell'}$, we can delete $u_{\ell'+1}, \ldots, u_{\ell}$ in any order, again not having more than $(\ell+1)$ red pebbles at any point.

Moreover, this is indeed the reasonable way to proceed to the next level in any pebbling. Formally, we consider two different cases. On the one hand, if all of the nodes $u_1, \ldots, u_{\ell}$ have a red pebble in the step when $v_{\ell'}$ is computed, then we can (i) delay the computation of each of the nodes $v_1, \ldots, v_{\ell'-1}$ to immediately before this step, in the same order as before, and (ii) consider the first time when one of the $u_i$ is deleted, and bring forward the deletion of the rest of the nodes $u_1, \ldots, u_{\ell}$ to immediately after this step. This can only increase the number of red pebbles available at any point, and since any external out-neighbors are dependent on the entire level, all other computation steps remain valid.

On the other hand, if one of the nodes $u_1, \ldots, u_{\ell}$ has already been deleted by the step when $v_{\ell'}$ is computed, then we move the entire sequence of computing $v_1$, deleting $u_1$, computing $v_2$, deleting $u_2$, etc., computing $v_{\ell'-1}$ and deleting $u_{\ell'-1}$ to immediately before the computation of $v_{\ell'}$, and the deletion of $u_{\ell'}, \ldots, u_{\ell}$ (if these nodes exist at all) to immediately after computing $v_{\ell'}$. The computation of any external out-neighbor of the levels again remains valid. For $u_1, \ldots, u_{\ell}$, the deletion of these nodes has only been delayed up to the time of computing $v_{\ell'}$, and after this point, no out-neighbor of the level was computed anyway, since we assumed that one of $u_1, \ldots, u_{\ell}$ has already been deleted by this point. For $v_1, \ldots, v_{\ell'}$, computing any out-neighbor of this level must happen after the computation of $v_{\ell'}$ anyway. Furthermore, the number of available pebbles in any step can never decrease, because for each $1 _{\!} \leq _{\!} i _{\!} \leq _{\!} \min (\ell, \ell')$, the computation of $v_i$ is earlier than the deletion of $u_i$, both in the original and the new pebbling sequence.

This implies that we can always reorganize any valid pebbling (without decreasing the number of available pebbles at any point) into a sequence where for each consecutive pair of levels, we either proceed from the earlier to the later level in a subsequent set of steps, or we first compute the later level (in subsequent steps), and then later delete the earlier level (in subsequent steps). This allows for a significantly simpler analysis, where we can consider the process of proceeding to the next level (with the preceding level either deleted or not) as single logical step that happens at a given point in the pebbling.

\subsection{Analysis of the construction}

We continue by defining the level sizes for each of the level towers in our DAG, i.e., assigning concrete values to our parameters. Assume that our input graph in the clique problem has $N$ nodes and $M$ edges.

Let us select $a=7$, $b_2=9$ and $c_2=3$ for our constant values. This implies $c_2 < a$, $b_2 > a$, and also ensures that $(a-c_2)>(b_2-a)$, which will slightly simplify the analysis. For convenience, we select $c_1=(N+M) \cdot b_2$ and $b_1 = 2 \cdot c_1$. Due to $q \leq N$ in the original clique problem, this ensures $c_1 > q \cdot b_2$, as well as $b_1 > c_1 + q \cdot (b_2-a) + 1$ and $b_1>(N+M) \cdot a + 1$, as required for the analysis. Finally, we set $r$ high enough to ensure that $r > 1 + b_1 - N \cdot b_2 - M \cdot a$ for the last levels of the main chain, e.g., $r=3 \cdot (N+M) \cdot b_2$. The size of the resulting construction is still in upper bounded by $O(1) \cdot r + N \cdot (b_1 + O(1)) + M \cdot (c_1 + O(1))$, and hence $n=O((N+M)^2)$ is still polynomial in $N$ and $M$.

The main tower will consist of $9$ levels, of the following size:
\begin{itemize}
 \item 1st and 2nd level: $r - 1 - (N+M) \cdot a$
 \item 3rd: $r - 1 - (N+M-q) \cdot a - b_1 - (q-1) \cdot b_2$
 \item 4th and 5th: $r - 1 - (N+M-q) \cdot a - q \cdot b_2$
 \item 6th: $r - 1 - (N+M-q -1) \cdot a - q \cdot b_2 - c_1$
 \item 7th: $r - 1 - (N+M-q -{q \choose 2}) \cdot a - q \cdot b_2 - {q \choose 2} \cdot c_2$
 \item 8th and 9th: $r - 1 - b_1 - N \cdot b_2 - M \cdot a$
\end{itemize}

As mentioned, when $v'$ is incident to $e'$ in $G'$, we draw edges from (each node of) the last level of the tower of $v'$ to (each node of) the third level of the tower of $e'$. Finally, we draw some further edges between the main tower and the gadgets representing $G'$:
\begin{itemize}
 \item from the 1st level of the main tower to the 1st level of every node and edge tower,
 \item from the 1st level of every node and edge tower to the 2nd level of the main tower,
 \item from the 4th level of the main tower to the 2nd level of every edge tower,
 \item from the 2nd level of every edge tower to the 5th level of the main tower,
 \item from the last level of every node and edge tower to the last level of the main tower.
\end{itemize}

The analysis of the initial part of the construction is relatively simple. In the beginning, our only option is to pebble the 1st level of the main tower. This can only be followed by the 1st level of every edge and node gadget. This then enables the 2nd level of the main tower; we can proceed to this, and delete the 1st level of the main tower. There are no alternatives up to this point: the 2nd level of the edge gadgets cannot yet be computed since they have predecessors in the 4th level of the main tower, and the 2nd level of node gadgets cannot be computed because the first two levels of the main tower only leave $(N+M) \cdot a +1$ pebbles available besides the main tower, and our choice of $b_1$ ensures $b_1 > (N+M) \cdot a + 1$.

After this, we can immediately proceed to the 3rd level of the main tower; since this is a smaller level, this gives us $(b_1-a)+(q-1) \cdot (b_2-a) + 1$ further pebbles to use. The next levels in the edge gadgets still have unpebbled predecessors, so this only leaves two options: we can either proceed to the next level in some of the node gadgets, or proceed to the 4th level in the main tower. Since $(b_1-a)$ is much larger than $(q-1) \cdot (b_2-a)$ (and proceeding to the 4th level of the main tower only reduces the number of available red pebbles), we can only have at most one of the node gadget 2nd-levels pebbled at once. That is, if we want to proceed in more than one of the node gadgets, we need to do this sequentially: proceed to the 2nd level in one gadget, then proceed to the 3rd level in the same node gadget, and only then proceed to the 2nd level in other node gadgets. As such, at any point before proceeding to the 4th level of the main tower, we will have (possibly) several node gadgets already at the 3rd level, and at most one node gadget at the 2nd level. Furthermore, with $(b_1-a)+(q-1) \cdot (b_2-a) + 1$ available pebbles, we can only move to the 3rd level of at most $q$ node gadgets: when proceeding in the $(q+1)$-th node gadget, this would already require $q \cdot (b_2-a)$ extra pebbles in the 3rd levels of the already proceeded node gadgets, and $(b_1-a)$ pebbles to proceed to the 2nd level of the $(q+1)$-th node gadget.

Finally, after proceeding to the 4th level of the main tower, we only have $(N+M-q) \cdot a + q \cdot b_2 + 1$ free red pebbles to use outside of the main tower, which is smaller than $(b_1-a)$; as such, all node gadgets have to be already at the 3rd level by this point, or still at the 1st level. As such, at the point when we proceed to the 4th level of the main tower, the state of the DAG is easy to characterize: we have at most $q$ node gadgets at the 3rd level, and all the remaining node gadgets (plus all edge gadgets) still at the 1st level.

The next steps are once again simple to analyze. The 4th level in the main tower allows us to proceed to the 2nd level of each edge gadget, which then allows us to proceed to the 5th level in the main tower. There are once again no alternatives to this, since we only have $(N+M-q) \cdot a + q \cdot b_2 + 1$ pebbles outside of the main tower, and at least $(N+M) \cdot a$ of these are used in the node and edge gadgets, leaving at most $q \cdot (b_2-a) + 1$ unused (depending on how many node gadgets have been proceeded to the 3rd level earlier). With both $(b_1-a)$ and $(c_1-a)$ chosen larger than $q \cdot (b_2-a)$, this implies we cannot proceed to the next level in any gadget (i.e., 2nd level in further node gadgets, or 3rd level in an edge gadget) at this point.

From here we can proceed to the 6th level of the main tower (this is only separate from the 5th level for ease of presentation, since they have different roles). This frees up $(c_1-a)$ new pebbles compared to the 5th level, which allows us to proceed any number of edge gadgets first to the 3rd, and then the 4th level, sequentially. Recall that the 4th level of edge gadgets is smaller than the 2nd, so we are indeed motivated to proceed to the 4th level of as many edge gadgets as possible; however, we can only do this for edge gadgets where both of the incident nodes' gadgets are already at the 3rd level, since these are predecessors of the 3rd level of each edge gadgets. Note that the number of free pebbles at this point is at most $(c_1-a) + q \cdot (b_2 - a) + 1$ (if all node gadgets are still at the 1st level), and $(b_1-a)$ is significantly larger than this (since $b_1-c_1=c_1>N \cdot b_2$), so it is not possible to proceed to the 2nd level in any node gadgets at this point.

At some point, we need to proceed to the 7th level of the main tower. This 7th level is the largest one, only allowing
\[ (N+M) \cdot a + q \cdot (b_2-a) - {q \choose 2} \cdot (a-c_2) + 1 \]
pebbles outside of the main tower. This quantity is again larger than $c_1$ by design, so by the time we reach this 7th level, all the edge gadgets must still be at the 2nd level, or already proceeded to the 4th level. However, even in such a setting, we need a very specific case to be able to proceed to this 7th level at all. In particular, let $q_0$ denote the number of node gadgets that are on the 3rd level already (recall that $q_0 \leq q$). Altogether, the node gadgets occupy $N \cdot a + q_0 \cdot (b_2-a)$ red pebbles at this point. The edge gadgets occupied $M \cdot a$ pebbles before the 6th level of the main chain; this is now reduced by $(a-c_2)$ times the number of edge gadgets that already proceeded to the 4th level. Note that with $q_0$ node gadgets on the 3rd level, we can have at most ${q_0 \choose 2}$ edge gadgets that could proceed to the 4th level.

Furthermore, recall that $(a-c_2)>(b_2-a)$. This implies that if $q_0 < q$, then we have
\[ q_0 \cdot (b_2-a) - {q_0 \choose 2} \cdot (a-c_2) >  q \cdot (b_2-a) - {q \choose 2} \cdot (a-c_2) \, ; \]
this follows by first replacing $(b_2-a)$ by $(a-c_2)$ on both sides (we can do this since $q_0<q$), then dividing all terms by $(a-c_2)$, and getting $q_0 - {q_0 \choose 2} > q - {q \choose 2}$, which holds for $q_0 < q$. That is, if $q_0 < q$, then even if we could proceed to the 4th level in as much as ${q_0 \choose 2}$ edge gadgets, the number of pebbles required outside of the main tower will still be larger than the number of available pebbles, and hence we cannot proceed to the 7th level of the main tower. Even if $q=q_0$, it is only possible to free up enough pebbles if the number of edge gadgets on the 4th level is exactly ${q \choose 2}$; in this case, the number of pebbles used outside of the main tower will be exactly $(N+M) \cdot a + q \cdot (b_2-a) - {q \choose 2} \cdot (a-c_2)$. However, this can only happen if there are indeed ${q \choose 2}$ distinct edges in the original graph that are induced by our chosen $q$ nodes, i.e., if the set of nodes form a clique.

After this point, it is straightforward to finish a pebbling strategy. We can proceed to the 8th level of the main tower immediately; this allows for having $b_1 + N \cdot b_2  + M \cdot a + 1$ pebbles besides the main tower. This allows us to proceed to the 2nd and then 3rd level of each of the node gadgets sequentially, and then we can proceed to the 3rd and then 4th level of each of the edge gadgets sequentially. Finally, this allows us to proceed to the 9th level of the main tower.

This shows that it is possible to pebble the DAG without I/O operations if and only if there exists a clique of size $q$ in the original input graph.

\subsection{Multiple copies}

Note that the reduction so far produces a DAG where it is NP-hard to decide whether the optimal pebbling cost is $0$ or at least $1$.

In order to show a stronger result, we first add a single source and sink node to the DAG for simplicity; the source node becomes an in-neighbor of all nodes in the first level of the main tower, and the sink node becomes an out-neighbor of all nodes in the last level of the main tower. So far this does not affect any of the properties of the construction.

Then we can take any parameter $T$, and consider $T$ subsequent copies of the construction, where we always merge the sink node of the $i$-th copy and the source node of the $(i+1)$-th copy into a single node. Once again, this does not affect the basic behavior of the construction: when the merged $i$-th sink / $(i+1)$-th source is pebbled, we can immediately delete all remaining red pebbles from the $i$-th (and earlier) copies, since these are not needed in the future anymore. As such, the copies of the construction are indeed independent, and the optimal strategy is to apply the same pebbling in each of these, one after another. In particular, this also implies that if there exists a clique of size $q$ in $G'$, then the optimal pebbling in the construction with $T$ copies still has a cost of $0$, but if such a clique does not exist, then each separate copy will incur a cost of at least $1$, and hence the optimal pebbling cost is altogether at least $T$.

Recall that the size of our original construction was $O( (N+M)^2) \leq O(N^4)$. As such, given any constant $\varepsilon>0$, we can select a large enough constant $t$ that ensures $\frac{4}{t+4} < \varepsilon$, and set $T=N^t$. With that, the size of the construction is $n=\Theta(N^{t+4})$, so still polynomial in $N$, and we get $T > n^{1-\varepsilon}$ asymptotically, since $1-\varepsilon < \frac{t}{t+4}$. As such, this provides us a variant of the construction where it is already NP-hard to decide whether the optimum cost is $0$, or at least $n^{1-\varepsilon}$. Hence this also shows the inapproximability to an additive $n^{1-\varepsilon}$ term for any constant $\varepsilon>0$.

\subsection{Other pebbling problem variants}

Recall that there are different variants of the pebbling problem, where instead of source nodes being freely computable, they always need to be loaded from slow memory, and/or where instead of having a pebble of any color on sink nodes, we are explicitly required to have a blue pebble on each sink node in the end. Both of these modifications add a fixed, unavoidable cost of $1$ to our construction in SPP. As such, in these problem variants, our construction shows a slightly different claim, namely that it is already NP-hard to decide whether the optimum is as low as $1$ or $2$, or at least $n^{1-\varepsilon}$; this results in an inapproximability to any multiplicative factor or additive term of $n^{1-\varepsilon}$.

Similarly, the proof easily carries over to the SPP variant where the I/O rules replace the original pebble: these rules still cannot be used throughout the construction, as they incur cost.

\subsection{Discussion of Corollary \ref{cor:surplus_cost}}

Note that given Theorem \ref{th:inapprox}, Corollary \ref{cor:surplus_cost} follows easily. Consider MPP for the simplest case of $k=1$ processor; in this case, the cost of the computation steps is at least $n$ (possibly more in case of recomputations), and for a pebbling of cost $C$, the surplus cost is $C-n$. If our construction can be pebbled in with a cost of $0$ in one-shot SPP, then the same pebbling has cost $n$ in MPP, and thus the surplus cost is $0$. On the other hand, if any pebbling requires an I/O step or a recomputation step, then the resulting cost is at least $n+g$ or $n+1$, respectively, corresponding to a surplus cost of $g$ or $1$. (In fact, due to the multiple copies of the construction, the surplus cost is again at least $n^{1-\varepsilon} _{\!} \cdot _{\!} g$ or $n^{1-\varepsilon}$, as before). This means that it is NP-hard to approximate the surplus cost in MPP to any finite multiplicative factor (or any additive $n^{1-\varepsilon}$ term).

\end{document}